\tiny\color{gray},
\newtheorem{theorem}{Theorem}
\newtheorem{lemma}{Lemma}
\newtheorem{claim}{Claim}[lemma]
\newtheorem{subclaim}{Claim}[claim]
\newtheorem{subsubclaim}{Claim}[subclaim]
\newtheorem{corollary}{Corollary}[theorem]
\newtheorem{definition}{Definition}[theorem]
\newtheorem{problem}{Problem}
\newcommand{\Reminder}[1]{

\vspace{0.5em}
\noindent\textbf{Reminder of~\autoref{#1}.} \textit{\Paste{#1}}
\vspace{0.5em}

% \end{rreminder}
}
\definecolor{DarkGreen}{RGB}{1,50,32}
\begin{document}
\ActivateWarningFilters[pdftoc]
% General
\newcommand{\defeq}{:=}
\newcommand{\eps}{\varepsilon}
% Pseudocode
\newcommand{\ReturnCode}{\textbf{return}}

% Comments
\newcommand{\sidford}[1]{{\color{ForestGreen} \textbf{Aaron}: #1}} 
\newcommand{\liam}[1]{{\color{red} \textbf{Liam}: #1}} 
\newcommand{\avi}[1]{{\color{purple} \textbf{Avi}: #1}} 
\newcommand{\vvw}[1]{{\color{Plum} \textbf{Virgi}: #1}} 
\newcommand{\uri}[1]{{\color{red} \textbf{Uri}: #1}} 

\newcommand{\blue}[1]{{\color{blue}#1}}

% Pseudocode
\newcommand{\codestyle}[1]{\texttt{#1}}
\newcommand{\Initialize}{\mbox{\codestyle{Initialize}}}
\newcommand{\Cycle}{\mbox{\codestyle{Cycle}}}
\newcommand{\RT}{\mbox{\codestyle{RT}}}
\renewcommand{\L}{\mbox{\codestyle{L}}}
\newcommand{\CycleOdd}{\codestyle{CycleOdd}}
\newcommand{\BallOrCycle}{\codestyle{BallOrCycle}}
\newcommand{\ClusterOrCycleBounded}{\codestyle{ClusterOrCycleBounded}}
\newcommand{\ClusterOrCycle}{\codestyle{ClusterOrCycle}}
\newcommand{\SimpleCycle}{\codestyle{SimpleCycle}}
\newcommand{\Next}{\codestyle{Next}}
\newcommand{\Sample}{\codestyle{Sample}}
\newcommand{\Dijkstra}{\codestyle{Dijkstra}}
\newcommand{\Preprocess}{\codestyle{Preprocess}}
\newcommand{\HashTable}{\codestyle{HashTable}}
\newcommand{\Heap}{\codestyle{Heap}}
\newcommand{\RelaxNext}{\codestyle{RelaxNext}}

\newcommand{\PreprocessGraph}{\codestyle{Initialize}}
\newcommand{\Route}{\codestyle{Route}}
\newcommand{\TreeRoute}{\codestyle{TreeRoute}}
\newcommand{\N}{\mathbb{N}}
\newcommand{\MinCycle}{\codestyle{MinCycle}}
\newcommand{\Query}{\codestyle{Query}}
\newcommand{\Ball}{\codestyle{Ball}}
\newcommand{\DistanceOracle}{\codestyle{TZ-DistanceOracle}}
\newcommand{\SparseOrCycle}{\codestyle{SparseOrCycle}}
\newcommand{\Intersection}{\codestyle{Intersection}}
\newcommand{\CycleAdditive}{\codestyle{CycleAdditive}}
\newcommand{\GenerateSi}{\codestyle{ComputeS}}

\newcommand{\codeNull}{\codestyle{null}}
\newcommand{\codeYes}{\codestyle{Yes}}
\newcommand{\codeNo}{\codestyle{No}}
\newcommand{\codeAnd}{~ \mathrm{and} ~}
\newcommand{\codeOr}{~ \mathrm{or} ~}
\newcommand{\wt}{\ell}
\newcommand{\Cl}{CL}
\newcommand{\CL}{CL}
\newcommand{\cl}{c\ell}

\newcommand{\EQ}{\;=\;}
\newcommand{\GE}{\;\ge\;}
\newcommand{\Ot}{\tilde{O}}
\newcommand{\stactri}{\stackrel\triangle}

\newcommand{\EE}{\mathbb{E}}
\newcommand{\RR}{\mathbb{R}}

\DeclarePairedDelimiter{\ceil}{\lceil}{\rceil}
\DeclarePairedDelimiter{\floor}{\lfloor}{\rfloor}
\DeclarePairedDelimiter{\pair}{\langle}{\rangle}

\author{Avi Kadria\thanks{Department of Computer Science, Bar Ilan University, Ramat Gan 5290002, Israel. E-mail {\tt avi.kadria3@gmail.com}.} \and Liam Roditty\thanks{Department of Computer Science, Bar Ilan University, Ramat Gan 5290002, Israel. E-mail {\tt liam.roditty@biu.ac.il}. Supported in part by BSF grants 2016365 and 2020356.}}

% \begin{document}
\title{Compact routing schemes in undirected and directed graphs}
\pagestyle{empty}

\maketitle

\begin{abstract}

In this paper, we study the problem of compact routing schemes in weighted undirected and directed graphs. 

\textit{For weighted undirected graphs}, more than a decade ago, Chechik [PODC'13] presented a $\approx3.68k$-stretch compact routing scheme that uses $\Ot(n^{1/k}\log{D})$ local storage, where $D$ is the normalized diameter, for every $k>1$. 
We present a $\approx 2.64k$-stretch compact routing scheme that uses $\Ot(n^{1/k})$ local storage \textit{on average} in each vertex. 
This is the first compact routing scheme that uses total local storage of $\Ot(n^{1+1/k})$ while achieving a $c \cdot k$ stretch, for a constant $c < 3$.

In real-world network protocols, messages are usually transmitted as part of a communication session between two parties. Therefore, 
more than two decades ago, Thorup and Zwick [SPAA'01] 
considered compact routing schemes that establish a communication session using a handshake. 
In their handshake-based compact routing scheme, the handshake is routed along a $(4k-5)$-stretch path, and the rest of the communication session is routed along an optimal $(2k-1)$-stretch path. It is straightforward to improve the $(4k-5)$-stretch of the handshake to $\approx3.68k$-stretch using the compact routing scheme of Chechik [PODC'13]. 
We improve the handshake stretch to the optimal $(2k-1)$, by borrowing the concept of roundtrip routing from directed graphs to \textit{undirected} graphs.

\textit{For weighted directed graphs}, more than two decades ago, Roditty, Thorup, and Zwick [SODA'02 and TALG'08] presented a $(4k+\eps)$-stretch compact roundtrip routing scheme that uses $\Ot(n^{1/k})$ local storage for every $k\ge 3$. For $k=3$, this gives a $(12+\eps)$-roundtrip stretch using $
\Ot(n^{1/3})$ local storage. We improve the stretch by developing a $7$-roundtrip stretch routing scheme with $\Ot(n^{1/3})$ local storage.
In addition, we consider graphs with bounded hop diameter and present an optimal $(2k-1)$-roundtrip stretch routing scheme that uses $\Ot(D_{HOP}\cdot n^{1/k})$, where $D_{HOP}$ is the hop diameter of the graph.
\end{abstract}
\thispagestyle{empty}
\clearpage

\pagenumbering{arabic}
\setcounter{page}{1}
\section{Introduction}
Routing is a fundamental task in computer networks. A \emph{routing scheme} is a mechanism designed to deliver messages efficiently from a source vertex to a destination vertex within the network.
In this paper, we study both undirected and directed weighted graphs, aiming to route along short paths.

More specifically, a routing scheme is composed of a preprocessing phase and a routing phase.
In the preprocessing phase, the entire graph is accessible, allowing the preprocessing algorithm to compute a routing table and a label for each vertex\footnote{In this paper, we study labeled routing schemes, where the preprocessing algorithm can assign labels to the vertices. When the vertex labels are fixed and cannot be changed, the routing scheme is called an name-independent routing scheme (see, for example,~\cite{DBLP:conf/spaa/AbrahamGMNT04, DBLP:conf/podc/KonjevodRX06, DBLP:journals/ipl/Laing07, DBLP:journals/talg/AbrahamGMNT08, DBLP:conf/wdag/GavoilleGHI13})}, which is then stored in the local storage of each vertex. 
In the routing phase, the routing algorithm at each vertex on the routing path can only access the local storage of the vertex. 
The routing algorithm  
gets as input a message, a destination label, and possibly a header, and decides which of the vertex neighbors is the next vertex on the routing path.
The routing continues until the message reaches the destination.

A \textit{compact} routing scheme is a routing scheme that uses $o(n)$ space in the local storage on average at each vertex, where 
$n$ is the number of vertices in the graph. 
Let $u$ be a source vertex and let $v$ be a destination vertex. 
We denote by $\hat{d}(u,v)$ the length of the path used by the routing algorithm to route a message from $u$ to $v$.
The \textit{stretch} of the routing scheme is defined as $\max_{u,v\in V}(\frac{\hat{d}(u,v)}{d(u,v)})$, where $d(u,v)$ is the distance from $u$ to $v$. The \textit{roundtrip} stretch is defined as $\max_{u,v\in V}(\frac{\hat{d}(u,v)+\hat{d}(v,u)}{d(u,v)+d(v,u)})$.

The design of efficient compact routing schemes in undirected graphs has been a well-studied subject in the last few decades, see for example~\cite{DBLP:journals/jacm/PelegU89,DBLP:journals/jal/AwerbuchBLP90,DBLP:journals/siamdm/AwerbuchP92,DBLP:journals/jal/Cowen01,DBLP:journals/jal/EilamGP03,DBLP:conf/spaa/ThorupZ01, DBLP:conf/podc/RodittyT15, DBLP:conf/podc/Chechik13}. In~\autoref{tab:compact-undirected} we summarize the previous results.
% Peleg and Upfal~\cite{DBLP:journals/jacm/PelegU89} considered unweighted undirected graphs, and obtained a $O(k)$-stretch with $\Ot(n^{1/k})$ average storage.\footnote{$\Ot$ omits poly-logarithmic factors.}
% Awerbuch, Bar{-}Noy, Linial, and Peleg~\cite{DBLP:journals/jal/AwerbuchBLP90} considered weighted undirected graphs and obtained a $(2^k-1)$-stretch with $\Ot(n^{1/k})$ average storage. This was later improved by Awerbuch and Peleg~\cite{DBLP:journals/siamdm/AwerbuchP92} that obtained an $O(k^2)$-stretch with $\Ot(n^{1/k})$ worst-case local storage.
% A $3$-stretch  scheme that uses local routing tables of size $\Ot(n^{2/3})$ was obtained by Cowen~\cite{DBLP:journals/jal/Cowen01}, and a $5$-stretch scheme that uses local routing tables of size
% $\Ot(n^{1/2})$ was obtained by Eilam et al.~\cite{DBLP:journals/jal/EilamGP03}.
% Thorup and Zwick~\cite{DBLP:conf/spaa/ThorupZ01} obtained a $3$-stretch with $\Ot(n^{1/2})$ local storage and for every $k\ge 3$ a $(4k-5)$-stretch routing scheme that uses $\Ot(n^{1/k})$ local storage. Roditty and Tov~\cite{DBLP:conf/podc/RodittyT15} improved this result and obtained a $(4k-7+\eps)$-stretch routing scheme that uses $\Ot(n^{1/k})$ storage.
% Chechik~\cite{DBLP:conf/podc/Chechik13} improved the stretch factor for large values of $k$ and obtained an $\approx3.68k$-stretch routing scheme that uses $\Ot(n^{1/k})$ local storage. 

From the Erd\H{o}s girth conjecture, it follows that every routing scheme with stretch $<2k+1$ must use a total storage of $\Omega(n^{1+1/k})$ bits. 
The approximate distance oracle data structure of Thorup and Zwick~\cite{DBLP:journals/jacm/ThorupZ05}, which is implemented in the centralized model, where all the information is available upon a distance query to the data structure, has an optimal $(2k-1)$-stretch with $\Ot(n^{1+1/k})$ total storage.
In light of the gap between routing schemes and approximate distance oracles, the following problem is natural.
\begin{problem}\label{Pr-0}
    For every $k\ge 2$, given a weighted undirected graph, what is the best stretch of a routing scheme that uses $\Ot(n^{1+1/k})$ total storage?
\end{problem}
The $\approx3.68k$-stretch compact routing scheme of Chechik~\cite{DBLP:conf/podc/Chechik13}, from more than a decade ago, is the current best stretch with $\Ot(n^{1/k})$ worst-case local storage.
In this paper, we improve the stretch to $\approx 2.64k$ by allowing an average local storage of $\Ot(n^{1/k})$, as presented in the following theorem.
\begin{theorem} \label{T-Compact-Amortized-3k} \Copy{T-Compact-Amortized-3k} {
    Let $G=(V, E, w)$ be a weighted undirected graph. For every $k\ge 3$, there is an $\approx 2.64k$-stretch compact routing scheme that uses local routing tables of an average size of $\tilde{O}(n^{1/k})$, vertex labels of size $\tilde{O}(k)$ and packet headers of size $\tilde{O}(k)$.
}
\end{theorem}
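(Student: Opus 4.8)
The plan is to realize the Thorup--Zwick distance-oracle hierarchy as a routing scheme whose storage is charged to bunches rather than bounded vertex-by-vertex, and then to route along a carefully chosen sequence of shortest-path trees so as to partially compensate for the absence of a handshake. First I would sample a nested family $V = A_0 \supseteq A_1 \supseteq \dots \supseteq A_k = \emptyset$ in which $A_{i+1}$ retains each element of $A_i$ independently with probability $n^{-1/k}$, and for every vertex $v$ record its pivots $p_i(v)$ (a closest vertex of $A_i$ to $v$, with ties broken towards higher levels), its bunch $B(v) = \bigcup_i \{\, w \in A_i \setminus A_{i+1} : d(v,w) < d(v, A_{i+1})\,\}$, and the clusters $C(w) = \{\, v : w \in B(v)\,\}$. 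Two standard facts underlie everything: $\mathbb{E}\bigl[\sum_{v} |B(v)|\bigr] = \tilde{O}(n^{1+1/k})$, and each $C(w)$ is closed under subpaths of shortest paths emanating from $w$, so that the shortest-path tree $T(w)$ of $G$ restricted to $C(w)$ is connected and each of its root-to-node paths is a genuine shortest path of $G$. Equipping every $T(w)$ with a compact tree-routing scheme (each node keeping $\tilde{O}(1)$ bits and an $\tilde{O}(1)$-bit tree-label), storing at each vertex $v$ its tree-routing data for all $T(w)$ with $v \in C(w)$ along with a hash table for $B(v)$, and letting the label of $v$ be the list of pairs $\bigl(p_i(v),\ \text{tree-label of } v \text{ in } T(p_i(v))\bigr)$ over all levels $i$, gives total storage $\tilde{O}(n^{1+1/k})$, hence $\tilde{O}(n^{1/k})$ on average, with $\tilde{O}(k)$-size labels, while the header never needs to carry more than one such pair plus $O(1)$ control bits.

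Next I would specify the routing. The weak baseline is for the source $u$ to compute the smallest level $i$ with $p_i(v) \in B(u)$ (possible, since $B(u)$ is stored and the $p_j(v)$ sit in $v$'s label), route $u \to p_i(v)$ up $T(p_i(v))$, and then $p_i(v) \to v$ down $T(p_i(v))$; every intermediate vertex belongs to $C(p_i(v))$ and therefore has the data to continue. This routes along a path of length $d(u, p_i(v)) + d(p_i(v), v)$, but since the source cannot swap the roles of $u$ and $v$ the way the centralized oracle query does, the only available telescoping is $d(v, p_{j+1}(v)) \le d(v, p_j(v)) + 2\,d(u,v)$, which yields $d(v, p_i(v)) \le 2i\,d(u,v)$ and stretch about $4k$. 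The route to $\approx 2.64k$ is to spend the extra (average) storage on routing through a cleverly chosen sequence of centers rather than a single pivot: I would additionally let each center store the bunches of the vertices that selected it as a low-level pivot (the aggregate of this is still $\tilde{O}(n^{1+1/k})$, so harmless on average), so that once the walk reaches such a center it can be continued with an alternation-like step toward $v$; interleaving ordinary ``hop to the best reachable pivot of $v$'' steps with these alternation steps, and choosing the level of each hop by comparing the pertinent radii, restores genuinely geometric progress toward $v$.

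The stretch analysis is the crux. Writing $\Delta = d(u,v)$, I would track along the constructed walk a potential that combines the current level with the current distance to $\{u,v\}$, and argue that each step either multiplies an appropriate quantity by a bounded factor while paying $O(\Delta)$, or else terminates the walk, with the total number of steps being $O(k)$; the accumulated length is then $(c + o(1)) k \Delta$. The constant $c$ falls out of optimizing the rule that picks, at each step, the level at which to hop: it must balance ``the target of the hop lies in the current vertex's bunch'' (always achievable at level $k-1$, since $A_{k-1} \subseteq B(x)$ for every $x$) against ``the hop makes the largest possible multiplicative progress'', and this optimization is what produces $c \approx 2.64$ rather than the naive $4$ or the handshake-only $2$. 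I expect the main obstacle to be precisely this: choosing the step rule and the potential so that the worst case over all pairs $u,v$ and all random hierarchies is provably $\approx 2.64k$, while simultaneously ensuring that every vertex met along the walk does store the tree-routing data required to continue --- which is exactly where the relaxation to average, rather than worst-case, $\tilde{O}(n^{1/k})$ storage is indispensable.
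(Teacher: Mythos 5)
Your setup matches the paper's almost exactly: the Thorup--Zwick hierarchy, bunches $B(v)$ and clusters $C(w)$, tree routing in $T(C(w))$, and the observation that charging storage to bunches gives $\sum_v |\RT(v)| = \tilde O(n^{1+1/k})$ so the \emph{average} is $\tilde O(n^{1/k})$. The paper's routing table is $\RT(u) = \{\L(u,T(C(v))) : v\in B(u)\} \cup \{\L(v,T(C(u))) : v\in C(u)\}$, and your idea of "letting each center store extra data about the vertices that route through it" is the right instinct. But the core of the proof --- the actual routing rule that achieves $\approx 2.64k$ and its stretch analysis --- is not present in your proposal, and the sketch you give does not reconstruct it.

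Concretely, three essential ingredients are missing. First, your label $\L(v)$ omits the distances $h_i(v) = d(v,A_i)$; in the paper these values, together with the $h_i(u)$ from $\RT(u)$, are how the source estimates $d(u,v)$. Without them the "comparing the pertinent radii" step you allude to has no input. Second, the paper's actual routing is an \emph{adaptive, backtracking} procedure: it maintains a lower bound $\hat\delta_r$ on $d(u,v)$, initialized to $\max_i (h_{i+1}(u)-h_i(v))$; for $i = 0,\dots,\lfloor\ell/2\rfloor$ it tests whether $h_{2i+1}(v) > h_{2i}(u) + c_i\hat\delta_r$, and if so it makes a \emph{trial trip} to $p_{2i}(u)$, checks there whether $v\in C(p_{2i}(u))$ (this is exactly what the extra cluster storage $\{\L(v,T(C(u))) : v\in C(u)\}$ buys), routes to $v$ if yes, and otherwise returns to $u$ and \emph{refines} $\hat\delta_{r+1} \gets h_{2i+1}(v) - h_{2i}(u)$. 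Your "alternation-like steps toward $v$" and "potential combining level and distance" are not recognizably this mechanism, and your proposed extra storage ("bunches of vertices that selected $w$ as a low-level pivot") is a different quantity from the cluster $C(w)$ and is not obviously $\tilde O(n^{1+1/k})$ in aggregate. Third, the $\approx 2.64k$ constant comes from an explicit threshold sequence $c_0 = 1$, $c_i = 2 - \frac{a-i}{a+\sum_{j<i}c_j}$, chosen so that the cost of all wasted round trips to $p_{2j(r)}(u)$ amortizes against the successive increases of $\hat\delta_r$; the heart of the proof is an induction over the backtrack count $b$ (Claim~\ref{C-final-proof}) showing $2h_{2a}(u) + \sum_{r=1}^b 2h_{2j(r)}(u) \le (2a + 2\sum_{i<a} c_i)\hat\delta_b$, combined with $\hat\delta_b \le d(u,v)$. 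Your sentence "the constant $c$ falls out of optimizing the rule that picks the level at which to hop" names this as the obstacle but does not supply the rule, the sequence $c_i$, or the amortization, which is precisely where the theorem's content lies.
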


All the compact routing schemes mentioned so far solve the problem of sending a single message from the source to the destination, while in most real-world network applications, two parties communicate over the network for a session. 
A \textit{communication session} is composed of two phases. In the first phase, a connection is established between the source and the destination, and in the second phase, a stream of messages is exchanged between the parties.
Many real-world protocols, such as  \texttt{TLS}, \texttt{QUIC}, \texttt{TCP}, \texttt{SSH}, \texttt{Wi-Fi}, and \texttt{Bluetooth}, adhere  to this framework. 

We consider the \textit{handshake} mechanism for the establishment phase, as presented by Thorup and Zwick~\cite{DBLP:conf/spaa/ThorupZ01}. The handshake is composed of two messages of size $\Ot(1)$, that are exchanged between the parties to establish the connection. 
The first message is sent from the source to the destination, and the second message is sent from the destination back to the source. Since the handshake is composed of a message sent from the source to the destination and back, the stretch of the handshake is the roundtrip stretch defined earlier.

Thorup and Zwick~\cite{DBLP:conf/spaa/ThorupZ01} presented a compact routing scheme that uses a handshake,
in which two messages are routed along a $(4k-5)$-stretch path,
to establish a connection. Then, a stream of messages is routed along an optimal $(2k-1)$-stretch path.
% Since the handshake is composed of a message sent from the source to the destination and back, the stretch of the path on which the handshake is performed is the roundtrip stretch defined earlier.
% By using the $\approx3.68k$-stretch compact routing scheme of Chechik~\cite{DBLP:conf/podc/Chechik13} it is possible to implement the handshake with $\approx3.68k$-roundtrip stretch.
While the compact routing scheme of Chechik~\cite{DBLP:conf/podc/Chechik13} achieves an $\approx3.68k$-roundtrip stretch for the handshake, there is still 
a significant gap from the optimal $(2k-1)$-stretch followed by the Erd\H{o}s girth conjecture. 
Therefore, the main open problem for routing in a communication session is to reduce the stretch of the handshake phase and obtain a compact roundtrip routing scheme with improved stretch.
\begin{problem}\label{Pr-1}
    For every $k\ge 2$, given a weighted undirected graph, what is the best roundtrip stretch of a routing scheme that uses $\Ot(n^{1/k})$ local storage?
\end{problem}

In this paper we \textit{solve}~\autoref{Pr-1} by presenting an \textit{optimal} 
$(2k-1)$-roundtrip stretch for the handshake phase, that matches the lower bound that follows from the Erd\H{o}s girth conjecture, as presented in the following theorem.
\begin{theorem} \label{T-Compact-Undirected} \Copy{T-Compact-Undirected} {
    Let $G=(V, E, w)$ be a weighted \textit{undirected} graph.
    Let $k\ge 1$ be an integer. There is a $(2k-1)$-stretch compact roundtrip routing scheme that uses local routing tables of size $\tilde{O}(n^{1/k})$, vertex labels of size $\tilde{O}(k)$ and packet headers of size $\tilde{O}(k)$. 
}
\end{theorem}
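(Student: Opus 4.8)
The plan is to build on the Thorup--Zwick distance-oracle hierarchy together with cluster-based tree routing, and to realise the oracle's (alternating) distance query as a distributed computation performed by the two handshake messages; once the query's witness is known, every session message --- and the handshake itself --- is routed along the resulting optimal path, whose length is at most $(2k-1)\,d(u,v)$. Throughout, write $d(v,A_i)=\min_{w\in A_i}d(v,w)$ and, for the roundtrip demand, $d_{RT}(u,v)=d(u,v)+d(v,u)=2d(u,v)$.

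\textbf{Preprocessing and routing primitives.} Sample $V=A_0\supseteq A_1\supseteq\dots\supseteq A_{k-1}$ as in the TZ oracle ($A_i$ drawn from $A_{i-1}$ with probability $n^{-1/k}$), and form the pivots $p_i(v)$, the bunches $B(v)$, and the clusters $C(w)=\{v:w\in B(v)\}$; as usual $|B(v)|=\Ot(n^{1/k})$, and $|C(w)|=\Ot(n^{1/k})$ can be guaranteed w.h.p.\ by the standard resampling. Since each $C(w)$ is closed under shortest paths from $w$, it is a subtree of the shortest-path tree rooted at $w$; equip it with a tree-routing scheme using $\Ot(1)$ labels. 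The routing table of $x$ stores, for each $w\in B(x)$, the parent edge of $x$ in $C(w)$ (so $x$ can forward a packet ``up'' toward $w$ whenever $w\in B(x)$), and, as $x$ is the root of $C(x)$, the tree label $\lambda_{C(x)}(y)$ of every $y\in C(x)$ (so $x$ can forward a packet ``down'' to any member of its own cluster). The vertex label of $v$ is $\{(p_i(v),\,d(v,A_i),\,\lambda_{C(p_i(v))}(v)):0\le i<k\}$, of size $\Ot(k)$. In particular, for any $w$ and any $x\in C(w)$ a packet can be routed along the shortest path $x\rightsquigarrow w$ (driven by a ``route inside $C(w)$'' tag in the header), and along $w\rightsquigarrow x$ as soon as the single token $\lambda_{C(w)}(x)$ is available --- either from the header, or from $w$'s table once the packet is at $w$.

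\textbf{From the witness to the routes.} The oracle query returns a witness $w$ with $d(u,w)+d(w,v)\le(2k-1)\,d(u,v)$, and $w$ is always of one of two forms: (i)~$w=p_i(v)\in B(u)$ for some $i$, or (ii)~$w=p_i(u)\in B(v)$ for some $i$. In case~(i) the path $u\rightsquigarrow w\rightsquigarrow v$ is realisable using only $L(v)$: go up inside $C(w)$ from $u$ to $w$ (legal since $w\in B(u)$), then down inside $C(w)$ from $w$ to $v$ using $\lambda_{C(w)}(v)\in L(v)$. In case~(ii) exactly one extra token, $\lambda_{C(w)}(v)$, is needed --- and $v$ possesses it, since $w\in B(v)$ places $\lambda_{C(w)}(v)$ in $v$'s own cluster data. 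Hence if at the end of the handshake $u$ holds the pair $(w,\lambda_{C(w)}(v))$ --- which $u$ computes itself in case~(i) and which $v$ returns in case~(ii) --- every session message is routed along $u\rightsquigarrow w\rightsquigarrow v$ or its reverse, i.e.\ with stretch at most $2k-1$, using headers of size $\Ot(k)$.

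\textbf{The handshake, and the main obstacle.} The source $u$ can evaluate every ``$u$-side'' predicate of the oracle query --- ``$p_i(v)\in B(u)$?'' for all $i$, since $u$ knows $B(u)$ and reads every $p_i(v)$ from $L(v)$ --- and thereby find the best case-(i) candidate $w_1$ with value $D_1=d(u,w_1)+d(w_1,v)$; symmetrically, once $v$ has $L(u)$ it can find the best case-(ii) candidate $w_2$ with value $D_2$, and the oracle analysis gives $\min(D_1,D_2)\le(2k-1)\,d(u,v)$. The first handshake message carries $L(u),w_1,D_1$ and $\lambda_{C(w_1)}(v)$; $v$ then computes $D_2,w_2$ and the return message delivers the winner $w\in\{w_1,w_2\}$ together with $\lambda_{C(w)}(v)$. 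I expect the crux to be the routing of the \emph{first} message: it must reach $v$ to deliver $L(u)$, yet routing it along $u\rightsquigarrow w_1\rightsquigarrow v$ costs $D_1$, which can be as large as $\approx(4k-5)\,d(u,v)$ exactly when the true oracle witness has type~(ii) --- too large for the roundtrip budget, and one cannot first run the alternating query at $u$ alone. The way out must be to route the first message so that its length is bounded by $\min(D_1,D_2)$ rather than $D_1$: the message itself performs enough of the alternating computation to avoid committing to a case-(i) route when a shorter case-(ii) route exists. This is feasible because the case-(ii) candidates are the pivots $p_i(u)$, each reachable directly from $u$ (all lie in $B(u)$), and a packet sitting at $p_i(u)$ can decide on its own whether $p_i(u)\in B(v)$ (a vertex recognises its membership in $B(v)$ from its cluster and $v$'s pivots, which it reads from the carried label). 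Carrying $u$'s precomputed case-(i) data, the packet commits to a final route only once it has identified the correct oracle witness among the first $\min(i_1,i_2)$ levels; the level/distance invariants of the query ($d(u,A_i),d(v,A_i)\le i\cdot d(u,v)$ on the explored levels) are what must be used to confine this exploration --- and its mirrored return --- to total length $2\min(D_1,D_2)\le(2k-1)\,d_{RT}(u,v)$, giving the claimed roundtrip stretch. Degenerate situations ($v\in B(u)$, $u=v$, ties, and the empty family of odd levels when $k$ is even) and the worst-case bounds on table sizes are dealt with by the usual modifications.
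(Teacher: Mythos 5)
Your plan diverges from the paper's at the very point you flag as the crux, and the fix you sketch there has a gap you do not close. You try to compute, during the handshake, a \emph{single} Thorup--Zwick witness $w$ with $d(u,w)+d(w,v)\le(2k-1)d(u,v)$ and route both legs through that $w$; since $u$ cannot evaluate the even-level tests ``$p_{2j}(u)\in B(v)$?'' locally, you propose that the packet physically probe the pivots $p_{2j}(u)$. But each such probe costs $2d(u,p_{2j}(u))$, and the TZ level invariant only gives $d(u,p_{2j}(u))\le 2j\cdot d(u,v)$, so a sequence of probes up to level $\ell$ costs $\Theta(\ell^2)\,d(u,v)$ in the worst case --- not $O(\ell)\,d(u,v)$. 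Your sentence ``the level/distance invariants \dots are what must be used to confine this exploration \dots to total length $2\min(D_1,D_2)$'' asserts the quadratic sum can be tamed, but offers no argument, and in fact the paper devotes its entire Section~\ref{S-amortized-undirected} to a version of this probing problem and even then only reaches $\approx 2.64k$ stretch, not $2k-1$. So the exploration route does not yield the claimed optimal bound.

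The paper avoids the obstacle entirely by \emph{not} trying to find a common witness. It routes $u\to v$ via $p_{b}(v)$ where $b=\ell(u,v)=\min\{i:p_i(v)\in B(u)\}$, and $v\to u$ via $p_a(u)$ where $a=\ell(v,u)$, so each endpoint makes a purely local decision and no probing is needed. Individually one leg may have stretch as large as $\approx 4k$, exactly the situation you worry about; but Claim~\ref{C-lemma-10} shows that (assuming wlog $b\ge a$) $d(v\leftrightarrow p_b(v))+d(u\leftrightarrow p_a(u))\le 2(k-1)d(u\leftrightarrow v)$, by splitting $d(v\leftrightarrow p_b(v))$ into the first $a$ levels (bounded via Lemma~\ref{L-Bound-p-i-d}, since both $p_i(u)\notin B(v)$ and $p_i(v)\notin B(u)$ hold for $i<a$) plus the increments $\Delta_i^v$ for $a\le i<b$ (each at most $2d(u\leftrightarrow v)$ via Lemma~\ref{L-Bound-delta-2d}, since $p_i(v)\notin B(u)$). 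This ``one leg is allowed to be long because the other is then forced to be short'' cancellation is the missing idea in your proposal; once you have it, no handshake-time distributed oracle query is required for the roundtrip bound at all.
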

Using this result with the handshake-based routing scheme of Thorup and Zwick~\cite{DBLP:conf/spaa/ThorupZ01}, one obtains an optimal $(2k-1)$-stretch compact routing scheme for any communication session.
We summarize our new results for undirected graphs and compare them to the previous work in~\autoref{tab:compact-undirected}.

\begin{table}[t]
    \centering
    \begin{tabular}{|c|c|c|c|c|}
        \hline
         Stretch & Local storage & \makecell{Uses average\\local storage?} &  Ref. & Comments\\ \hline\hline
         $O(k)$ &  $\Ot(n^{1/k})$ & yes & \cite{DBLP:journals/jacm/PelegU89} & unweighted graphs \\  \hline
         $2^k-1$ &  $\Ot(n^{1/k})$ & yes & \cite{DBLP:journals/jal/AwerbuchBLP90} & \\  \hline
         $O(k^2)$ &  $\Ot(n^{1/k})$ & no & \cite{DBLP:journals/siamdm/AwerbuchP92} & \\  \hline
         % $3$ &  $\Ot(n^{2/3})$ & no & \cite{DBLP:journals/jal/Cowen01} & \\  \hline
         % $5$ &  $\Ot(n^{1/2})$ & no & \cite{DBLP:journals/jal/EilamGP03} & \\  \hline
         $4k-5$ & $\Ot(n^{1/k})$ & no & \cite{DBLP:conf/spaa/ThorupZ01} & \\  \hline
         $\approx 3.68k$ & $\Ot(n^{1/k})$ & no & \cite{DBLP:conf/podc/Chechik13} & \\  \hline
         $4k-7+\eps$ & $\Ot(n^{1/k})$ & no & \cite{DBLP:conf/podc/RodittyT15} & \\  \hline
         $\approx 2.64k$ & $\Ot(n^{1/k})$ & yes &\autoref{T-Compact-Amortized-3k} & \\  \hline
         $2k-1$ & $\Ot(n^{1/k})$ & no &\autoref{T-Compact-Undirected} & roundtrip stretch \\ \hline
    \end{tabular}
    \caption{Compact routing schemes in undirected graphs.}
    \label{tab:compact-undirected}
\end{table}
% \footnotetext{poly-logarithmic factors are omitted.}

Next, we turn our attention to weighted \textit{directed} graphs. 
Since preserving the asymmetric reachability structure of directed graphs requires $\Omega(n^{2})$ space\footnote{In a bipartite graph in which all the edges are from one side to the other, there are $\Theta(n^2)$ edges and removing each of them makes the destination not reachable from the source. Thus, storing reachability requires $\Omega(n^2)$ space.}, no spanner, emulator, or compact routing scheme can exist for directed graphs. 
Cowen and Wagner~\cite{DBLP:conf/soda/CowenW99} circumvented the $\Omega(n^2)$ space lower bound by introducing roundtrip distances in directed graphs, defined as $d(u \leftrightarrow v) = d(u,v) + d(v,u)$. 

In the last few decades, a few compact roundtrip routing schemes were presented, see for example~\cite{DBLP:conf/soda/CowenW99, DBLP:journals/jal/CowenW04, DBLP:journals/talg/RodittyTZ08}. The state-of-the-art result was obtained by Roditty, Thorup, and Zwick~\cite{DBLP:journals/talg/RodittyTZ08}. They presented a $3$-stretch compact roundtrip routing scheme that uses $\Ot(n^{1/2})$ local storage and also a $(4k+\eps)$-stretch compact roundtrip routing scheme that uses $\Ot(n^{1/k})$ local storage for every $k\ge 3$. 

% Cowen and Wagner~\cite{DBLP:conf/soda/CowenW99} were the first to present a compact roundtrip routing scheme. They presented a $3$-stretch roundtrip routing scheme that uses $\Ot(n^{1/2})$ average local storage.
% In addition they presented a $3$-stretch with $\Ot(n^{2/3})$ worst-case local storage.
% Cowen and Wagner~\cite{DBLP:journals/jal/CowenW04} also presented the first stretch/storage trade-off for compact roundtrip routing schemes. 
% More specifically, they obtained  a $(2^{k}-1)$-stretch compact roundtrip routing scheme that uses  $\Ot(n^{1/k})$ average local storage.
% Roditty, Thorup, and Zwick~\cite{DBLP:journals/talg/RodittyTZ08} presented a $3$-stretch compact roundtrip routing scheme that uses $\Ot(n^{1/2})$ local storage and  also a $(4k+\eps)$-stretch compact roundtrip routing scheme that uses $\Ot(n^{1/k})$ local storage for every $k\ge 3$. 
From the Erd\H{o}s girth conjecture, it follows that every compact roundtrip routing scheme with stretch $< 2k + 1$ must use total storage of $\Omega(n^{1+1/k})$ bits. Closing the gap between the upper and the lower bound is the main open problem regarding compact roundtrip routing schemes.

\begin{problem}\label{Pr-2}
    For every $k\ge 2$, given a directed weighted graph, what is the best stretch of a roundtrip routing scheme that uses $\Ot(n^{1/k})$ local storage?
\end{problem}

In recent years, roundtrip distances have been extensively studied but only in the context of roundtrip spanners and roundtrip emulators (see, for example~\cite{DBLP:conf/soda/HarbuzovaJWX24, CenDG20, DBLP:journals/talg/RodittyTZ08, DBLP:conf/cocoon/StaffordZ23, DBLP:journals/corr/abs-1911-12411, DBLP:conf/soda/PachockiRSTW18}).
Despite all the recent progress, no improvements were obtained for compact roundtrip routing schemes since the $(4k+\eps)$-stretch roundtrip routing scheme of~\cite{DBLP:journals/talg/RodittyTZ08} from more than two decades ago.  

In this paper, we improve upon~\cite{DBLP:journals/talg/RodittyTZ08} for the case that $k=3$. More specifically, 
using $\Ot(n^{1/3})$ local storage, Roditty, Thorup, and Zwick~\cite{DBLP:journals/talg/RodittyTZ08} obtained a  $(12+\eps)$-stretch  roundtrip routing scheme. We present a $7$-stretch roundtrip routing scheme that uses $\Ot(n^{1/3})$ local storage, as presented in the following theorem.
\begin{theorem} \label{T-Compact-Directed-7-1/3} \Copy{T-Compact-Directed-7-1/3} {
    Let $G=(V, E, w)$ be a weighted \textit{directed} graph.
    There is a $7$-stretch compact roundtrip routing scheme that uses local routing tables of size $\tilde{O}(n^{1/3})$, vertex labels of size $\tilde{O}(1)$ and packet headers of size $\tilde{O}(1)$. }
\end{theorem}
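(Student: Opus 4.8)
The plan is to build a directed, roundtrip analogue of the $(4k-5)$-stretch compact routing scheme of Thorup and Zwick, specialized to $k=3$. Everything is run on the roundtrip metric $d(u\leftrightarrow v)=d(u,v)+d(v,u)$, which is symmetric and satisfies the triangle inequality $d(u\leftrightarrow w)\le d(u\leftrightarrow v)+d(v\leftrightarrow w)$; this is the only property the sampling/bunch/cluster machinery of Thorup and Zwick uses, so it transfers once all distances are read as roundtrip distances, with one change forced by directedness: every shortest-path tree is replaced by a pair, an \emph{in-tree} $T^{\mathrm{in}}_c$ made of shortest paths \emph{into} $c$, and an \emph{out-tree} $T^{\mathrm{out}}_c$ made of shortest paths \emph{out of} $c$.

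First I would fix the hierarchy $V=A_0\supseteq A_1\supseteq A_2$, sampling each vertex into $A_1$ with probability $n^{-1/3}$ and each vertex of $A_1$ into $A_2$ with probability $n^{-1/3}$, so $|A_1|=\tilde{O}(n^{2/3})$ and $|A_2|=\tilde{O}(n^{1/3})$ with high probability. For $v\in V$ let $p_i(v)$ be a roundtrip-nearest vertex of $A_i$, let $B(v)=\bigcup_i\{\,w\in A_i\setminus A_{i+1}:d(v\leftrightarrow w)<d(v\leftrightarrow A_{i+1})\,\}$ with $A_3=\emptyset$ (so $A_2\subseteq B(v)$), and $C(w)=\{\,v:w\in B(v)\,\}$. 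The usual counting, now with the roundtrip triangle inequality, gives $\mathbb{E}[|B(v)|]=\tilde{O}(n^{1/3})$, hence (since $w\in B(v)\Leftrightarrow v\in C(w)$) the number of clusters containing any fixed $v$ is $|B(v)|+|A_2|=\tilde{O}(n^{1/3})$ with high probability. In preprocessing I would compute, by polynomially many shortest-path computations, the roundtrip distances, the $p_i$, the $B(v)$, the $C(w)$, and for each center $w$ the trees $T^{\mathrm{in}}_w,T^{\mathrm{out}}_w$, equipping each with the compact tree-routing labels of Thorup and Zwick of size $\tilde{O}(1)$ per tree per vertex. Each vertex $v$ then stores: for every cluster $C(w)$ containing $v$, its parent port in $T^{\mathrm{in}}_w$ and its tree label in $T^{\mathrm{out}}_w$; the set $B(v)$ as a hash table; and the radii $d(v\leftrightarrow p_1(v)),d(v\leftrightarrow p_2(v))$. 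The label of $v$ is $p_1(v),p_2(v)$, those radii, and $v$'s tree labels in $T^{\mathrm{in}}_{p_i(v)},T^{\mathrm{out}}_{p_i(v)}$. This is $\tilde{O}(n^{1/3})$ per routing table and $\tilde{O}(1)$ per label.

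To route from $u$ to $v$ the initiator computes the smallest $j\in\{0,1,2\}$ with $p_j(v)\in B(u)$ (it exists because $A_2\subseteq B(u)$), sets the \emph{center} $c:=p_j(v)$, and writes $c$ together with $v$'s tree labels in $T^{\mathrm{in}}_c,T^{\mathrm{out}}_c$ (taken from $v$'s label) into the packet header; from then on every packet of the session, in both directions, is forced through $c$ --- forward packets go $u\to c$ along $T^{\mathrm{in}}_c$ then $c\to v$ along $T^{\mathrm{out}}_c$, reverse packets go $v\to c$ along $T^{\mathrm{in}}_c$ then $c\to u$ along $T^{\mathrm{out}}_c$. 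Here $p_j(v)\in B(u)$ means $u\in C(c)$, and $v\in C(c)$ because $p_j(v)$ is a roundtrip-nearest $A_j$-vertex of $v$, so all four legs are defined; each is a shortest path, hence the session's round trip has total length exactly $d(u,c)+d(c,v)+d(v,c)+d(c,u)=d(u\leftrightarrow c)+d(v\leftrightarrow c)$. It remains to bound this by $7\,d(u\leftrightarrow v)$, which is the $(4k-5)$-stretch estimate of Thorup and Zwick at $k=3$ re-run on roundtrip distances: minimality of $j$ gives $p_i(v)\notin B(u)$ for $i<j$, hence $d(u\leftrightarrow A_{i+1})\le d(u\leftrightarrow v)+d(v\leftrightarrow A_i)$; chaining this up the at most two levels, together with $d(v\leftrightarrow A_{i+1})\le d(v\leftrightarrow u)+d(u\leftrightarrow A_{i+1})$ and the observation that $j=0$ routes directly with stretch $1$, yields $d(u\leftrightarrow c)+d(v\leftrightarrow c)\le 7\,d(u\leftrightarrow v)$.

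I expect the technical heart to be the storage accounting for the trees, and this is where directedness matters. For the in-trees it is benign: if $x$ lies on a shortest path into $c$ from some $y\in C(c)$, then $d(x,c)\le d(y,c)$ along that path and $d(y,c)<d(y\leftrightarrow A_{i+1})$, which as in Thorup and Zwick forces $x\in C(c)$, so $T^{\mathrm{in}}_c$ is supported on $C(c)$ and the per-vertex load is $\tilde{O}(n^{1/3})$. For the out-trees the same argument fails: a vertex on a shortest path \emph{out of} $c$ controls only $d(c,x)$, not $d(x,c)$, so it need not lie in $C(c)$. One must therefore either truncate $T^{\mathrm{out}}_c$, keeping only the part within a roundtrip ball of radius $O(d(v\leftrightarrow c))$ around $c$ (which still carries every $v\in C(c)$) and bound the total size of these truncated out-trees by $\tilde{O}(n^{1+1/3})$ via a roundtrip-ball packing argument, or adjust the cluster definition so that out-tree internal vertices are absorbed at a controlled cost. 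Getting this right is the bulk of the work; a secondary point is pinning the stretch to exactly $7$ rather than $7+\eps$, which is why the construction uses exact roundtrip bunches and clusters rather than the geometrically bucketed roundtrip covers of Roditty, Thorup, and Zwick, and why all of a session's traffic is pinned to the single center $c$ so that the round trip collapses to $d(u\leftrightarrow c)+d(v\leftrightarrow c)$ with no extra slack.
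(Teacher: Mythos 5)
Your high-level framing --- a roundtrip analogue of Thorup--Zwick's scheme at $k=3$ with in-trees and out-trees --- is a reasonable starting point, and you correctly put your finger on the central obstruction: in a directed graph a vertex on a shortest path \emph{out of} a center $c$ need not lie in $C(c)$, so the out-tree $T^{\mathrm{out}}_c$ cannot simply be charged to $C(c)$'s members. But this is precisely the step you leave open. You offer two alternatives (truncating $T^{\mathrm{out}}_c$ to a roundtrip ball of radius $O(d(v\leftrightarrow c))$ and invoking a ``packing argument,'' or redefining clusters) and carry out neither. The truncation route in particular is problematic: vertices inside the truncated ball but outside $C(c)$ still need to store port information for $T^{\mathrm{out}}_c$, and it is not clear at all that the total size of these truncated trees is $\tilde O(n^{1+1/3})$, nor that the truncated trees still connect all of $C(c)$ to $c$ (the out-path from $c$ to some $v\in C(c)$ can wander far in roundtrip distance even when $d(c,v)$ is small). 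So the storage accounting, which you yourself call ``the bulk of the work,'' remains a genuine gap.

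The paper sidesteps this obstacle entirely rather than solving the packing problem you pose. It exploits that \emph{balls} do satisfy the path-containment property (\autoref{L-Ball-can-be-tree}), so $T(B_0(\cdot))$ is well-defined and cheap, and that clusters of vertices in $A_2$ are all of $V$, so $L(u,T(C(w)))$ for $w\in A_2$ costs only $|A_2|=\tilde O(n^{1/3})$ per vertex. The only problematic level is $B_1$, and there the paper does not try to store a truncated out-tree at all: if $P(u,p_1(v))$ leaves $C(p_1(v))$ at a first vertex $z$, it detours through $p_2(z)\in A_2$, which is universally routable. The fact $z\notin C(p_1(v))$ gives $d(z\leftrightarrow p_2(z))\le d(z\leftrightarrow p_1(v))\le d(u\leftrightarrow p_1(v))$, which bounds the detour. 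This is a qualitatively different idea from anything in your sketch.

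Separately, your stretch estimate does not actually reach $7$. Your algorithm is one-sided: the initiator computes $j=\min\{i:p_i(v)\in B(u)\}$ and always routes through $c=p_j(v)$, since $u$ can only read $v$'s tree labels for the trees of $v$'s own pivots. For $j=2$ the information ``$p_0(v),p_1(v)\notin B(u)$'' together with the unconditional triangle inequality only gives $d(v\leftrightarrow A_1)\le 2d(u\leftrightarrow v)$, hence $d(u\leftrightarrow A_2)\le 3d(u\leftrightarrow v)$, hence $d(v\leftrightarrow p_2(v))\le 4d(u\leftrightarrow v)$ and $d(u\leftrightarrow p_2(v))\le 5d(u\leftrightarrow v)$, for a total roundtrip of $9d(u\leftrightarrow v)$. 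Getting $7$ requires symmetrizing the argument --- the paper does this by also checking $v\in C(u,A_1)$, which is equivalent to $u\in B_0(v)$ and supplies the extra bound $d(v\leftrightarrow p_1(v))\le d(u\leftrightarrow v)$ --- but your scheme never obtains $d(v\leftrightarrow p_1(v))\le d(u\leftrightarrow v)$, so the ``$(4k-5)$ at $k=3$'' shortcut does not apply.
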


In addition, in the following theorem, we present an optimal\footnote{Assuming the Erd\H{o}s conjecture, it is easy to create a graph $G$ with $\Omega(n^{1+1/k})$ edges such that the girth of $G$ is $2k+2$, and the diameter of $G$ is at most $2k+1$.
Given a graph with  $\Omega(n^{1+1/k})$ edges and $2k+2$ girth, if there is a pair of vertices $u,v\in V$ such that $d(u,v) \ge 2k+2$, we can add an edge between $u$ and $v$ to the graph without reducing the girth. By repeating this process until there are no pairs $u,v\in V$ such that $d(u,v) \ge 2k+2$, we get that the diameter is at most $2k+1$. }, up to polylogarithmic factors, 
compact roundtrip routing scheme in graphs with $D_{hop} = \Ot(k)$, where $D_{hop}$ is the hop diameter of the graph.
\begin{theorem} \label{T-Compact-directed-bounded-diameter} \Copy{T-Compact-directed-bounded-diameter}{
    Let $G=(V, E, w)$ be a weighted \textit{directed} graph.
    Let $k\ge 1$ be an integer. There is a $(2k-1)$-stretch compact roundtrip routing scheme that uses local routing tables of size $\tilde{O}(D_{hop} n^{1/k})$, vertex labels of size $\tilde{O}(D_{hop}k)$ and packet headers of size $\tilde{O}(D_{hop})$. }
\end{theorem}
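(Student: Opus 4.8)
The plan is to transport the Thorup--Zwick distance‑oracle construction to the roundtrip metric $d(u\leftrightarrow v)=d(u,v)+d(v,u)$, and to exploit the bounded hop diameter so that every route the scheme needs is both storable in the routing tables and shippable in packet headers.

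First I would build a hierarchy $V=A_0\supseteq A_1\supseteq\dots\supseteq A_k=\emptyset$ by independent sampling (with the standard boosting so that every \emph{roundtrip} bunch has size $\Ot(n^{1/k})$ with high probability) and, writing $\delta(x,y):=d(x\leftrightarrow y)$ --- which is a genuine metric (symmetric, triangle inequality) --- define the roundtrip pivot $p_i(v)\in A_i$ minimizing $\delta(v,\cdot)$ and the roundtrip bunch $B(v)=\bigcup_{i}\{w\in A_i\setminus A_{i+1}:\delta(v,w)<\delta(v,A_{i+1})\}$. Because the hop diameter is $D_{hop}$, every ordered pair $x,y$ admits a path with at most $D_{hop}$ edges of length exactly $d(x,y)$; the routing table of $v$ stores, for each $w\in B(v)$ and each pivot $p_i(v)$, such a path from $v$ to it \emph{and} one back, each encoded as a string of at most $D_{hop}$ port numbers --- a total of $\Ot(D_{hop}n^{1/k})$ --- while the label of $v$ stores only the $k$ pivots $p_i(v)$ together with the numbers $\delta(v,p_i(v))$, which is $\Ot(k)$.

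Next, the stretch. For a pair $u,v$ put $R:=\delta(u,v)$ and run the Thorup--Zwick query \emph{in the metric $\delta$}: the alternating walk $w_0=u$, repeatedly swap the two endpoints and advance the pivot one level, stops within $k-1$ rounds at some vertex $w$ for which the usual invariants give $\delta(u,w)+\delta(w,v)\le(2k-1)R$; moreover, since the walk stops at $w$, one checks (handling ties exactly as in Thorup--Zwick) that $w\in B(u)\cap B(v)$ and that $w$ is a pivot of $u$ or of $v$. The scheme then routes the $u\to v$ packet along $u\to w\to v$ and the $v\to u$ packet along $v\to w\to u$, so the roundtrip length is
\[
 d(u,w)+d(w,v)+d(v,w)+d(w,u)=\delta(u,w)+\delta(w,v)\le(2k-1)R=(2k-1)\bigl(d(u,v)+d(v,u)\bigr),
\]
matching the Erd\H{o}s girth lower bound recalled in the footnote above. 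Each of the four legs is a $\le D_{hop}$‑hop shortest path between a vertex and a member of its bunch, so it is available verbatim: $u$ holds both $u\to w$ and $w\to u$, and $v$ holds both $v\to w$ and $w\to v$, because $w\in B(u)\cap B(v)$.

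The hard part will be the \emph{coordination}: identifying $w$ and stitching the four legs together within the $\Ot(D_{hop}k)$ header budget, since neither endpoint can run the alternating walk alone (odd rounds test membership in $B(u)$, even rounds test membership in $B(v)$, and a vertex never knows the other's full bunch). Here is where the handshake and the hop bound combine: I would have the forward handshake message carry, in its $\Ot(D_{hop}k)$ header, the sender's pivot list, the values $\delta(u,p_i(u))$, and for each of the $O(k)$ candidate pivots a stored $\le D_{hop}$‑hop return path to the sender; once this message is at $v$, the receiver has everything it needs to complete the walk, pin down $w$, assemble the whole route from the legs it holds together with the return legs carried in the header, and fire the backward message along $v\to w\to u$ (also informing $u$ so that subsequent session packets use the optimal $u\to w\to v$ route). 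The two genuinely delicate points I expect to spend effort on are (i) getting the forward message to $v$ at all while staying inside the hop and storage budgets --- this is where it matters that the destination, not the source, owns and ships the pivot‑to‑destination legs --- and (ii) the tie‑breaking that makes ``$w$ is a pivot of $u$ or of $v$'' and ``$w\in B(u)\cap B(v)$'' simultaneously hold, which is routine in the undirected Thorup--Zwick analysis but must be re‑verified for the roundtrip metric.
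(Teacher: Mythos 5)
Your proposal takes a genuinely different route from the paper. The paper's proof (which is quite terse) simply transplants the scheme of \autoref{S-roundtrip-undirected}: the routing table stores explicit $\le D_{hop}$-hop paths $P(u,w)$ for $w\in B(u)$ in place of tree labels, the source $u$ routes via $p_b(v)$ with $b=\min\{i:p_i(v)\in B(u)\}$, the destination $v$ routes back via $p_a(u)$ with $a=\min\{i:p_i(u)\in B(v)\}$, and the $(2k-1)$ roundtrip bound is inherited from~\autoref{L-Routing-undirected-Approximation} and~\autoref{C-lemma-10} — which are precisely engineered to tolerate \emph{different} intermediaries for the forward and backward legs. Crucially, $a$ and $b$ are computable \emph{locally}: $u$ needs only $\RT(u)$ and the pivots in $\L(v)$, and symmetrically for $v$, so no handshake coordination is needed to pin down the routing trees. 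Your proposal instead runs the Thorup–Zwick alternating-walk query in the roundtrip metric, finding a single common intermediary $w$ for both legs and invoking the standard oracle guarantee $\delta(u,w)+\delta(w,v)\le(2k-1)\delta(u,v)$. That is a clean way to get the stretch, but it forces the very coordination problem you flag as delicate point (i), and that point is a genuine gap rather than a routine detail.

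Concretely, the obstacle is a chicken-and-egg issue: to decide where the TZ walk stops, one must know the odd-level answers (``$p_i(v)\in B(u)$?'', known only to $u$) \emph{and} the even-level answers (``$p_i(u)\in B(v)$?'', known only to $v$). Your plan — forward message carries $u$'s pivots, distances, and return legs; $v$ computes $w$ and replies via $w$ — leaves the \emph{route of the forward message itself} unspecified. If the first forward leg is routed along some suboptimal default path, the handshake roundtrip (first forward plus first backward) can exceed $(2k-1)\delta(u,v)$, so the claimed roundtrip stretch is not established; the $(2k-1)$ bound would only kick in from the second roundtrip on. The paper sidesteps this entirely by not aiming for a common $w$: the forward leg uses $p_b(v)$ and the backward uses $p_a(u)$, each computable before any communication, and the stretch analysis absorbs the resulting asymmetry. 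Separately, your assertion that ``$w\in B(u)\cap B(v)$'' is slightly off: the TZ stopping invariant gives that $w$ is a \emph{pivot} of one endpoint and lies in the \emph{bunch} of the other; your routing tables still work since you explicitly store paths to pivots as well as to bunch members, but the statement as written should be corrected before it is used in the tie-breaking argument of point (ii).
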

We summarize our new results for directed graphs and compare them to the previous work in~\autoref{tab:roundtrip}.
\begin{table}[t]
    \centering
    \begin{tabular}{|c|c|c|c|c|}
         \hline
         Stretch & local storage & \makecell{Uses average\\local storage?} & Ref. & Comments \\ \hline\hline
         $3$ & $\Ot(n^{1/2})$ & yes & \cite{DBLP:conf/soda/CowenW99} & \\ \hline
         $3$ & $\Ot(n^{2/3})$ & no & \cite{DBLP:conf/soda/CowenW99} & \\ \hline
         $2^k-1$ & $\Ot(n^{1/k})$ & yes & \cite{DBLP:journals/jal/CowenW04} & \\ \hline
         $3$ & $\Ot(n^{1/2})$ & no & \cite{DBLP:journals/talg/RodittyTZ08} & \\ \hline
         $4k+\eps$ & $\Ot(\frac{1}{\eps}n^{1/k})$ & no & \cite{DBLP:journals/talg/RodittyTZ08} & $\eps > 0$\\ \hline
         $12+\eps$ & $\Ot(\frac{1}{\eps}n^{1/3})$ & no & \cite{DBLP:journals/talg/RodittyTZ08} & $\eps > 0$\\ \hline
         $7$ & $\Ot(n^{1/3})$ & no & \autoref{T-Compact-Directed-7-1/3}&  \\ \hline
         $2k-1$ & $\Ot(D_{hop}\cdot n^{1/k})$ & no &~\autoref{T-Compact-directed-bounded-diameter} & $D_{hop}$ is the hop diameter of $G$. \\ \hline
    \end{tabular}
    \caption{Compact roundtrip routing results in directed weighted graphs.\protect\footnotemark}
    \label{tab:roundtrip}
\end{table}
\footnotetext{poly-logarithmic factors are omitted.}

The rest of this paper is organized as follows. 
In~\autoref{S-Prel} we present some necessary preliminaries. 
In~\autoref{S-Over} we present an overview of our technical contributions and our new compact routing schemes. 
In~\autoref{S-roundtrip-undirected} we present our optimal compact roundtrip routing scheme for weighted undirected graphs.
In~\autoref{S-roundtrip directed} we present our two new compact roundtrip routing schemes for weighted directed graphs.
Finally, in~\autoref{S-amortized-undirected} we present our single message compact routing scheme for weighted undirected graphs that use average local storage.

\section{Preliminaries}\label{S-Prel}
% \subsection{Definitions}
Let $G = (V, E, w)$ be a weighted graph with $n$ vertices and $m$ edges, where $w: E \to \mathbb{R}^+$.
We consider both connected undirected graphs and strongly connected directed graphs\footnote{
If the graph is not connected (or not strongly connected), we add a dummy vertex and connect it with bi-directional edges of weight $\infty$ to every vertex of the graph.}.

Let the distance $d_G(u,v)$ from $u$ to $v$ be the length of the shortest path from $u$ to $v$ in $G$, where the length of a path is the sum of its edge weights, and let $P_G(u,v)$ be the shortest path from $u$ to $v$, $G$ is omitted when it is clear from context. The roundtrip distance $d(u \leftrightarrow v)$ is defined as $d(u,v) + d(v,u)$. Throughout this paper, we assume that for any two vertices $u$ and $v$, there exists a unique shortest path between them. In the case of multiple shortest paths of the same length, we break ties by selecting the path with the lexicographically smallest sequence of vertex identifiers.

Let $X \subseteq V$. The distance $d(u,X)$ from $u$ to $X$ is the distance between $u$ and the closest vertex to $u$ from $X$, that is, 
$d(u,X)= \min_{x\in X}(d(u,x))$. Similarly, the roundtrip distance from $u$ to $X$ is defined as $d(u\leftrightarrow X) = \min_{x\in X}(d(u\leftrightarrow x))$.
Let $p(u, X)=\arg \min_{x\in X}(d(u\leftrightarrow x))$ (ties are broken in favor of the vertex with a smaller identifier). 

Next, following the ideas of Thorup and Zwick~\cite{DBLP:journals/jacm/ThorupZ05}, we define bunches and clusters. 
Let $u\in V$ and let $X, Y\subseteq V$. 
The bunch of $u$ with respect to $X$ and $Y$ is defined as $B(u,X,Y)=\{ v\in X \mid d(u \leftrightarrow v) < d(u \leftrightarrow Y)\}$. The ball of $u$ with respect to $Y$ is defined as $B(u,Y)=\{ v\in V \mid d(u \leftrightarrow v) < d(u \leftrightarrow Y)\}$ (notice that $B(u,Y)=B(u,V,Y)$).
The cluster of $u$ with respect to $Y$ is defined as $C(u,Y)=\{ v\in V \mid d(u \leftrightarrow v)<d(v \leftrightarrow Y)\}$. 

The starting point in many algorithms, distance oracles, and compact routing schemes, and in particular in Thorup and Zwick~\cite{DBLP:conf/spaa/ThorupZ01} routing scheme, is a hierarchy of vertex sets  $A_0, A_1,\ldots, A_k$, where $A_0=V$, $A_k=\emptyset$, $A_{i+1}\subseteq A_i$ and $|A_{i}|=n^{1-i/k}$ for $0 \leq i \leq k-1$.

For every $0 \leq i \leq k-1$, the $i$-th pivot of $u$ is defined as 
$p_i(u) = p(u, A_i)$, and $h_i(u)$ is defined as $d(u,A_i)$.
The $i$-th bunch of $u$ is defined as 
$B_i(u) = B(u, A_i, A_{i+1})$.
The bunch of $u$ is defined as the union of its individual bunches, that is, 
$B(u) = \bigcup_{i=0}^{k-1} B_i(u)$.
The cluster of a vertex $w \in A_i \setminus A_{i+1}$ is defined as the cluster of $w$ with respect to the set $A_{i+1}$, that is, 
$C(w) = C(w, A_{i+1})$.
We denote by $[k]$ the set $\{0,1,2,\ldots,k-1\}$.

In the following lemma, we provide an upper bound for the size of $B(u)$, which is $O(kn^{1/k})$, as demonstrated by Thorup and Zwick~\cite{DBLP:journals/jacm/ThorupZ05}.
\begin{lemma}[\cite{DBLP:journals/jacm/ThorupZ05, DBLP:conf/icalp/RodittyTZ05}]\label{L-TZ-Size}
    Given an integer parameter $k\geq 2$, we can compute
    % in $\Ot(n)$ time 
    sets $A_1,\ldots, A_{k-1}$, such that $|A_i|=O(n^{1-i/k})$, for every $1\le i \le k-1$.
    For every $i\in [k]$ the size of $B_i(u)$ is $\Ot(n^{1/k})$.
\end{lemma}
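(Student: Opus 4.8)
The plan is to recall the random-sampling construction of Thorup and Zwick~\cite{DBLP:journals/jacm/ThorupZ05} and verify the two claims. Set $A_0=V$ and, for $0\le i\le k-2$, obtain $A_{i+1}$ from $A_i$ by keeping each vertex of $A_i$ independently with probability $p:=n^{-1/k}$; finally set $A_k:=\emptyset$. Then $\EE[|A_i|]=n\cdot p^{i}=n^{1-i/k}$, which already gives the size bound in expectation, and a Chernoff bound together with a union bound over the $k-1$ levels (resampling a level that overshoots, or invoking the deterministic variant of~\cite{DBLP:journals/jacm/ThorupZ05}) upgrades this to the worst-case guarantee $|A_i|=O(n^{1-i/k})$.

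For the bunch bound, fix $u\in V$ and $i\in\{0,\dots,k-1\}$. If $i=k-1$ then $A_k=\emptyset$, so $B_{k-1}(u)=A_{k-1}$ and the claim follows from the size bound; assume henceforth $i\le k-2$ and condition on $A_0,\dots,A_i$, so that the only remaining randomness affecting $B_i(u)=B(u,A_i,A_{i+1})$ is the independent $p$-sampling that carves $A_{i+1}$ out of $A_i$. List the vertices of $A_i$ as $v_1,v_2,\dots,v_s$ in non-decreasing order of roundtrip distance $d(u\leftrightarrow\cdot)$ from $u$ (breaking ties as in the preliminaries); crucially this ordering is determined by the fixed graph and is independent of the sampling. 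Let $v_\ell$ be the first vertex on the list that lands in $A_{i+1}$ (and set $\ell:=s+1$ if none does). Since $A_{i+1}\subseteq A_i$, we get $d(u\leftrightarrow A_{i+1})=d(u\leftrightarrow v_\ell)$ when $\ell\le s$, so every $v$ with $d(u\leftrightarrow v)<d(u\leftrightarrow A_{i+1})$ precedes $v_\ell$ in the list; and when $\ell=s+1$ we have $A_{i+1}=\emptyset$, $d(u\leftrightarrow A_{i+1})=\infty$ and $B_i(u)=A_i$. In either case $|B_i(u)|\le \ell-1$.

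Since each $v_j$ enters $A_{i+1}$ independently with probability $p$, the quantity $\ell-1$ is at most the number of failures before the first success in an infinite Bernoulli$(p)$ sequence; hence $\Pr[\,|B_i(u)|\ge t\,]\le(1-p)^{t}$ for every $t\ge0$ and
\[
\EE\big[\,|B_i(u)|\,\big]\;\le\;\EE[\ell-1]\;\le\;\frac{1-p}{p}\;<\;n^{1/k},
\]
which establishes $\EE[|B_i(u)|]=O(n^{1/k})$. A statement holding simultaneously for all $u$ and all $i$ follows from the tail bound $\Pr[\,|B_i(u)|\ge t\,n^{1/k}\,]\le(1-p)^{t\,n^{1/k}}\le e^{-t}$ with $t=\Theta(\log n)$ and a union bound over the $n\cdot(k-1)$ pairs $(u,i)$, giving $|B_i(u)|=\Ot(n^{1/k})$ with high probability; the clean worst-case $O(n^{1/k})$ bound without the logarithmic loss is what the derandomized construction of~\cite{DBLP:journals/jacm/ThorupZ05} provides.

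I expect the only real obstacle to be this last point: the ordering/geometric argument yields the per-vertex bound only in expectation, and promoting it to a bound that holds for every vertex and every level at once either costs a $\log n$ factor through a union bound or appeals to the derandomization of Thorup and Zwick. Everything else — the construction, the expected size of each $A_i$, and the reduction of $|B_i(u)|$ to the position of the first sampled vertex in the distance ordering from $u$ — is routine, and the argument is insensitive to whether one uses one-way distances or the roundtrip distances $d(u\leftrightarrow\cdot)$, since all that is used is that these values induce a fixed total preorder on $V$.
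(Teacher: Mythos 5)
The paper does not prove this lemma; it simply cites it from Thorup--Zwick~\cite{DBLP:journals/jacm/ThorupZ05}, so there is no internal proof to compare against. Your blind reconstruction is exactly the standard Thorup--Zwick argument: geometric sampling of the hierarchy with $p=n^{-1/k}$, and the observation that, conditioning on $A_i$ and ordering its elements by (roundtrip) distance from $u$, the bunch $B_i(u)$ is dominated by the waiting time to the first success in a Bernoulli$(p)$ sequence, whence $\EE[|B_i(u)|]\le(1-p)/p<n^{1/k}$. That part is correct and is precisely how the cited source proceeds, and your closing remark that the argument only needs the distance function to induce a fixed total preorder on $V$ is the right reason why the lemma transfers verbatim to the roundtrip setting used in this paper.

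You also correctly flag the one genuine subtlety: the sampling argument gives the per-vertex bound only in expectation, and upgrading it to a bound holding simultaneously for all $(u,i)$ via Chernoff and a union bound costs a $\log n$ factor. The lemma as stated in the paper writes $O(n^{1/k})$, but the paper's downstream uses (routing-table sizes, label sizes) are all stated in $\Ot$ notation, so the polylogarithmic loss is harmless there; alternatively one appeals, as you do, to the derandomized construction in~\cite{DBLP:journals/jacm/ThorupZ05}. In short, your proof is correct, matches the cited source's approach, and appropriately calls out the expectation-versus-worst-case issue that the lemma statement glosses over.
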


In the following lemma, we provide an upper bound for the size of $C(u)$, which is $O(n^{1/k})$, as demonstrated by Thorup and Zwick~\cite{DBLP:conf/spaa/ThorupZ01}.
\begin{lemma}[\cite{DBLP:conf/spaa/ThorupZ01}]\label{L-A-center}
Given a parameter $p$, we can compute a set $A$ of size $\Ot(np)$ 
such that, $|C(w,A)|=O(1/p)$, for every vertex $w\in V \setminus A$, and $|B(v,V,A)|=O(1/p)$ for every $v\in V$. 
\end{lemma}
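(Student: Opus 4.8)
The plan is to follow the random-sampling-with-clean-up paradigm of Thorup and Zwick~\cite{DBLP:conf/spaa/ThorupZ01}. I would first obtain $A$ by repeating $\Theta(\log n)$ times the step of adding each vertex to $A$ independently with probability $p$; a Chernoff bound gives $|A| = \Ot(np)$ with high probability, and this already secures the bunch bound. To see this, fix $v$ and order all vertices $v = u_0, u_1, u_2, \dots$ by roundtrip distance $d(v \leftrightarrow u_j)$ (ties broken by the fixed rule); then $u_j \in B(v, V, A)$ precisely when none of $u_0, \dots, u_j$ was ever placed in $A$, so $\Pr[\,|B(v,V,A)| \ge t\,] = \Pr[u_0,\dots,u_{t-1}\notin A] \le (1-p)^{\Theta(t\log n)}$, which is below $n^{-2}$ once $t = \Theta(1/p)$. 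A union bound over $v$ gives $|B(v,V,A)| = O(1/p)$ for all $v$, and this is preserved under any later enlargement of $A$, since enlarging $A$ only shrinks bunches.

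The clusters are the real work, because a random sample alone need not make every cluster small: one vertex $w$ can be the center of many disjoint tiny \emph{witness balls}, so that $\EE[|C(w,A)|]$ can be $\Omega(n)$ for that $w$, even though the average over $w$ is $O(1/p)$ (using $v \in C(w,A) \iff w \in B(v,V,A)$, hence $\sum_w |C(w,A)| = \sum_v |B(v,V,A)|$). So I would append a clean-up loop: while some $w \notin A$ has $|C(w,A)| > c/p$ for a suitable constant $c$, enlarge $A$ by adding a random sample of the set $U$ of currently overloaded centers — and, once $U$ is small, by adding all of $U$ outright. Since enlarging $A$ only shrinks clusters and makes $C(w,A)=\emptyset$ the moment $w$ enters $A$, the loop is monotone and, on termination, certifies $|C(w,A)| = O(1/p)$ for every $w \notin A$ (and trivially for $w \in A$).

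The main obstacle is bounding the cost of the clean-up: showing that it finishes within $O(\log n)$ rounds and adds only $\Ot(np)$ vertices in total, so that $|A|$ stays $\Ot(np)$. Its heart is a lemma that, as long as $U$ is not yet small, $|U|$ drops by a constant factor in expectation per sampling round. Proving this calls for ordering an overloaded cluster $C(w,A)$ by roundtrip distance from $w$ and showing, via the triangle inequality for the roundtrip metric $d(\cdot\leftrightarrow\cdot)$, that a random sample of the appropriate set is likely to hit $C(w,A)$ in a way that forces its size below the threshold — this is exactly where the metric geometry and the precise value of $c$ enter. Finally, since all the invoked events hold with high probability, a set $A$ with $|A| = \Ot(np)$, all bunches $O(1/p)$, and all clusters $O(1/p)$ exists and can be found in polynomial time (by resampling, or by the method of conditional expectations), which is the lemma.
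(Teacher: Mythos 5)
The paper offers no proof of this lemma --- it is cited directly from Thorup and Zwick [SPAA'01], so there is no in-paper argument to compare against. Your reconstruction matches the cited \texttt{center} algorithm: an initial random sample with a $\Theta(\log n)$ boost secures the bunch bound with high probability, this bound is monotone under enlargement of $A$, and the iterative resampling from the set $U$ of overloaded centers drives all cluster sizes down while only adding $\Ot(np)$ vertices. The duality $v \in C(w,A) \iff w \in B(v,V,A)$, the observation that the average cluster size alone does not control the worst case, and the halving-per-round termination structure are all the right ingredients.

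One point you single out is, however, off: the triangle inequality for $d(\cdot\leftrightarrow\cdot)$ is \emph{not} what makes the halving lemma work, and neither does ``metric geometry'' enter. The lemma is purely order-theoretic. Fix any vertex $v$ and rank the current overloaded centers $w \in U$ by $d(v\leftrightarrow w)$; the event $v \in C(w,A')$ forces every lower-ranked center (and every element of $A$ already closer to $v$) to have missed the sample, so the number of $w\in U$ whose new cluster still contains $v$ is stochastically dominated by a geometric random variable with mean $|U|/s$. Summing over $v$ and applying Markov bounds the expected number of still-overloaded centers by a constant fraction of $|U|$. This argument works for any distance function, not only a metric, which is precisely why it transfers painlessly from the undirected setting of Thorup--Zwick to roundtrip distances. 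Your phrasing also has the ordering inverted: one ranks the \emph{centers} $w$ by distance from a fixed $v$, not the elements of $C(w,A)$ by distance from $w$, and the sample lands in $U$, not in $C(w,A)$. Neither slip breaks the overall plan, but a filled-in proof should follow the ordering-by-$v$ argument to obtain the contraction of $|U|$.
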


Let $S \subseteq V$. We define $T_{out}(u, S)$ as a tree containing the directed shortest paths from $u$ to all the vertices in $S$, and $T_{in}(u, S)$ as a tree containing the directed shortest paths from all the vertices in $S$ to $u$. When $u$ is clear from context, we omit it, for example, we use $T(C(u))=T(u, C(u))$, and $T(B(u))=T(u,B(u))$. 
Note that it is possible for $|T_{out}(u, S)|$ to be bigger than $|S|$ in cases where the shortest path from $u$ to a vertex in $S$ passes through a vertex not in $S$. \footnote{We denote with $|H|$ the number of edges in $H$, i.e. $|H|=|E(H)|$.}
Let $T(u, X) = T_{in}(u, X) \cup T_{out}(u, X)$, as defined in~\cite{DBLP:journals/talg/RodittyTZ08}, when $u$ is clear from context we omit it and write $T(X)$. Notice that in undirected graphs, since $T_{in}(u,X) = T_{out}(u,X)$, we have that $T(u,X) = T_{in}(u,X) = T_{out}(u,X)$.
Next, we show that if $S$ is a ball, i.e. $S = B(u, X) = B(u,V,X)$ for some set $X$, then $|T_{out}(u,S)| = |S|$ and $|T_{in}(u,S)| = |S|$, and therefore $|T(u,S)| \le 2|S|$.

\begin{lemma}[\cite{DBLP:journals/talg/RodittyTZ08}]
    \label{L-Ball-can-be-tree}
    $|T_{out}(u,B(u,X))| = |B(u,X)|$ and $|T_{in}(u,B(u,X))| = |B(u,X)|$.
\end{lemma}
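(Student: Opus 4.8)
The plan is to show that the out-tree and in-tree spanned by a bunch $B(u,X)$ contain no ``extra'' vertices, i.e.\ every vertex lying on a shortest path realizing the distance from $u$ to some $v\in B(u,X)$ is itself a member of $B(u,X)$. The key structural fact I would exploit is the defining inequality of a bunch: $v\in B(u,X)$ precisely when $d(u\leftrightarrow v) < d(u\leftrightarrow Y)$ (or, in the two-argument notation $B(u,X)$ used here, when $d(u\leftrightarrow v)$ is strictly below the relevant threshold $d(u\leftrightarrow Y)$). The observation that makes the argument go through is monotonicity of roundtrip distance along a shortest path through $u$: if $z$ lies on $P(u,v)$, then $d(u,z)\le d(u,v)$ and $d(z,u)\le d(v,u)$ is \emph{not} automatic in a directed graph, so I must be careful. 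In fact the right statement uses the shortest path \emph{tree} structure: $z$ on $P_{out}(u,v)$ gives $d(u,z)\le d(u,v)$, and I pair this with the fact that $z$ also satisfies $d(z,u)\le d(z,v)+d(v,u)$; but that does not immediately bound $d(u\leftrightarrow z)$ by $d(u\leftrightarrow v)$.

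So the actual argument I would carry out is the one from~\cite{DBLP:journals/talg/RodittyTZ08}: it suffices to observe that $T_{out}(B(u,X))$ can be taken to be the restriction of a single-source shortest-path tree $T_{out}(u,\cdot)$ rooted at $u$, and then to show this restriction is already a subtree spanning exactly $B(u,X)$. Concretely, let $v\in B(u,X)$ and let $z$ be the vertex immediately preceding $v$ on $P(u,v)$ (so $z$ is $v$'s parent in the shortest-path-out-tree from $u$). I claim $z\in B(u,X)$. Since $z$ is on $P(u,v)$ we have $d(u,z)<d(u,v)$ (strict, as edge weights are positive). For the return direction, I use that $v=p(u,X)$-type pivot reasoning together with the fact that the threshold defining the bunch is a roundtrip distance to a \emph{set} $Y$ that is fixed; the point is that $d(u\leftrightarrow z)\le d(u,z)+d(z,v)+d(v,u)=d(u,v)+d(v,u)=d(u\leftrightarrow v)$, using $d(u,z)+d(z,v)=d(u,v)$ from $z\in P(u,v)$. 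Wait --- I need strictness or at least $\le$; $\le$ suffices because the bunch is defined by a strict inequality $d(u\leftrightarrow v)<d(u\leftrightarrow Y)$, so $d(u\leftrightarrow z)\le d(u\leftrightarrow v)<d(u\leftrightarrow Y)$ gives $z\in B(u,X)$. Iterating from $v$ back to $u$, every ancestor of $v$ in the shortest-path-out-tree lies in $B(u,X)$, hence $T_{out}(B(u,X))\subseteq B(u,X)$, and since trivially $B(u,X)\subseteq T_{out}(B(u,X))$, we get $|T_{out}(B(u,X))|=|B(u,X)|$. The argument for $T_{in}$ is symmetric: take the in-tree rooted at $u$; if $v\in B(u,X)$ and $z$ is the successor of $v$ on $P(v,u)$, then $d(v,z)+d(z,u)=d(v,u)$ and $d(u,z)\le d(u,v)+d(v,z)$, so $d(u\leftrightarrow z)=d(u,z)+d(z,u)\le d(u,v)+d(v,z)+d(z,u)=d(u,v)+d(v,u)=d(u\leftrightarrow v)<d(u\leftrightarrow Y)$, hence $z\in B(u,X)$; iterating gives $T_{in}(B(u,X))\subseteq B(u,X)$ and the equality.

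The step I expect to be the only real subtlety is verifying that the single-source out-tree and in-tree rooted at $u$ are \emph{consistent} with the unique-shortest-path tie-breaking convention fixed in the preliminaries, so that ``the parent of $v$ on $P(u,v)$'' and ``$P(u,z)$ concatenated with the edge $(z,v)$ equals $P(u,v)$'' are both legitimate --- i.e.\ subpaths of the canonical shortest path are themselves canonical. Given the lexicographic tie-breaking rule, a prefix of the lexicographically-smallest shortest $u$--$v$ path is the lexicographically-smallest shortest $u$--$z$ path, so this holds, and the telescoping distance identities $d(u,z)+d(z,v)=d(u,v)$ used above are exact. Everything else is the one-line roundtrip triangle-inequality estimate combined with the strictness in the bunch definition, so no further calculation is needed.
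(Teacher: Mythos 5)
Your proof is correct and takes essentially the same approach as the paper: for any $w$ on $P(u,v)$ with $v\in B(u,X)$, the roundtrip triangle inequality gives $d(u\leftrightarrow w)\le d(u,w)+d(w,v)+d(v,u)=d(u\leftrightarrow v)<d(u\leftrightarrow X)$, hence $w\in B(u,X)$, and the in-tree case is symmetric. The extra discussion of tie-breaking consistency is harmless but not needed; the paper simply applies the inequality to an arbitrary $w\in P(u,v)$ without iterating through tree ancestors.
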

\begin{proof}
Let $u \in V$, $v \in B(u, X)$, and let $w \in P(u,v)$. We will show that $w \in B(u, X)$. Since all the vertices in $T_{out}$ are on the shortest path from $u$ to a vertex in $B(u, X)$, we obtain that $|T_{out}(u, B(u, X))| \le |B(u, X)|$, as required.
The proof for the in-ball is identical for reversed paths.
From the triangle inequality, we know that 
\[
d(u \leftrightarrow w) = d(u, w) + d(w, u) \stactri\le d(u, w) + d(w, v) + d(v, u) = d(u \leftrightarrow v) < d(u\leftrightarrow X),
\]
where the last inequality follows from the fact that $v \in B(u, X)$.\footnote{$\stactri \le$ denotes an inequality that follows from the triangle inequality.}
Since $d(u \leftrightarrow w) < d(u\leftrightarrow X)$ we get that $w\in B(u,X)$, as required.
\end{proof}

The following lemma was originally proven in~\cite{DBLP:journals/jacm/ThorupZ05} for any metric space and therefore holds also for roundtrip distances. For completeness, we prove the lemma for roundtrip distances.
\begin{lemma} \label{L-Bound-p-i-d}
     Let $u,v\in V$ and let $0 < i \leq k-1$. If $p_j(u) \notin B(v)$ and $p_j(v) \notin B(u)$ for every $0 < j < i$, then
\[
d(u \leftrightarrow p_i(u)) \leq i \cdot d(u \leftrightarrow v) \quad \text{and} \quad d(v \leftrightarrow p_i(v)) \leq i \cdot d(u \leftrightarrow v).
\]
\end{lemma}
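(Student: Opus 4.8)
The plan is to prove both inequalities simultaneously by induction on $i$, imitating the analysis of the Thorup--Zwick approximate distance oracle query but carried out with roundtrip distances throughout. The only structural facts needed are the triangle inequality for roundtrip distances, $d(x \leftrightarrow z) \le d(x \leftrightarrow y) + d(y \leftrightarrow z)$ (already used implicitly in \autoref{L-Ball-can-be-tree}), and the nesting $A_k = \emptyset \subseteq A_{k-1} \subseteq \cdots \subseteq A_0 = V$, which makes the sequences $a_j := d(u \leftrightarrow p_j(u)) = d(u \leftrightarrow A_j)$ and $b_j := d(v \leftrightarrow p_j(v)) = d(v \leftrightarrow A_j)$ non-decreasing with $a_0 = b_0 = 0$. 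Write $D := d(u \leftrightarrow v)$.

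The elementary step I would isolate first is this: if $x \in A_\ell$ and $x \notin B(z)$, then since $x \in A_\ell \subseteq A_{\ell-1}$ we have $x \notin B_{\ell-1}(z) = B(z, A_{\ell-1}, A_\ell)$, which by the definition of a bunch forces $d(z \leftrightarrow x) \ge d(z \leftrightarrow A_\ell)$. Instantiating this with $x = p_j(v) \in A_j$ and $z = u$ (using the hypothesis $p_j(v) \notin B(u)$), and then bounding $d(u \leftrightarrow p_j(v)) \le D + d(v \leftrightarrow p_j(v)) = D + b_j$ by the triangle inequality, gives $a_{j+1} \le b_j + D$; the mirror instantiation $x = p_j(u)$, $z = v$ gives $b_{j+1} \le a_j + D$. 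Starting from $a_0 = b_0 = 0$ and alternately applying these two inequalities peels off one pivot level at a time — precisely the role swap between $u$ and $v$ in the Thorup--Zwick query — and yields $a_j \le j\,D$ and $b_j \le j\,D$ for every level $j$ up to $i$, so in particular $a_i = d(u \leftrightarrow p_i(u)) \le i\,D$ and $b_i = d(v \leftrightarrow p_i(v)) \le i\,D$, which is the claim. The base level ($j=0$, using $p_0(u)=u$, $p_0(v)=v \in A_0 = V$ lying outside the respective bunches) supplies $a_1 \le D$ and $b_1 \le D$ and starts the chain.

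The part that needs care, and that I expect to be the main work, is the bookkeeping of which pivots are certified to lie outside which bunches at each level: the chain certifying $a_i \le i\,D$ alternates between facts of the form $p_\ell(v) \notin B(u)$ and $p_\ell(u) \notin B(v)$ for $\ell$ descending from $i-1$ to $0$, and the parities must line up so that it terminates at level $0$. One therefore has to check that every link of the chain is supplied either by the hypotheses of the lemma or by the trivial observation that $p_i(u)$ (resp. $p_i(v)$), belonging to $A_i \subseteq A_\ell$ for all $\ell \le i$, is also outside $B_\ell$ of the other vertex at every lower level $\ell$. Once the alternation is set up correctly, both stated bounds follow; since every estimate used is purely metric, the argument transfers verbatim from the shortest-path metric of \cite{DBLP:journals/jacm/ThorupZ05} to roundtrip distances.
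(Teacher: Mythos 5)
Your overall plan --- unroll the Thorup--Zwick query argument as a chain of step inequalities on the sequences $a_j = d(u \leftrightarrow p_j(u))$ and $b_j = d(v \leftrightarrow p_j(v))$ --- is exactly the idea behind the paper's proof, which phrases the same thing as a simultaneous induction on $i$ showing both $a_i \le i\,D$ and $b_i \le i\,D$ together. So the route is the same; the paper just keeps the two statements coupled at each level rather than unrolling a single chain, which hides the parity bookkeeping you flag as the delicate part.

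However, your ``elementary step'' contains a genuine off-by-one error that breaks the chain. You state: if $x \in A_\ell$ and $x \notin B(z)$, then (via $x \in A_{\ell-1}$ and $x \notin B_{\ell-1}(z)$) one gets $d(z \leftrightarrow x) \ge d(z \leftrightarrow A_\ell)$. This conclusion is true but vacuous: it already follows from $x \in A_\ell$ and the definition $d(z \leftrightarrow A_\ell) = \min_{y \in A_\ell} d(z \leftrightarrow y)$, without using $x \notin B(z)$ at all. Applying it to $x = p_j(v) \in A_j$, $z = u$ gives only $a_j \le d(u \leftrightarrow p_j(v)) \le D + b_j$, whereas you assert it gives $a_{j+1} \le b_j + D$; that stronger conclusion does not follow from the step you wrote. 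The correct move is to use the bunch at level $\ell$ itself, not $\ell-1$: since $x \in A_\ell$ and $x \notin B_\ell(z) = \{y \in A_\ell : d(z \leftrightarrow y) < d(z \leftrightarrow A_{\ell+1})\}$, one gets $d(z \leftrightarrow x) \ge d(z \leftrightarrow A_{\ell+1})$. Instantiated with $x = p_j(v)$, $\ell = j$, $z = u$, this yields $a_{j+1} = d(u \leftrightarrow A_{j+1}) \le d(u \leftrightarrow p_j(v)) \le D + b_j$, which is the step you actually need (and is the inequality the paper uses, written as $d(u \leftrightarrow p_i(u)) \le d(u \leftrightarrow p_{i-1}(v))$). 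With that fix, your chain argument goes through and is equivalent to the paper's induction. Your closing remark that the hypothesis $p_i(u) \notin B(v)$ together with $A_i \subseteq A_\ell$ supplies the lower-level links of the chain does not work as stated --- that hypothesis constrains only the single vertex $p_i(u)$, not $p_\ell(u)$ for $\ell < i$ --- but to be fair, the paper's own proof is equally casual about this (it invokes the symmetric fact $p_{i-1}(u) \notin B(v)$ via a ``the proof for $v$ is identical'' without explicitly justifying it from the stated hypotheses); in the places where the lemma is used, both families of facts $p_j(u) \notin B(v)$ and $p_j(v) \notin B(u)$ do hold for all relevant $j$, which is what the induction really requires.
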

\begin{proof}
We prove the claim by induction for every $0\leq i\leq \ell$. For $i = 0$, the lemma holds since $d(v \leftrightarrow v) = 0$ and $d(u \leftrightarrow u) = 0$.
Next, we prove the induction step. We assume the correctness of the claim for $i - 1$ and show its correctness for $i$. Therefore, $d(v \leftrightarrow p_{i - 1}(v)) \leq (i - 1) \cdot d(u \leftrightarrow v)$ and $d(u \leftrightarrow p_{i - 1}(u)) \leq (i - 1) \cdot d(u \leftrightarrow v)$. Without loss of generality, we show that $d(u \leftrightarrow p_i(u)) \leq i \cdot d(u \leftrightarrow v)$. The proof for $v$ is identical.

Since $i \leq \ell$, it follows that $i - 1 < \ell$. Therefore, from the lemma's assumptions, we know that $p_{i - 1}(v) \notin B(u)$. From the definition of $B(u)$, it follows that 
$d(u \leftrightarrow p_i(u)) \leq d(u \leftrightarrow p_{i - 1}(v))$.
From the triangle inequality, it follows that 
$d(u \leftrightarrow p_{i - 1}(v)) \leq d(u \leftrightarrow v) + d(v \leftrightarrow p_{i - 1}(v))$. Recall that from the induction assumption we know that  
$d(v \leftrightarrow p_{i - 1}(v)) \leq (i - 1) \cdot d(u \leftrightarrow v)$. Therefore, we get that:
\[
d(u \leftrightarrow p_i(u)) \leq d(u \leftrightarrow p_{i - 1}(v)) \leq d(u \leftrightarrow v) + d(v \leftrightarrow p_{i - 1}(v)) \leq d(u \leftrightarrow v) + (i - 1) \cdot d(u \leftrightarrow v) = i \cdot d(u \leftrightarrow v),
\]
as required.
\end{proof}

Next, we show that if $p_{i - 1}(v) \notin B(u)$, then 
$d(v \leftrightarrow p_i(v)) \leq d(v \leftrightarrow p_{i - 1}(v)) + 2 \cdot d(u \leftrightarrow v)$.
\begin{lemma}
    \label{L-Bound-delta-2d}
    Let $u,v\in V$, if $p_{i-1}(v)\notin B(u)$ then $d(v\leftrightarrow p_{i}(v)) \le d(v\leftrightarrow p_{i-1}(v)) + 2d(u \leftrightarrow v)$
\end{lemma}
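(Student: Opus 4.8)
The plan is to mirror the structure of the proof of \autoref{L-Bound-p-i-d}, but keep the slack explicit rather than unrolling the induction. The key observation is that $p_{i-1}(v) \notin B(u)$ is exactly the hypothesis that makes the membership argument go through: by the definition of the bunch $B(u)$, since $p_{i-1}(v) \in A_{i-1}$ but $p_{i-1}(v) \notin B(u)$, we must have $d(u \leftrightarrow p_{i-1}(v)) \ge d(u \leftrightarrow p_{i-1}(u)) = d(u \leftrightarrow A_{i-1})$. Actually, the cleaner route is to bound $d(v \leftrightarrow p_i(v))$ directly: since $p_{i-1}(v) \notin B(u)$, the definition of $B(u) = \bigcup_j B_j(u)$ with $B_{i-1}(u) = B(u, A_{i-1}, A_i)$ gives $d(u \leftrightarrow p_{i-1}(v)) \ge d(u \leftrightarrow A_i) = d(u \leftrightarrow p_i(u))$.

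From there the argument is two applications of the triangle inequality. First, $d(u \leftrightarrow p_i(u)) \le d(u \leftrightarrow p_{i-1}(v))$ as just noted. Second, I would chain through $v$: $d(v \leftrightarrow p_i(v)) \le d(v \leftrightarrow p_i(u)) \le d(v \leftrightarrow u) + d(u \leftrightarrow p_i(u))$, where the first step uses that $p_i(v)$ is the roundtrip-closest vertex of $A_i$ to $v$ (so $d(v \leftrightarrow p_i(v)) = d(v \leftrightarrow A_i) \le d(v \leftrightarrow p_i(u))$ since $p_i(u) \in A_i$), and the second step is the triangle inequality for roundtrip distances (which holds because roundtrip distance is a metric, as used throughout the preliminaries). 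Combining, $d(v \leftrightarrow p_i(v)) \le d(u \leftrightarrow v) + d(u \leftrightarrow p_i(u)) \le d(u \leftrightarrow v) + d(u \leftrightarrow p_{i-1}(v))$, and then one more triangle inequality $d(u \leftrightarrow p_{i-1}(v)) \le d(u \leftrightarrow v) + d(v \leftrightarrow p_{i-1}(v))$ yields the claimed bound $d(v \leftrightarrow p_i(v)) \le d(v \leftrightarrow p_{i-1}(v)) + 2 d(u \leftrightarrow v)$.

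I expect no real obstacle here; the only thing to be careful about is invoking the definitions correctly. Specifically, I need $p_{i-1}(v) \in A_{i-1}$ (immediate from $p_{i-1}(v) = p(v, A_{i-1})$) so that "$p_{i-1}(v) \notin B(u)$" forces $d(u \leftrightarrow p_{i-1}(v)) \ge d(u \leftrightarrow A_i)$ via the bunch definition $B_{i-1}(u) = \{x \in A_{i-1} : d(u \leftrightarrow x) < d(u \leftrightarrow A_i)\}$, and I need $d(u \leftrightarrow A_i) = d(u \leftrightarrow p_i(u))$ (immediate from $p_i(u) = p(u, A_i)$). The rest is just three triangle inequalities assembled in the right order, so the write-up should be short — essentially four displayed inequalities in a single chain.
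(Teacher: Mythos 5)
Your proof is correct and follows essentially the same route as the paper: the same chain $d(v \leftrightarrow p_i(v)) \le d(v \leftrightarrow p_i(u)) \le d(u \leftrightarrow v) + d(u \leftrightarrow p_i(u)) \le d(u \leftrightarrow v) + d(u \leftrightarrow p_{i-1}(v)) \le 2d(u \leftrightarrow v) + d(v \leftrightarrow p_{i-1}(v))$, with the middle step justified by the bunch definition exactly as the paper does (implicitly). You simply unpack the bunch argument ($p_{i-1}(v) \in A_{i-1} \setminus B_{i-1}(u)$ forces $d(u \leftrightarrow p_{i-1}(v)) \ge d(u \leftrightarrow A_i)$) a bit more explicitly, which is a reasonable clarification but not a different method.
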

\begin{proof}
    From the definition of $p_i(v)$, we know that it is the closest vertex to $v$ in $A_i$, and since $p_i(u)\in A_i$ we get that  
    $d(v \leftrightarrow p_i(v)) \leq d(v \leftrightarrow p_i(u))$.
    From the triangle inequality, it follows that
    $d(v \leftrightarrow p_i(u)) \leq d(v \leftrightarrow u) + d(u \leftrightarrow p_i(u))$.
    From the lemma assumption, we know that $p_{i - 1}(v) \notin B(u)$. Therefore,    $d(u \leftrightarrow p_i(u)) \leq d(u \leftrightarrow p_{i - 1}(v))$.
    From the triangle inequality, it follows that
    $d(u \leftrightarrow p_{i - 1}(v)) \leq d(u \leftrightarrow v) + d(v\leftrightarrow p_{i - 1}(v))$.
    Overall, we get that:
    \begin{align*}
        d(v \leftrightarrow p_i(v)) &\leq d(v \leftrightarrow p_i(u)) \leq d(v \leftrightarrow u) + d(u \leftrightarrow p_i(u)) \leq d(v \leftrightarrow u) + d(u \leftrightarrow p_{i - 1}(v)) 
        \\&\leq d(v \leftrightarrow u) + d(u \leftrightarrow v) + d(v\leftrightarrow p_{i - 1}(v)) = 2d(u \leftrightarrow v) + d(v, p_{i - 1}(v)),
    \end{align*}
    as required.
\end{proof}

The following lemma holds only for \textit{undirected} graphs and is presented in~\cite{DBLP:journals/jacm/ThorupZ05}.
\begin{lemma}[\cite{DBLP:journals/jacm/ThorupZ05}]\label{L-T(C(u))=C(u)-undirected}
    Let $G=(V, E, w)$ be a weighted \textit{undirected} graph.
    Let $v\in A_i\setminus A_{i+1}$, let $u \in C(v)$, and let $w\in P(v, u)$ then $w\in C(v)$.
\end{lemma}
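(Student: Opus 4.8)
The plan is to mimic the proof of~\autoref{L-Ball-can-be-tree}, since $C(v)$ is also defined by a strict inequality involving roundtrip distances, but now the comparison is against a quantity that depends on the member vertex rather than on a fixed set. Concretely, let $v\in A_i\setminus A_{i-1}$, let $u\in C(v)$, and let $w\in P(v,u)$. By definition of the cluster, $C(v)=C(v,A_i)$ (note the indexing: $v\in A_i\setminus A_{i-1}$, so the cluster of $v$ is taken with respect to $A_i$), hence $d(v\leftrightarrow u)<d(u\leftrightarrow A_i)$. I want to show $d(v\leftrightarrow w)<d(w\leftrightarrow A_i)$, which is exactly $w\in C(v)$.

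First I would bound $d(v\leftrightarrow w)$ from above. Since $w$ lies on the shortest path $P(v,u)$, in an undirected graph the shortest path decomposes as $P(v,u)=P(v,w)\cup P(w,u)$, so $d(v,w)+d(w,u)=d(v,u)$ and likewise $d(u,w)+d(w,v)=d(u,v)$; adding these gives $d(v\leftrightarrow w)+d(w\leftrightarrow u)=d(v\leftrightarrow u)$. In particular $d(v\leftrightarrow w)\le d(v\leftrightarrow u)$, with the slack being exactly $d(w\leftrightarrow u)$. This is the place where undirectedness is essential: in a directed graph $w$ on $P(v,u)$ need not lie on any short path back, so no such clean identity holds.

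Next I would transfer the upper bound on $d(u\leftrightarrow A_i)$ to a lower bound on $d(w\leftrightarrow A_i)$. Let $a=p_i(w)$ be the closest vertex of $A_i$ to $w$ in roundtrip distance, so $d(w\leftrightarrow A_i)=d(w\leftrightarrow a)$. By the triangle inequality for roundtrip distances, $d(u\leftrightarrow a)\le d(u\leftrightarrow w)+d(w\leftrightarrow a)$, and since $a\in A_i$ we have $d(u\leftrightarrow A_i)\le d(u\leftrightarrow a)\le d(u\leftrightarrow w)+d(w\leftrightarrow A_i)$. Combining:
\[
d(v\leftrightarrow w)=d(v\leftrightarrow u)-d(w\leftrightarrow u)<d(u\leftrightarrow A_i)-d(w\leftrightarrow u)\le d(w\leftrightarrow A_i),
\]
where the strict inequality is from $u\in C(v)$ and the final inequality is the rearrangement of the triangle bound just derived. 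Hence $d(v\leftrightarrow w)<d(w\leftrightarrow A_i)$, i.e. $w\in C(v)$, as desired.

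The only real subtlety — and the step I'd be most careful about — is making sure the path-decomposition identity $d(v\leftrightarrow w)+d(w\leftrightarrow u)=d(v\leftrightarrow u)$ genuinely uses undirectedness and the unique-shortest-path convention: $w\in P(v,u)$ must imply both $d(v,u)=d(v,w)+d(w,u)$ and, via symmetry of edge weights, $d(u,v)=d(u,w)+d(w,v)$. Once that identity is in hand the rest is just the roundtrip triangle inequality, exactly as in~\autoref{L-Ball-can-be-tree}, so I don't anticipate any further obstacle.
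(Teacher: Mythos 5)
Your proof is correct and uses the same two ingredients as the paper's proof --- the undirected path-decomposition identity $d(v\leftrightarrow w)+d(w\leftrightarrow u)=d(v\leftrightarrow u)$ and the roundtrip triangle inequality against the pivot set --- only arranged as a direct chain of inequalities where the paper argues by contradiction. The one discrepancy is notational: the lemma statement has a typo (per the paper's cluster definition and its own proof it should read $v\in A_i\setminus A_{i+1}$, giving $C(v)=C(v,A_{i+1})$), whereas you inferred $C(v)=C(v,A_i)$; the reasoning is identical under either reading.
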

\begin{proof}
    For the sake of contradiction, assume that $w \notin C(v)$. From the definition of $C(v)$, we have that $d(v\leftrightarrow w) \ge d(w\leftrightarrow A_{i+1})$. 
    
    By the triangle inequality, we have:
    $
    d(u\leftrightarrow A_{i+1}) \le d(u\leftrightarrow w) + d(w\leftrightarrow A_{i+1}).
    $
    Since $d(w\leftrightarrow A_{i+1}) \le d(v\leftrightarrow w)$, we get $
    d(u\leftrightarrow A_{i+1}) \le d(u\leftrightarrow w) + d(w\leftrightarrow A_{i+1}) \le d(u\leftrightarrow w) + d(v\leftrightarrow w)$.
    Since $w\in P(v,u)$, and the graph is undirected, it follows that $d(u\leftrightarrow w) + d(v\leftrightarrow w) = d(v\leftrightarrow u)$.
    Thus, 
    $
    d(u\leftrightarrow A_{i+1}) \le d(v\leftrightarrow u),
    $
    a contradiction to the fact that $u \in C(v)$.
\end{proof}

\subsection{General framework}
A routing scheme comprises two phases: a preprocessing phase and a routing phase.
In the preprocessing phase, the entire graph is accessible to the algorithm. The preprocessing algorithm computes for every $u\in V$ a routing table $\RT(u)$ and a label $\L(u)$. 
Each vertex $u$ saves $\RT(u)$  and $\L(u)$ in its local storage.
A routing scheme is considered \emph{compact} if the size of the routing tables is sub-linear in the number of vertices, i.e., $|\RT(u)| = o(n)$.

In the routing phase, the goal is to route a message from a source vertex $u$ to a destination vertex $v$. 
Specifically, the routing algorithm at the source vertex $u$ gets as input 
the routing table $\RT(u)$, and the labels $\L(u)$ and $\L(v)$.  Based on this input, the routing algorithm determines a neighboring vertex of $u$ to which the message should be forwarded. The routing algorithm can also attach a header to the message. 
When a vertex $w$ receives  a message, 
the routing algorithm at $w$ gets as input the routing table 
$\RT(w)$, and the labels $\L(w)$ and $\L(v)$ (and possibly a header). Based on this input, the routing algorithm determines a neighboring vertex of $w$ to which the message should be forwarded. 
The message is routed from a vertex to one of its neighbors until the message reaches its destination vertex $v$.

We denote by $\hat{d}(u,v)$ the distance that a message whose source vertex is  $u$ and whose destination vertex is $v$ traverses from $u$ to $v$. 
 The stretch of the routing scheme is defined as $\max_{u,v\in V}(\frac{\hat{d}(u,v)}{d(u,v)})$.

Several variants of compact routing schemes exist. 
In a \textit{labeled} routing scheme, the preprocessing algorithm can assign labels to the vertices. In a \textit{fixed-port} routing scheme, 
the order of the neighbors of each vertex is fixed and cannot be changed by 
the preprocessing algorithm.
In this work, we focus on labeled, fixed-port compact routing schemes.  

\subsection{Routing in trees}
An essential ingredient in our compact routing schemes for general graphs is the following compact routing scheme for trees.
Given a tree $T$, the preprocessing algorithm assigns a label $\L(v, T)$ to every vertex $v$ in $T$. The routing algorithm then routes a message from a source vertex $u$ to a destination vertex $v$ on the shortest path from $u$ to $v$ in $T$.
Thorup and Zwick~\cite{DBLP:conf/spaa/ThorupZ01} presented a tree routing scheme that uses only vertex labels of size $(1 + o(1))\log n$ and no routing tables.
In the fixed-port model, however, the label size increases to $O(\log^2 n)$.
A similar scheme was presented by Fraigniaud and Gavoille~\cite{DBLP:conf/stacs/FraigniaudG02}.
The following lemma outlines the known results for tree compact routing schemes.
\begin{lemma}[\cite{DBLP:conf/stacs/FraigniaudG02, DBLP:conf/spaa/ThorupZ01}]
    \label{T-Route-In-Trees}
    Let $T=(V, E)$ be an undirected tree on $n$ vertices with each edge $e\in E$ assigned a unique $O(\log n)$-bit port number. Then, it is possible to efficiently assign each vertex $v\in V$ an $O(\log^2n/\log\log n)$-bit label, denoted $label(v)$, such that if $u,v\in V$, then given $label(u)$ and $label(v)$, and nothing else, it is possible to find in constant time, the port number assigned to the first edge on the path in $T$ from $u$ to $v$.
\end{lemma}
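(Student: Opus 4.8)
The plan is to reduce to a rooted instance: fix an arbitrary root $r$, run a DFS, and give every vertex $v$ the interval $I(v)=[\mathrm{in}(v),\mathrm{out}(v)]$, where $\mathrm{in}(v)$ is the DFS discovery time and $\mathrm{out}(v)$ the largest discovery time in the subtree of $v$. Since discovery times are distinct, $u$ is an ancestor of $v$ if and only if $I(v)\subseteq I(u)$, and $u=v$ iff $I(u)=I(v)$. On top of this I would compute a heavy-path decomposition: from every non-leaf $v$ the edge to the child with the largest subtree is \emph{heavy}, all other edges are \emph{light}, and the maximal heavy-edge paths partition $V$. The standard counting argument (a light edge at least halves the subtree size) shows that the $r$-to-$v$ path crosses at most $\log_2 n$ light edges, hence meets at most $\log_2 n+1$ heavy paths. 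I would give each heavy path a \emph{depth}, namely the number of light edges between $r$ and the top of that heavy path; along the $r$-to-$v$ path the $i$-th heavy path met is exactly the one of depth $i$.

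Next I would fix the data, keeping the routing tables empty as the statement requires and putting everything in $\L(v,T)$. The label stores: (i) $I(v)$; (ii) the port number of the edge from $v$ to its parent; (iii) the port number of the edge from $v$ to its heavy child together with that heavy child's interval; (iv) the depth $j_v$ of $v$'s heavy path; and (v) an array $L_v[0],\dots,L_v[j_v-1]$, where $L_v[i]$ is the port number of the light edge that the $r$-to-$v$ path uses to leave the heavy path of depth $i$. Items (i)--(iv) use $O(\log n)$ bits; item (v) uses $O(\log^2 n)$ bits in the straightforward encoding.

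The routing function at the current vertex $w$ with target $v\neq w$, given only $\L(w,T)$ and $\L(v,T)$, is a three-way branch. If $I(v)\not\subseteq I(w)$, then $v$ is outside the subtree of $w$, so forward to the parent of $w$ via port (ii). Otherwise $v$ is a proper descendant of $w$; let $h$ be the heavy child of $w$, whose interval is in (iii). If $I(v)\subseteq I(h)$, forward to $h$ via the heavy-child port (iii). If not, $v$ lies below some light child of $w$, which forces the $r$-to-$v$ path to leave $w$'s heavy path exactly at $w$; since that heavy path has depth $j_w$ and the path of $v$ still has deeper heavy paths ($j_w<j_v$), the required port is $L_v[j_w]$, read from $\L(v,T)$. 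Each step runs in $O(1)$ time and strictly decreases the tree-distance to $v$ (moving toward the LCA in the first case, toward $v$ in the others), and the function reports arrival when $I(w)=I(v)$. Correctness follows from the interval characterization of ancestry together with the fact that the $i$-th heavy path on the $r$-to-$v$ path has depth $i$, so indexing $L_v$ by $j_w$ always retrieves the correct light-edge port.

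The main obstacle is improving the label size from the easy $O(\log^2 n)$ to the claimed $O(\log^2 n/\log\log n)$: the array (v) is the bottleneck, and one needs a more economical representation of the sequence of light-edge ports along the root path — e.g., a coarser decomposition in which only $O(\log n/\log\log n)$ transitions occur on any root-to-leaf path while each still costs $O(\log n)$ bits, together with word-packing that still permits $O(1)$-time extraction of the $j_w$-th entry. This refinement, and the matching lower bound showing it is essentially optimal in the fixed-port model (in contrast with the $(1+o(1))\log n$ bits achievable when the scheme may choose the port numbers, obtained by setting the child ports from the DFS numbering so that (v) becomes unnecessary), is exactly what is carried out in~\cite{DBLP:conf/stacs/FraigniaudG02, DBLP:conf/spaa/ThorupZ01}, and I would invoke it rather than reprove it.
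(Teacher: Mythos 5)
The paper does not prove this lemma; it is cited as a known result of Thorup--Zwick and Fraigniaud--Gavoille and used as a black box, so there is no in-paper proof to compare against. Your sketch of the DFS-interval plus heavy-path-decomposition construction is the standard one underlying both citations: the three-way routing rule (to the parent when $v$ is outside $w$'s subtree, to the heavy child when $v$ lies in its subtree, otherwise read the light-edge port $L_v[j_w]$) is correct, and the invariant that the $i$-th heavy path met on the root-to-$v$ path has depth $i$ is precisely what makes the array lookup by $j_w$ well defined, with $j_w<j_v$ in the third branch as you observe. You also correctly identify that the straightforward encoding yields only $O(\log^2 n)$-bit labels and that shaving the $\log\log n$ factor --- together with the matching fixed-port lower bound, and the contrast with the $(1+o(1))\log n$ bound achievable in the designer-port model by setting child ports from the DFS numbering --- is the actual content of the cited works, which you sensibly invoke rather than reprove. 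This is an appropriate treatment of a lemma the paper itself takes as external.
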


\subsection{Routing in directed graphs}
Roditty, Thorup, and Zwick~\cite{DBLP:journals/talg/RodittyTZ08} extended the compact routing scheme from trees to double trees to handle directed graphs. In our work on directed graphs (\autoref{S-roundtrip directed}), we employ this double-tree adaptation when routing within double-trees.
Moreover, for the routing in clusters to work in directed graphs, they adjusted the definition of cluster adaptation using the following definition of roundtrip ordering. 
\begin{definition}
    We assume that $V=\{1,\dots,n\}$. Let $u_1,u_2,v\in V$. We say that $u_1 \prec_v u_2$ if one of the following holds:
    \begin{itemize}
        \item $d(v\leftrightarrow u_1) < d(v\leftrightarrow u_1)$
        \item $d(v\leftrightarrow u_1) = d(v\leftrightarrow u_1)$ and $d(u_1\rightarrow v) < d(u_1\rightarrow v)$
        \item $d(v\leftrightarrow u_1) = d(v\leftrightarrow u_1)$ and $d(u_1\rightarrow v) = d(u_1\rightarrow v)$ and $u_1< u_2$
    \end{itemize}
\end{definition}
Using this definition, the cluster of $u$ with respect to $Y$ is defined as $C(u,Y)=\{ v\in V \mid u\nprec_v p_Y(u)\}$, where $p_Y(u)$ satisfies that $p_Y(u) \preceq_v w$ for every $w\in Y$.

Using this definition, they proved the following lemma:
\begin{lemma}[\cite{DBLP:journals/talg/RodittyTZ08}]\label{L-Digraph-Clusters-are-continuent}
    If $v\in C(u)$ and $w\in P(u,v)$, then $v\in C(w)$
\end{lemma}

\section{Overview} \label{S-Over}
In this section, we present an overview of our new compact routing schemes and their main technical contributions.
Throughout the overview, let $V=A_0, \dots, A_k=\emptyset$ be a hierarchy such that, $|A_i|=\Ot(n^{1-i/k})$ and $|B(u)|=\Ot(n^{1/k})$, for every $u\in V$, created using~\autoref{L-TZ-Size}. We denote with  $x\rightarrow y$ routing from $x$ to $y$ on $P(x,y)$. 

\textbf{Roundtrip routing scheme in undirected graphs.} 
Notice that any $t$-stretch compact routing scheme is also a $t$-roundtrip stretch compact routing scheme.
In undirected graphs, where  $d(u,v) = d(v,u)$, we have $d(u \leftrightarrow v)=d(u,v)+d(v,u)=2d(u,v)$.
This might lead one to question the potential benefits of considering roundtrip routing in undirected graphs.
In other words, why might the problem of roundtrip routing be easier than the problem of single message routing, even though $d(u \leftrightarrow v) = 2d(u,v)$?

During the routing process, the information available to the routing algorithm at $u$ may differ from the information available at $v$. Therefore,  the routing path from $u$ to $v$ may differ from the routing path from $v$ to $u$,  which might lead to the case that  $\hat{d}(u,v) \neq \hat{d}(v,u)$.
Consider for example the case that  $\hat{d}(u,v)=3d(u,v)$ and $\hat{d}(v,u)=27d(u,v)$. In this case, the roundtrip stretch is $15$ while the single message stretch is $27$, and even though $d(u \leftrightarrow v)=2d(u,v)$,  the roundtrip stretch is much smaller than the single message stretch.

Next, we provide an overview of our optimal $(2k-1)$-roundtrip stretch compact routing scheme (for the complete description, see~\autoref{S-roundtrip-undirected}). 
The preprocessing algorithm sets $\RT(u)=\{ \L(u, T(C(v))) \mid v \in B(u) \}$ and $\L(u)=\{ \L(u, T(C(p_i(u)))) \mid i \in [k] \}$, for every $u\in V$.
Let $\ell(x,y)=\min\{i\mid p_{i}(y)\in B(x)\}$, and let $b = \ell(u, v)$, and let $a = \ell(v, u)$. The roundtrip routing path is $u \rightarrow p_{b}(v) \rightarrow v \rightarrow p_a(u) \rightarrow u$ (see Figure~\ref{fig:undirected-roundtrip}). 

Our main technical contribution is in Lemma~\ref{L-Routing-undirected-Approximation}, where we show  that while the path 
$u\rightarrow p_b(v) \rightarrow v$ might be of length at most $(4k-3)d(u,v)$, the entire path $u \rightarrow p_{b}(v) \rightarrow v \rightarrow p_a(u) \rightarrow u$ is of length of at most $(4k-2)d(u,v)=(2k-1)d(u \leftrightarrow v)$, and therefore the roundtrip stretch is the optimal $(2k-1)$-stretch. 

Next, we provide some intuition why $\hat{d}(u \leftrightarrow v) \le (2k - 1) d(u \leftrightarrow v)$.
From the triangle inequality, we have that 
$\hat{d}(u\leftrightarrow v) \le d(u \leftrightarrow v) + d(u\leftrightarrow p_a(u)) + d(v\leftrightarrow p_b(v))$. 
From the definition of $a$ and $b$,  for every $0 < i < a \le b\leq k-1$ we have  $p_i(u) \notin B(v)$ and $p_i(v) \notin B(u)$, and for every $a \le i < b$ we have  $p_i(v) \notin B(u)$. This allows us to prove that $d(u\leftrightarrow p_a(u)) \le a\cdot d(u \leftrightarrow v)$ and $d(v\leftrightarrow p_b(v)) \le a\cdot d(u \leftrightarrow v) + (b-a)\cdot2d(v\leftrightarrow p_b(v))$.
Overall: 
\begin{align*}
    d(u\leftrightarrow p_a(u)) + d(v\leftrightarrow p_b(v)) &\le ad(u \leftrightarrow v) + a d(u \leftrightarrow v) + (b-a)2d(u \leftrightarrow v) 
    \\&= 2bd(u \leftrightarrow v) \le 2(k-1)d(u \leftrightarrow v)
\end{align*}
and therefore $\hat{d}(u\leftrightarrow v) \le d(u \leftrightarrow v) + d(u\leftrightarrow p_a(u)) + d(v\leftrightarrow p_b(v)) \le (2k-1)d(u \leftrightarrow v)$.

\textbf{Directed roundtrip routing schemes.}
One might wonder why the general compact roundtrip routing scheme presented above for undirected graphs can not be extended to directed graphs. 
The problem lies in the structure of clusters. In undirected graphs, if $u\in C(w)$ then $P(u,w)\subseteq C(w)$ (see~\autoref{L-T(C(u))=C(u)-undirected}), and therefore we can route from $u$ to every $w$, such that  $u\in C(w)$. Unfortunately, in directed graphs,  
this nice property of clusters does not necessarily hold. More specifically, it might be that $P(u,w)\not\subseteq C(w)$ even when $u\in C(w)$
and as a result, we cannot route from $u$ to $w$ while using only the cluster $C(w)$ as in undirected graphs. 
A simple approach to overcome this problem is to store $P(u,w)$, for every $w\in B(u)$, in  $\RT(u)$.
Using this approach, we can extend the roundtrip routing scheme from undirected graphs to directed graphs with small hop diameter (see~\autoref{T-Compact-directed-bounded-diameter}). 

For routing tables of size $\Ot(n^{1/3})$, we overcome the problem that  $P(u,w)\not\subseteq C(w)$ using a more sophisticated solution.
For an hierarchy with $k=3$ we have  $V=A_0,A_1,A_2,A_3=\emptyset$ and three bunches, $B_0(u),B_1(u)$ and $B_2(u)$, for every $u\in V$.
Let $u,v \in V$. To follow the compact roundtrip of undirected graphs we need to be able to route from $u$ to $v$, if $v\in B_0(u)$, from $u$ to $p_1(v)$, if $p_1(v)\in B_1(u)$, and from $u$ to $p_2(v)$, otherwise.
Since  $B_0(u)$ is a ball $P(u,v)\subseteq B_0(u)$, for every $v\in B_0(u)$ (see \autoref{L-Ball-can-be-tree}). Therefore,  we can route from $u$ to every vertex in $B_0(u)$.
Moreover, since $C(p_2(v))=V$, we can route from $u$ to $p_2(v)$.
This leaves us with the challenge of routing from $u$ to $p_1(v)$ when $p_1(v)\in B_1(u)$. 

To handle this challenge, we divide the routing from $u$ to $p_1(v)$ into two cases. 
If $P(u,p_1(v))\subseteq C(p_1(v))$, we simply route on the path $P(u,p_1(v))$. Otherwise, if $P(u,p_1(v))\not\subseteq C(p_1(v))$, we let $z \in P(u,p_1(v))$ be the first vertex such that $z\notin C(p_1(v))$. In this case, we route along the path $u\rightarrow p_2(z) \rightarrow p_1(v)$. 
Using the fact that $z\notin C(p_1(v))$ we show that $d(u, p_2(z)) + d(p_2(z), p_1(v)) \le d(u,p_1(v)) + d(u \leftrightarrow p_1(v))$,
and that $\hat{d}(u\leftrightarrow v) \le 7d(u \leftrightarrow v)$ (see~\autoref{L-Routing-directed-Approximation}).

\textbf{Average single message routing schemes in undirected graphs.}
Recall that $h_i(u)=d(u,A_i)$. In the preprocessing algorithm, we set $\RT(u)=\{ \L(u, T(C(v))) \mid v \in B(u) \} \cup \{L(v,T(C(u))) \mid v\in C(u) \}$, and $\L(u) = \{ \pair{h_i(u), \L(u, T(C(p_i(u))))} \mid i \in [k] \}$, for every $u\in V$.

For the sake of simplicity, we assume that when routing from the source vertex $u$ to the destination vertex $v$, the value $d(u,v)$ is known to the routing algorithm. In this case, we present an optimal $(2k-1)$-stretch routing scheme in Section~\ref{S-average-simple}.
The algorithm at the source $u$ works as follows.
For each $0 \le i \le k-1$, if $p_i(v) \in B_i(u)$, it routes from $u$ to $v$ in $T(C(p_i(v)))$. Otherwise, if the inequality $h_{i+1}(v) > d(u,v) + h_i(u)$ holds, then we have that $v\in C(p_i(u))$ (see~\autoref{L-average-simple}) and therefore the algorithm routes from $u$ to $v$ in $T(C(p_i(u)))$.
If neither condition is satisfied, the algorithm proceeds to the next iteration. 
The $(2k-1)$-stretch of the routing scheme follows by induction using standard tools (see~\autoref{L-simple-average-corr} for the complete proof).

In Section~\ref{S-average-full}, we present a routing algorithm that routes from $u$ to $v$ without knowing the value of $d(u,v)$.
To achieve this, we introduce an estimate $\hat{\delta}$ satisfying $\hat{\delta} \le d(u,v)$, which serves as our current best lower bound for $d(u,v)$.
Note that since $\hat{\delta}$ is only an estimate and may be strictly smaller than $d(u,v)$, it is possible that $h_{2i+1}(v) > \hat{\delta} + h_{2i}(u)$ and $v \notin C(p_{2i}(u))$.\footnote{
We alternate between bounding $h_i(u)$ and $h_i(v)$, but by the definition of $\hat{\delta}$, we have that $h_{i+1}(u) \le \hat{\delta} + h_i(v)$ for every $i \le \ell$.  
Therefore, it suffices to only bound the $h_{2i+1}(v) - h_{2i}(u)$ for $0\le i\le a$.
}
Therefore, we introduce a condition that, if satisfied, means that $\hat{\delta} \ll h_{2i+1}(v) - h_{2i}(u)$.
If the condition is satisfied, the routing algorithm routes to $p_{2i}(u)$ and checks in $\RT(p_{2i}(u))$ whether $v\in C(p_{2i}(u))$. If $v\in C(p_{2i}(u))$ the algorithm simply routes from $p_{2i}(u)$ to $v$. Otherwise, if $v\notin C(p_{2i}(u))$, then we know that $ h_{2i+1}(v) - h_{2i}(u) \le d(u,v)$, and we can safely update $\hat{\delta}$ to $h_{2i+1}(v) - h_{2i}(u)$. We proceed by routing back from $p_{2i}(u)$ to $u$ and then continuing to the next iteration of the algorithm.

The routing algorithm is simple and works as follows. Let $\ell = \min\{i \mid p_{i}(v) \in B(u)\}$, and let $\hat{\delta} = \max_{0 \leq i \leq \ell}(h_{i+1}(u) - h_i(v))$. For each $0\le i \le \ell/2$, let $1<c_i<2$ be a constant:
\begin{enumerate}
    \item If $h_{2i+1}(v) \le c_i \cdot \hat{\delta} + h_{2i}(u)$, then the algorithm continues to the next iteration.
    \item Otherwise, if $h_{2i+1}(v) > c_i \cdot \hat{\delta} + h_{2i}(u)$, then the algorithm routes to $p_{2i}(u)$ and accesses $\RT(p_{2i}(u))$:
    \begin{enumerate}
        \item If $v\in C(p_{2i}(u))$ then the algorithm routes directly from $p_{2i}(u)$ to $v$ in $T(C(p_{2i}(u)))$.
        \item Otherwise, if $v\notin C(p_{2i}(u))$, the algorithm sets $\hat{\delta}$ to $h_{2i+1}(v) - h_{2i}(u)$ and routes back to $u$ from $p_{2i}(u)$. The algorithm then continues to the next iteration.
    \end{enumerate}
\end{enumerate}

In contrast to the simplicity of the routing algorithm, analyzing its stretch is rather involved.  
For a complete proof, see~\autoref{L-stretch-undirected-average}.

% , is justified by the fact that $\hat{\delta} \le d(u,v)/c_i$.  

% Up to the $i$th iteration, we have bounded the routed distance in terms of $\hat{\delta}$. Therefore, if $\hat{\delta} \le d(u,v)/c_i$, then the routed distance expressed in terms of $d(u,v)$ becomes smaller.  
% We choose the value of $c_i$ so that even after adding $2h_{2i}(u)$, the total remains within our stretch guarantee.
% For a complete proof, see~\autoref{L-stretch-undirected-average}.

\section{Optimal roundtrip routing in undirected graphs}\label{S-roundtrip-undirected}
In this section, we consider roundtrip routing in weighted undirected graphs. In undirected graphs $d(u,v) = d(v,u)$, the roundtrip distance $d(u \leftrightarrow v)$ is simply $d(u,v) + d(v,u)=2d(u,v)$.

This observation naturally leads to the question: what are the potential advantages of studying \emph{roundtrip} routing in undirected graphs? In particular, why might roundtrip routing be easier to approximate than single-message routing, despite the fact that the roundtrip distance is always twice the one-way distance, i.e., $d(u \leftrightarrow v) = 2d(u,v)$?

The key distinction lies in the asymmetry of available information during the routing process. When routing from a source vertex $u$ to a destination vertex $v$, the algorithm has access only to the routing table $\RT(u)$ and the label $\L(v)$. However, routing from $v$ to $u$ is based solely on $\RT(v)$ and $\L(u)$. Since these inputs may differ significantly, the resulting routing paths may differ significantly, and we may have that $\hat{d}(u,v) \neq \hat{d}(v,u)$.

In single-message routing, the stretch must hold for the worst-case direction. Therefore, $\frac{\max\big(\hat{d}(u,v), \hat{d}(v,u)\big)}{d(u,v)}$ is bounded. 
However, roundtrip routing requires only that the average of the two directions be bounded. Therefore,
$\frac{\hat{d}(u\leftrightarrow v)}{d(u \leftrightarrow v)} = \frac{\hat{d}(u,v) + \hat{d}(v,u)}{2d(u,v)}$ is bounded.

This relaxation in the approximation requirement allows us to achieve an optimal stretch compact roundtrip routing scheme, as presented in the following theorem.

\Reminder{T-Compact-Undirected}
First, we describe the preprocessing algorithm, which computes the routing tables and assigns vertex labels. 
The input to the algorithm is a graph $G = (V, E, w)$ and an integer parameter $k > 1$. The algorithm uses~\autoref{L-TZ-Size} to build a hierarchy of vertex sets $A_0, A_1, \ldots, A_k$, where $A_0=V$, $A_k=\emptyset$, $A_{i+1}\subseteq A_i$, $|A_{i}|=n^{1-i/k}$ and $|B_i(u)| = \Ot(n^{1/k})$, for every $0 \leq i \leq k-1$. 
Next, for every  $u\in V$, the preprocessing algorithm computes $B(u)$ and $C(u)$. Then, for every  $u\in V$, the algorithm sets the routing table $\RT(u)$ to:
\[
\RT(u) = \{ \L(u, T(C(v))) \mid v \in B(u) \},
\]  
and the label $\L(u)$ to:
\[
\L(u) = \{ \L(u, T(C(p_i(u)))) \mid i \in [k] \}.
\]
We now turn to bound the size of the routing tables and the vertex labels.
\begin{lemma}\label{L-routing-undirected-RT-and-L-size}
    $|\RT(u)| = \Ot(n^{1/k})$, and $|\L(u)| = \Ot(k)$, for every $u\in V$.
\end{lemma}
\begin{proof}
    From~\autoref{L-TZ-Size} it follows that $|B(u)|=\Ot(n^{1/k})$. From~\autoref{T-Route-In-Trees} it follows that for every tree $T$ it holds that $|\L(u,T)| = \Ot(1)$. Therefore, $|\RT(u)|=|B(u)|\cdot \Ot(1) = \Ot(n^{1/k})$, as required.
    The label of each vertex is composed of $k$ tree labels $L(u,T(C(p_i(u))))$ for every $i\in [k]$. From~\autoref{T-Route-In-Trees} we know that $|L(u,T(C(p_i(u))))|=\Ot(1)$. Therefore, $|\L(u)|=k\cdot \Ot(1)=\Ot(k)$, as required.
\end{proof}

The routing algorithm routes a message from $u$ to $v$ as follows. Let $\ell(x,y)=\min\{i\mid p_{i}(y)\in B(x)\}$ and let $\ell=\ell(u,v)$. 
We route from $u$ to $v$ on the shortest path between $u$ and $v$ in $T(C(p_{\ell}(v)))$. In figure~\ref{fig:undirected-roundtrip} we show the roundtrip route between $u$ and $v$ according to this routing algorithm.

Next, we show that for every $w$ on the shortest path from $u$ to $v$ in $T(C(p_{\ell}(v)))$ we have that $\L(w, T(C(p_{\ell}(v)))) \in \RT(w)$ and therefore $w$ can route to $v$ in $T(C(p_{\ell}(v)))$.
\begin{lemma}\label{L-Routing-undirected-consistent}
    For every $w\in P(u, p_{\ell}(v)) \cup P(p_{\ell}(v), v)$ it holds that $\L(w, T(C(p_{\ell}(v)))) \in \RT(w)$
\end{lemma}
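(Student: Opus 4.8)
The plan is to show that the tree label $\L(w, T(C(p_{\ell}(v))))$ is stored in $\RT(w)$ for every vertex $w$ on the concatenated path $P(u, p_{\ell}(v)) \cup P(p_{\ell}(v), v)$. Recall that $\RT(w) = \{\L(w, T(C(x))) \mid x \in B(w)\}$, so it suffices to prove that $p_{\ell}(v) \in B(w)$ for every such $w$; then the label $\L(w, T(C(p_{\ell}(v))))$ is among those stored in $\RT(w)$ by definition. So the real content is the claim: \emph{for every $w$ on $P(u, p_{\ell}(v)) \cup P(p_{\ell}(v), v)$, we have $p_{\ell}(v) \in B(w)$.}

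I would split into the two sub-paths. \textbf{Case 1: $w \in P(u, p_\ell(v))$.} By definition of $\ell = \ell(u,v)$, we have $p_\ell(v) \in B(u)$, and moreover $p_i(v) \notin B(u)$ for all $i < \ell$. Now $p_\ell(v) \in A_\ell$, and since $p_\ell(v) \in B(u)$ means $d(u \leftrightarrow p_\ell(v)) < d(u \leftrightarrow A_{\ell+1})$, the vertex $p_\ell(v)$ lies in the cluster $C(p_\ell(v))$ (as $p_\ell(v) \in A_\ell \setminus A_{\ell+1}$, one checks $d(p_\ell(v) \leftrightarrow p_\ell(v)) = 0 < d(p_\ell(v) \leftrightarrow A_{\ell+1})$ provided $p_\ell(v) \notin A_{\ell+1}$). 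More importantly, $u \in C(p_\ell(v))$: indeed $d(u \leftrightarrow p_\ell(v)) < d(u \leftrightarrow A_{\ell+1})$ is exactly the condition $u \in C(p_\ell(v), A_{\ell+1}) = C(p_\ell(v))$. Hence the shortest path $P(u, p_\ell(v)) = P(p_\ell(v), u)$ lies entirely inside $C(p_\ell(v))$ by~\autoref{L-T(C(u))=C(u)-undirected}, so every $w$ on it satisfies $w \in C(p_\ell(v))$, which means $d(w \leftrightarrow p_\ell(v)) < d(w \leftrightarrow A_{\ell+1})$. Translating this back: $p_\ell(v) \in A_\ell$ and $d(w \leftrightarrow p_\ell(v)) < d(w \leftrightarrow A_{\ell+1})$, so $p_\ell(v) \in B(w, A_\ell, A_{\ell+1}) = B_\ell(w) \subseteq B(w)$, as desired. \textbf{Case 2: $w \in P(p_\ell(v), v)$.} Here I want to use that $p_\ell(v)$ is the pivot $p(v, A_\ell)$, so $v \in C(p_\ell(v))$ — this holds because $d(v \leftrightarrow p_\ell(v)) = d(v \leftrightarrow A_\ell) < d(v \leftrightarrow A_{\ell+1})$ (the last inequality since $A_{\ell+1} \subseteq A_\ell$ and $p_\ell(v) \in A_\ell \setminus A_{\ell+1}$, using uniqueness of the closest point or strictness). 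Then again by~\autoref{L-T(C(u))=C(u)-undirected} applied to the shortest path $P(p_\ell(v), v)$ inside $C(p_\ell(v))$, every $w$ on it is in $C(p_\ell(v))$, and the same translation as in Case 1 gives $p_\ell(v) \in B_\ell(w) \subseteq B(w)$.

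The main obstacle I anticipate is the boundary bookkeeping around the definition of the cluster and the possibility that $p_\ell(v) \in A_{\ell+1}$, which would make $p_\ell(v) \notin A_\ell \setminus A_{\ell+1}$ and complicate the invocation of~\autoref{L-T(C(u))=C(u)-undirected} (whose hypothesis is $v \in A_i \setminus A_{i-1}$, indexed somewhat differently). I would handle this by noting that $\ell(x,y) = \min\{i : p_i(y) \in B(x)\}$ and that $B(x)$ is defined via the individual bunches $B_i$; the index $\ell$ should be chosen so that $p_\ell(v) \in A_\ell \setminus A_{\ell+1}$ is automatic from how pivots and bunches interact (if $p_\ell(v)$ were in $A_{\ell+1}$ it would equal $p_{\ell+1}(v)$ and the strict inequalities collapse). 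A second, smaller subtlety is ensuring the strict inequality $d(v \leftrightarrow p_\ell(v)) < d(v \leftrightarrow A_{\ell+1})$ rather than $\le$; this follows from the tie-breaking convention (unique shortest paths / smaller identifier) and the fact that $A_{\ell+1} \subsetneq A_\ell$ with $p_\ell(v) \notin A_{\ell+1}$. Once these edge cases are cleared, the argument is a direct application of~\autoref{L-T(C(u))=C(u)-undirected} twice, and the lemma follows.
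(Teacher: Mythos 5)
Your proof is correct and follows essentially the same approach as the paper: reduce to showing $w \in C(p_\ell(v))$ for each $w$ on the two sub-paths, via a double application of \autoref{L-T(C(u))=C(u)-undirected}, and then translate $w \in C(p_\ell(v))$ into $p_\ell(v) \in B(w)$. You supply more detail than the paper does — in particular, you explicitly verify the preconditions $u \in C(p_\ell(v))$ and $v \in C(p_\ell(v))$ before invoking the cluster-path lemma, and you flag the level-bookkeeping edge case ($p_\ell(v)$ possibly lying in $A_{\ell+1}$), both of which the paper's terse proof leaves implicit.
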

\begin{proof}
If $w \in P(u, p_{\ell}(v))$, then by~\autoref{L-T(C(u))=C(u)-undirected}, we know that $w \in C(p_{\ell}(v))$. Similarly, if $w \in P(p_{\ell}(v), v)$, then from~\autoref{L-T(C(u))=C(u)-undirected}, we know that $w \in C(p_{\ell}(v))$. 
Since  $w \in C(p_{\ell}(v))$ it follows 
from the definition of $C(p_{\ell}(v))$  that $p_{\ell}(v) \in B(w)$. By the definition of $\RT(w)$ since $p_{\ell}(v) \in B(w)$ it follows that $\L(w, T(C(p_{\ell}(v)))) \in \RT(w)$, as required.
\end{proof}

We now turn to the main technical contribution of this section and prove that the stretch of the compact roundtrip routing scheme is $2k-1$.
\begin{lemma}
    \label{L-Routing-undirected-Approximation}
    $\hat{d}(u\leftrightarrow v) \le (2k-1)\cdot d(u \leftrightarrow v)$
\end{lemma}
\begin{proof}
    Let $u, v \in V$, let $b = \ell(u, v) = \min\{i \in [k] \mid p_i(v) \in B(u)\}$, and let $a = \ell(v, u) = \min\{i \in [k] \mid p_i(u) \in B(v)\}$. Assume, without loss of generality, that $b \ge a$. See Figure~\ref{fig:undirected-roundtrip} for an illustration.
    \begin{figure}
    \centering
    \tikzset{every picture/.style={line width=0.75pt}} %set default line width to 0.75pt        
\tikzset{every picture/.style={line width=0.75pt}} %set default line width to 0.75pt        

\begin{tikzpicture}[x=0.75pt,y=0.75pt,yscale=-1,xscale=1]
%uncomment if require: \path (0,300); %set diagram left start at 0, and has height of 300

%Shape: Circle [id:dp2263825316071797] 
\draw  [fill={rgb, 255:red, 255; green, 255; blue, 255 }  ,fill opacity=1 ] (57.64,204.79) .. controls (57.64,202.14) and (59.79,200) .. (62.43,200) .. controls (65.07,200) and (67.21,202.14) .. (67.21,204.79) .. controls (67.21,207.43) and (65.07,209.57) .. (62.43,209.57) .. controls (59.79,209.57) and (57.64,207.43) .. (57.64,204.79) -- cycle ;
%Shape: Circle [id:dp5892023210291231] 
\draw  [fill={rgb, 255:red, 255; green, 255; blue, 255 }  ,fill opacity=1 ] (181.86,204.79) .. controls (181.86,202.14) and (184,200) .. (186.64,200) .. controls (189.29,200) and (191.43,202.14) .. (191.43,204.79) .. controls (191.43,207.43) and (189.29,209.57) .. (186.64,209.57) .. controls (184,209.57) and (181.86,207.43) .. (181.86,204.79) -- cycle ;
%Shape: Circle [id:dp47873665688521894] 
\draw  [fill={rgb, 255:red, 255; green, 255; blue, 255 }  ,fill opacity=1 ] (181.86,83.79) .. controls (181.86,81.14) and (184,79) .. (186.64,79) .. controls (189.29,79) and (191.43,81.14) .. (191.43,83.79) .. controls (191.43,86.43) and (189.29,88.57) .. (186.64,88.57) .. controls (184,88.57) and (181.86,86.43) .. (181.86,83.79) -- cycle ;
%Shape: Circle [id:dp8883567356977173] 
\draw  [fill={rgb, 255:red, 255; green, 255; blue, 255 }  ,fill opacity=1 ] (57.86,120.57) .. controls (57.86,117.93) and (60,115.79) .. (62.64,115.79) .. controls (65.29,115.79) and (67.43,117.93) .. (67.43,120.57) .. controls (67.43,123.21) and (65.29,125.36) .. (62.64,125.36) .. controls (60,125.36) and (57.86,123.21) .. (57.86,120.57) -- cycle ;
%Straight Lines [id:da5546900253002847] 
\draw    (67.21,204.79) -- (179.79,85.96) ;
\draw [shift={(181.86,83.79)}, rotate = 133.45] [fill={rgb, 255:red, 0; green, 0; blue, 0 }  ][line width=0.08]  [draw opacity=0] (8.93,-4.29) -- (0,0) -- (8.93,4.29) -- cycle    ;
%Straight Lines [id:da40169721575425776] 
\draw    (186.64,88.57) -- (186.64,197) ;
\draw [shift={(186.64,200)}, rotate = 270] [fill={rgb, 255:red, 0; green, 0; blue, 0 }  ][line width=0.08]  [draw opacity=0] (8.93,-4.29) -- (0,0) -- (8.93,4.29) -- cycle    ;
%Straight Lines [id:da8786657490397938] 
\draw    (181.86,204.79) -- (69.84,122.35) ;
\draw [shift={(67.43,120.57)}, rotate = 36.35] [fill={rgb, 255:red, 0; green, 0; blue, 0 }  ][line width=0.08]  [draw opacity=0] (8.93,-4.29) -- (0,0) -- (8.93,4.29) -- cycle    ;
%Straight Lines [id:da3467294703295678] 
\draw    (62.64,125.36) -- (62.44,197) ;
\draw [shift={(62.43,200)}, rotate = 270.16] [fill={rgb, 255:red, 0; green, 0; blue, 0 }  ][line width=0.08]  [draw opacity=0] (8.93,-4.29) -- (0,0) -- (8.93,4.29) -- cycle    ;

% Text Node
\draw (46.64,204.57) node [anchor=north west][inner sep=0.75pt]   [align=left] {$\displaystyle u$};
% Text Node
\draw (188.64,207.79) node [anchor=north west][inner sep=0.75pt]   [align=left] {$\displaystyle v$};
% Text Node
\draw (193.64,71) node [anchor=north west][inner sep=0.75pt]   [align=left] {$\displaystyle p_{b}( v)$};
% Text Node
\draw (18.64,109.57) node [anchor=north west][inner sep=0.75pt]   [align=left] {$\displaystyle p_{a}( u)$};

\end{tikzpicture}

    \caption{The roundtrip routing of~\autoref{T-Compact-Undirected}. Let $a=\ell(v,u), b=\ell(u,v)$, where $\ell(x,y)=\min\{i\mid p_{i}(y)\in B(x)\}$. (We assume wlog that $b\ge a$.)}
    \label{fig:undirected-roundtrip}
\end{figure}
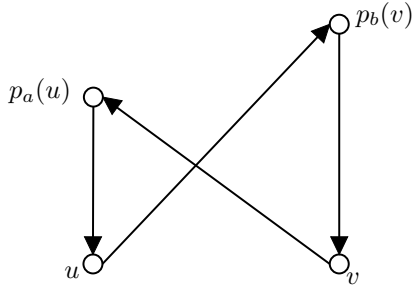

    In the routing phase, we route from $u$  to $v$  on the shortest path between $u$  and $v$  in $T(C(p_{b}(v)))$.
    Similarly,  we route from $v$  to $u$  on the shortest path between $v$  and $u$  in $T(C(p_{a}(u)))$. Therefore,
    \[
        \hat{d}(u,v) = d_{T(C(p_{b}(v)))}(u,v) \le d(u,p_{b}(v)) + d(p_{b}(v),v) \text{ and } \hat{d}(v,u) = d_{T(C(p_{a}(u)))}(u,v) \le d(v,p_{a}(u)) + d(p_{a}(u), u).
    \]
    
    From the triangle inequality, we have 
    $d(u,p_{b}(v)) \le d(u,v) + d(v,p_{b}(v))$, therefore we get that: 
    \[
    \hat{d}(u,v)=d(u,p_{b}(v)) + d(p_{b}(v),v) \le d(u,v) + d(v,p_{b}(v)) + d(p_{b}(v),v) = d(u,v) + d(v \leftrightarrow p_{b}(v))\] 
    By symmetry, we also get that $\hat{d}(v,u) \le d(v,u) + d(u \leftrightarrow p_{a}(u))$.
    By definition $\hat{d}(u \leftrightarrow v) = \hat{d}(u,v) + \hat{d}(v,u)$. Therefore, we get:
    \begin{align*}
    \hat{d}(u \leftrightarrow v) &= \hat{d}(u,v) + \hat{d}(v,u) \\
    &\le d(u,v) + d(v \leftrightarrow p_{b}(v)) + d(v,u) + d(u \leftrightarrow p_{a}(u)) \\
    &= d(u \leftrightarrow v) + d(v \leftrightarrow p_{b}(v)) + d(u \leftrightarrow p_{a}(u)).
    \end{align*}
    
To get that $\hat{d}(u \leftrightarrow v) \le (2k-1) d(u \leftrightarrow v)$, we show that 
$d(v \leftrightarrow p_{b}(v)) + d(u \leftrightarrow p_{a}(u)) \le 2(k-1) d(u \leftrightarrow v)$ in the following claim.
\begin{subclaim}\label{C-lemma-10}
    $d(v \leftrightarrow p_{b}(v)) + d(u \leftrightarrow p_{a}(u)) \le 2(k-1) d(u \leftrightarrow v)$
\end{subclaim}
\begin{proof}
Let $\Delta_i^v = d(v \leftrightarrow p_i(v)) - d(v \leftrightarrow p_{i-1}(v))$, for every $0 < i < k$. 
Notice that for every $0 < j < k$, it holds that:
\[
\sum_{i=1}^j \Delta_i^v = d(v \leftrightarrow p_j(v)) - d(v \leftrightarrow p_{j-1}(v)) + d(v \leftrightarrow p_{j-1}(v)) - \dots + d(v \leftrightarrow p_1(v)) - d(v \leftrightarrow p_0(v)) = d(v \leftrightarrow p_j(v)).
\]
Since we assume (wlog) that $b \ge a$, we have:
\begin{equation}\label{E-1}
d(v \leftrightarrow p_{b}(v)) = \sum_{i=1}^{b} \Delta_i^v = \sum_{i=1}^{a} \Delta_i^v + \sum_{i=a+1}^{b} \Delta_i^v = d(v \leftrightarrow p_{a}(v)) + \sum_{i=a+1}^{b} \Delta_i^v.
\end{equation}

From the definition of $b$ and $a$, for every $0 < i < a \le b$, we have that both $p_i(u) \notin B(v)$ and $p_i(v) \notin B(u)$. Therefore, by applying~\autoref{L-Bound-p-i-d} we get:
\begin{equation}\label{E-2}
d(u \leftrightarrow p_{a}(u)) \le a \cdot d(u \leftrightarrow v) \quad \text{and} \quad d(v \leftrightarrow p_{a}(v)) \le a \cdot d(u \leftrightarrow v).
\end{equation}
For every $a \leq i < b$, since $p_i(v) \notin B(u)$, we can apply~\autoref{L-Bound-delta-2d} to get that $\Delta_i^v < 2d(u \leftrightarrow v)$ . Therefore, we get:
\begin{equation}\label{E-3}
    \sum_{i=a+1}^{b} \Delta_i^v \le \sum_{i=a+1}^{b} 2d(u \leftrightarrow v) = 2(b-a)d(u \leftrightarrow v)
\end{equation}

Using the above three inequalities, we get:
\begin{align*}
    d(u \leftrightarrow p_{a}(u)) + d(v \leftrightarrow p_{b}(v))
    &\stackrel{(\ref{E-1})}{=} d(u \leftrightarrow p_{a}(u)) + d(v \leftrightarrow p_{a}(v)) + \sum_{i=a+1}^{b} \Delta_i^v 
    \\&\stackrel{(\ref{E-2})}{\le} a \cdot d(u \leftrightarrow v) + a \cdot d(u \leftrightarrow v) + \sum_{i=a+1}^{b} \Delta_i^v
    \\&\stackrel{(\ref{E-3})}{\le} 
     a \cdot d(u \leftrightarrow v) + a \cdot d(u \leftrightarrow v)
    + (b - a) \cdot 2d(u \leftrightarrow v) 
    \\&= 2b  d(u \leftrightarrow v) \le 2(k - 1) d(u \leftrightarrow v),
\end{align*}
where the last inequality follows from the fact that $b \le k - 1$.

\end{proof}

Finally, since we know that $\hat{d}(u\leftrightarrow v) \le d(u \leftrightarrow v) + d(v \leftrightarrow p_{b}(v)) + d(u \leftrightarrow p_{a}(u))$, and  from~\autoref{C-lemma-10} we have that $d(v \leftrightarrow p_{b}(v)) + d(u \leftrightarrow p_{a}(u)) \le 2(k-1)d(u \leftrightarrow v)$ we get:
\begin{align*}
   \hat{d}(u\leftrightarrow v) &\le d(u \leftrightarrow v) + d(v \leftrightarrow p_{b}(v)) + d(u \leftrightarrow p_{a}(u))
   \\&\le d(u \leftrightarrow v) + 2(k-1)d(u \leftrightarrow v) = (2k-1)d(u \leftrightarrow v),
\end{align*}
As required.
\end{proof}
\autoref{T-Compact-Undirected} follows from~\autoref{L-routing-undirected-RT-and-L-size},~\autoref{L-Routing-undirected-consistent} and~\autoref{L-Routing-undirected-Approximation}.

\section{Roundtrip routing in Directed graphs} \label{S-roundtrip directed}

\subsection{Extending~\autoref{S-roundtrip-undirected} to directed graphs}
In this section, we consider roundtrip routing in weighted directed graphs. 
One might wonder why the routing scheme of~\autoref{T-Compact-Undirected} does not apply to or cannot be adapted to directed graphs. The key issue lies in the fact that~\autoref{L-T(C(u))=C(u)-undirected} holds only for undirected graphs. Without this lemma, the routing process fails in directed graphs.
Specifically, in directed graphs, there exist cases where $ u \in C(v)$ and $ w \in P(u, v)$, but $ w \notin C(v)$. Consequently, even if we store the set $\{ \L(u, T(C(v))) \mid v \in B(u) \}$,
for every vertex $u$, we may not be able to route correctly from $u$ to $v$ when $ v \in B(u)$.

To overcome this issue, we consider bounded hop diameter graphs, where the hop diameter is the maximum number of edges on a shortest path between any two vertices 
$u,v$ in the graph, i.e. $D_{hop}=\max_{u,v\in V}(|P(u,v)|)$. In such a case, we can achieve the following theorem.
\Reminder{T-Compact-directed-bounded-diameter}
\begin{proof}
The preprocessing algorithm is identical to that of~\autoref{T-Compact-Undirected}, with one key modification: instead of setting 
$
\RT(u) = \{ \L(u, T(C(v))) \mid v \in B(u) \},
$ 
we store 
$
\RT(u) = \{ P(u, v) \mid v \in B(u) \},
$ 
where $ P(u, v) $ is the entire path from $u$ to $v$, and similarly instead of setting
$ \L(u) = \{ p_i(u) \mid i\in [k] \} $ we store $\L(u)=\{P(p_i(u),u) \mid i\in [k]\}$.

In the routing algorithm at the source vertex $u$ to a destination vertex $v$, after determining $\ell = \min\{i\mid p_{i}(v)\in B(u)\}$. The entire path $P(u, p_\ell(v)) \cup P(p_\ell(v),v)$ is attached to the header to ensure that intermediate vertices can route correctly. 
The remainder of the proof follows the same steps as in~\autoref{S-roundtrip-undirected}.
\end{proof}

\subsection{$7$-stretch directed roundtrip routing with $|\RT(u)|=O(n^{1/3})$}\label{S-7-stretch-directed}
In this section, we consider roundtrip routing in general weighted directed graphs. Roditty et. al~\cite{DBLP:journals/talg/RodittyTZ08} obtained a  roundtrip routing scheme with $(4k+\eps)$-stretch and routing tables of size $\Ot(\frac{1}\eps n^{1/k}\log W)$. If we set $k=3$ we get a $(12+\eps)$-stretch  roundtrip routing scheme   with routing tables of size $\Ot(\frac{1}\eps n^{1/3}\log W)$. In this section, we present 
a $7$-stretch roundtrip routing scheme  with routing tables of size $\Ot(n^{1/3})$.
We prove:
\Reminder{T-Compact-Directed-7-1/3}
% \begin{theorem}
%     \label{T-Compact-Directed-7-1/3}
%     Let $G=(V, E, w)$ be a weighted \textit{directed} graph.
%     There is a $7$-stretch compact roundtrip routing scheme that uses local routing tables of size $\tilde{O}(n^{1/3})$, vertex labels of size $\tilde{O}(1)$ and packet headers of size $\tilde{O}(1)$.
% \end{theorem}
First, we describe the preprocessing algorithm, which computes the routing tables and assigns vertex labels. The input to the algorithm is a graph $G = (V, E, w)$. 
The algorithm uses~\autoref{L-A-center} and~\autoref{L-TZ-Size} to build a hierarchy of vertex sets $V=A_0 \supseteq A_1 \supseteq A_2 \supseteq A_3=\emptyset$, where $|A_{i}|=n^{1-i/k}$. For every  $u\in V$ it holds that $|C(u, A_1)|=\Ot(n^{1/3})$ and $|B_i(u)| = \Ot(n^{1/k})$, where $0\leq  i \leq 3$.
Next, for every $u\in V$, the preprocessing algorithm computes $B(u)$ and $C(u)$. Then, for every $u\in V$, the algorithm sets the routing table $\RT(u)$ to:

\[
\RT(u) = \left\{ 
\begin{array}{ll}
L(w, T(B_0(u))) & \text{for every } w \in B_0(u) \\
L(u, T(B_0(w))) & \text{for every } w \text{ s.t. } u\in T(B_0(w)) \\
L(u, T(C(w))) & \text{for every } w \in A_2 \\
L(u, T(C(w))) & \text{for every } w \in B_1(u), \text{ if $P(u,w)\subseteq C(w)$}\\
L(w, T(C(p_2(z)))) & \text{for every } w \in  B_1(u), z = \arg\min_{x\in P(u,w), x\notin C(w)}\{d(u, x)\}
\end{array}
\right.
\]

and the label $\L(u)$ to:
\[
\L(u) = \{\L(u, T(B_0(u))), \L(p_1(u), T(B_0(u))), \L(u, T(C(p_2(u))))\}
\]

We now bound the size of the routing tables and the vertex labels.
\begin{lemma}\label{L-RT-and-L-size-directed-7}
    $|\RT(u)| = \Ot(n^{1/3})$ and $|\L(u)|=\Ot(1)$
\end{lemma}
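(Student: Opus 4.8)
The statement to prove is \autoref{L-RT-and-L-size-directed-7}, bounding $|\RT(u)| = \Ot(n^{1/3})$ and $|\L(u)| = \Ot(1)$. The plan is to go through the five components of $\RT(u)$ one by one, bound the number of vertices $w$ contributing each kind of entry, and observe that each entry is a single tree label of size $\Ot(1)$ by~\autoref{T-Route-In-Trees}. For the label $\L(u)$, there are only three tree labels, so the $\Ot(1)$ bound is immediate from~\autoref{T-Route-In-Trees}; I would dispose of this first.

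\textbf{Key steps for $\RT(u)$.} First, entries of the form $L(w, T(B_0(u)))$ for $w \in B_0(u)$: since $B_0(u) = B(u, A_1)$ is a bunch, $|B_0(u)| = \Ot(n^{1/k}) = \Ot(n^{1/3})$ by~\autoref{L-TZ-Size} (and \autoref{L-Ball-can-be-tree} ensures $|T(B_0(u))| \le 2|B_0(u)|$, though here we only need the count of vertices $w$, which is $|B_0(u)|$). Second, entries $L(u, T(B_0(w)))$ for $w \in C(u, A_1)$: the preprocessing explicitly invokes~\autoref{L-A-center} so that $|C(u, A_1)| = \Ot(n^{1/3})$, giving $\Ot(n^{1/3})$ such entries. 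Third, entries $L(u, T(C(w)))$ for $w \in A_2$: here $|A_2| = n^{1 - 2/3} = n^{1/3}$, so there are $n^{1/3}$ entries. Fourth and fifth, entries indexed by $w \in B_1(u)$ (one entry per $w$ in each of the two cases, split according to whether $P(u,w) \subseteq C(w)$): since $B_1(u) = B(u, A_1, A_2)$, \autoref{L-TZ-Size} gives $|B_1(u)| = \Ot(n^{1/3})$, so each of these cases contributes $\Ot(n^{1/3})$ entries. Summing the five contributions, $|\RT(u)| = \Ot(n^{1/3}) \cdot \Ot(1) = \Ot(n^{1/3})$.

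\textbf{Main obstacle.} The routine part is just citing the right size lemma for each family; the one place to be careful is the second family, $L(u, T(B_0(w)))$ for $w \in C(u, A_1)$, where the relevant set is a \emph{cluster} rather than a bunch, so I must make sure the preprocessing has actually set up the $A$-center structure of~\autoref{L-A-center} (it has: the text says ``$|C(u,A_1)|=\Ot(n^{1/3})$'') and not conflate $C(u, A_1)$ with the hierarchical clusters $C(w) = C(w, A_{i+1})$ used elsewhere. I also want to double-check that in the fifth family each $w \in B_1(u)$ contributes at most one label (it does: $z$ is uniquely determined as $\arg\min_{x \in P(u,w),\, x \notin C(w)}\{d(u,x)\}$, so there is exactly one entry per such $w$, and this case only arises when $P(u,w) \not\subseteq C(w)$, i.e., it is complementary to the fourth family). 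Since the number of trees is $|B_0(u)| + |C(u,A_1)| + |A_2| + |B_1(u)| + |B_1(u)| = \Ot(n^{1/3})$ and each tree label is $\Ot(1)$ bits by~\autoref{T-Route-In-Trees}, the bound follows.
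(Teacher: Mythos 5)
Your proposal is correct and follows essentially the same approach as the paper: enumerate the five families of tree labels in $\RT(u)$, bound the index set of each family by $\Ot(n^{1/3})$, and multiply by the $\Ot(1)$ tree-label size from~\autoref{T-Route-In-Trees}; the $|\L(u)|=\Ot(1)$ bound is the same three-tree-label count. One small attribution nit: for $|B_0(u)|=\Ot(n^{1/3})$ the paper cites~\autoref{L-A-center} (since $A_1$ is constructed via that lemma to also guarantee $|C(u,A_1)|=\Ot(n^{1/3})$ and that lemma explicitly bounds $|B(v,V,A)|$), whereas you cite~\autoref{L-TZ-Size}; the numeric bound is the same either way, so this does not affect correctness.
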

\begin{proof}
    From~\autoref{L-TZ-Size} it follows that $|B_i(u)|=\Ot(n^{1/3})$, for every $1\le i \le 2$. In addition we get that $|A_2|=\Ot(n^{1/3})$. From~\autoref{L-A-center} it follows that $|B_0(u)| = \Ot(n^{1/3})$ and $|C(u,A_1)|=\Ot(n^{1/3})$.
    From~\autoref{T-Route-In-Trees} it follows that for every tree $T$ it holds that $|\L(u,T)| = \Ot(1)$. Therefore, 
    $|\RT(u)| \le \Ot(1) \cdot ( |B_0(u)| + |C(u, A_1)| + |A_2| + |B_1(u)| + |B_1(u)|) = \Ot(n^{1/3})$, as required.
    
    The label of each vertex is composed of three tree labels. From~\autoref{T-Route-In-Trees} it follows that each tree label is of size $\Ot(1)$. Therefore, $|\L(u)| = 3\cdot\Ot(1) = \Ot(1)$, as required.
\end{proof}
The routing algorithm routes a message from $u$ to $v$  as follows. 
If $v\in B_0(u)$, then the algorithm routes the message using the tree $T(B_0(u))$. Otherwise, if $v \in C(u, A_1)$ then the algorithm routes the message using the tree $T(B_0(v))$.
If $v\notin B_0(u)$ and $v\notin C(u, A_1)$, then the algorithm checks if $p_1(v) \in B_1(u)$. 
In this case, if $P(u,p_1(v))\subseteq C(p_1(v))$ then the algorithm routes on $T(C(p_1(v)))$ from $u$ to $p_1(v)$, and then from $p_1(v)$ to $v$ on the tree $T(B_0(v))$ (see Figure~\ref{fig:directed-roundtrip-7} (a)).
Otherwise, if $P(u,p_1(v))\not\subseteq C(p_1(v))$, then the algorithm routes from $u$ to $p_1(v)$ on the tree $T(C(p_2(z)))$, where $z=\arg\min_{x\in P(u,p_1(v)), x\notin C(p_1(v))}\{d(u, x)\}$, and then the algorithm routes from $p_1(v)$ to $v$ on the tree $T(B_0(v))$ (see Figure~\ref{fig:directed-roundtrip-7} (b)).

Finally, if $p_1(v) \notin B_1(u)$, then the algorithm routes from $u$ to $v$ on the tree $T(C(p_2(v)))$ (see Figure~\ref{fig:directed-roundtrip-7} (c)). A pseudo-code for the routing algorithm is given in Algorithm~\ref{Algorithm-Routing-Directed}. 

\begin{algorithm2e}[t] 
\caption{$\Route(u, v)$}\label{Algorithm-Routing-Directed}
\lIf {$v\in B_0(u)$} {
    Route from $u$ to $v$ on $T(B_0(u))$
}
\lIf {$v\in C(u, A_1)$} {
    Route from $u$ to $v$ on $T(B_0(v))$
}
\If{$p_1(v)\in B_1(u)$}{
    \If{$P(u,p_1(v))\subseteq C(p_1(v))$} {
        Route from $u$ to $p_1(v)$ on $T(C(p_1(v)))$ and then route from $p_1(v)$ to $v$ on $T(B_0(v))$.
    }
    \Else {
        $z \gets \arg\min_{x\in P(u,p_1(v)), x\notin C(p_1(v))}\{d(u, x)\}$\;
        Route from $u$ to $p_1(v)$ on $T(C(p_2(z)))$ and then route from $p_1(v)$ to $v$ on $T(B_0(v))$.
    }
}
\Else{
    Route from $u$ to $v$ on $T(C(p_2(v)))$.
}
\end{algorithm2e}  

In the next lemma, we show that all intermediate vertices have the necessary information to complete the routing process once the routing tree is determined.
\begin{lemma}\label{L-routing-is-continuent-directed-7}
The following properties hold:

\begin{enumerate}
    \item If $ v \in B_0(u)$, then for every $ w \in P(u, v) $, we have $ \L(w, T(B_0(u))) \in \RT(w) $. \label{i-b-1}
    \item If $ v \in C(u, A_1) $, then for every $ z \in P(u, v) $, we have $ \L(z, T(B_0(v))) \in \RT(z) $. \label{i-b-2}
    \item If $ p_1(v) \in B_1(u)$ and $ P(u, p_1(v)) \subseteq C(v)$, then for every $ z \in P(u, p_1(v)) $, we have $ \L(z, T(C(p_1(v)))) \in \RT(z) $. \label{i-b-3}
    \item If $ w \in A_2 $, then for every $ z \in P(u, w) \cup P(w, v) $, we have $ \L(z, T(C(w))) \in \RT(z) $. \label{i-b-4}
    \item For every $w \in P(p_1(v), v)$, we have $\L(w, B_0(v)) \in \RT(w) \cup \L(v)$. \label{i-b-5}
\end{enumerate}

\end{lemma}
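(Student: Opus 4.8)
The plan is to verify each of the five properties by tracing the definition of $\RT(\cdot)$ given in the preprocessing algorithm and invoking the "path stays inside the cluster/ball" lemmas where needed. In every case the strategy is the same: show that the tree label $\L(z,T(\cdot))$ we need at an intermediate vertex $z$ is exactly one of the entries that the preprocessing algorithm placed in $\RT(z)$, which in turn reduces to showing that $z$ lies in the appropriate bunch or that the relevant vertex lies in $B_0(z)$, $B_1(z)$, $C(z,A_1)$, or $A_2$.

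For Property~\ref{i-b-1}, since $v\in B_0(u)=B(u,A_1)$, Lemma~\ref{L-Ball-can-be-tree} (applied with $X=A_1$) shows that every $w\in P(u,v)$ also lies in $B_0(u)$, and hence $w$ is a vertex of the tree $T(B_0(u))$; the first line of $\RT(w)$'s definition, reading ``$L(w,T(B_0(u)))$ for every $w\in B_0(u)$'', is phrased from $w$'s perspective, so I will need to observe that $u\in C(w,A_1)$ or re-index the bookkeeping appropriately — this is the one slightly delicate spot, since the routing table stores, for $w'$, the label $\L(w',T(B_0(w')))$ for $w'\in B_0(w)$ as well as $\L(w,T(B_0(w')))$ for $w\in C(w,A_1)$. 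Concretely: with $w\in P(u,v)\subseteq B_0(u)$ we want $\L(w,T(B_0(u)))\in\RT(w)$; this matches the line ``$L(w,T(B_0(w')))$ for every $w'\in C(w,A_1)$'' once we check $u\in C(w,A_1)$, which follows because $d(u\leftrightarrow w)<d(u\leftrightarrow A_1)$ (as $w\in B_0(u)$) and $d(w\leftrightarrow A_1)\le d(w\leftrightarrow u)+d(u\leftrightarrow A_1)$ needs care — I expect to instead argue directly from $w\in B_0(u)$ that $u$ is in the "$C$-side" set stored at $w$, matching how the table is defined. Property~\ref{i-b-2} is symmetric: if $v\in C(u,A_1)$ then $u\in B_0(v)$, so $P(u,v)\subseteq B_0(v)$ by Lemma~\ref{L-Ball-can-be-tree}, and every $z$ on this path is a vertex of $T(B_0(v))$ with $v$ in the appropriate stored set, giving $\L(z,T(B_0(v)))\in\RT(z)$.

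Property~\ref{i-b-4} follows from the third line of the definition: for any $w\in A_2$, every vertex $z$ stores $\L(z,T(C(w)))$ unconditionally (since the entry ``$L(u,T(C(w)))$ for every $w\in A_2$'' is present for all $u$), and $P(u,w)\cup P(w,v)\subseteq C(w)$ because $C(w,A_3)=C(w,\emptyset)=V$ — so trivially the path lies in $C(w)$ and every $z$ on it has the needed label. Property~\ref{i-b-3} uses Lemma~\ref{L-T(C(u))=C(u)-undirected}'s directed analogue only under the stated hypothesis $P(u,v)\subseteq C(v)$: since $p_1(v)\in B_1(u)$ and $P(u,p_1(v))\subseteq C(p_1(v))$ by assumption, every $z$ on that path lies in $C(p_1(v))$, hence $p_1(v)\in B(z)$ and in particular $p_1(v)\in B_1(z)$; then the fourth line of $\RT(z)$, ``$L(z,T(C(w)))$ for $w\in B_1(z)$ if $P(z,w)\subseteq C(w)$'', applies provided $P(z,p_1(v))\subseteq C(p_1(v))$ — which holds since $P(z,p_1(v))$ is a subpath of $P(u,p_1(v))\subseteq C(p_1(v))$. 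Property~\ref{i-b-5} is immediate: $v\in B_0(v)$ and $P(p_1(v),v)\subseteq B_0(v)$ by Lemma~\ref{L-Ball-can-be-tree}, so each $w$ on it is a vertex of $T(B_0(v))$ with $v$ in its stored $C(w,A_1)$-indexed set, giving $\L(w,T(B_0(v)))\in\RT(w)$.

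The main obstacle I anticipate is not any deep argument but the bookkeeping mismatch between "labels stored from $w$'s own perspective" versus "from the perspective of the vertex whose bunch/ball $w$ belongs to": the routing table entries $L(w,T(B_0(u)))$ for $w\in B_0(u)$ and $L(u,T(B_0(w)))$ for $w\in C(u,A_1)$ are duals of one another, and one must carefully match each required label to the correct line, checking the corresponding membership ($u\in C(w,A_1)$ iff $w\in B_0(u)$, via the definitions of bunch and cluster with respect to $A_1$ and the fact that $B_0$ and $C(\cdot,A_1)$ are defined by the same inequality $d(u\leftrightarrow w)<\max\{d(u\leftrightarrow A_1),d(w\leftrightarrow A_1)\}$ read in the two directions). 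Once these correspondences are spelled out, each of the five properties reduces to one invocation of Lemma~\ref{L-Ball-can-be-tree} (for the $B_0$ cases), the triviality $C(w,A_3)=V$ (for the $A_2$ case), or the subpath-containment observation (for the $B_1$ case).
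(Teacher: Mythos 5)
Your reasoning for Properties 1 through 4 is essentially correct and tracks the paper's approach: the $w \in B_0(u) \iff u \in C(w, A_1)$ duality you identify resolves the bookkeeping for Properties 1 and 2, Property 3 reduces to a subpath observation, and Property 4 is immediate since $C(w) = V$ for $w \in A_2$.

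Your argument for Property 5, however, has a genuine gap. You claim $P(p_1(v), v) \subseteq B_0(v)$ via Lemma~\ref{L-Ball-can-be-tree}, but this is false: $p_1(v) \notin B_0(v)$. By definition $p_1(v)$ is the vertex of $A_1$ realizing $d(v \leftrightarrow A_1)$, so $d(v \leftrightarrow p_1(v)) = d(v \leftrightarrow A_1)$, and the strict inequality in the bunch definition $B_0(v) = B(v, V, A_1) = \{x : d(v \leftrightarrow x) < d(v \leftrightarrow A_1)\}$ excludes $p_1(v)$. Consequently Lemma~\ref{L-Ball-can-be-tree} cannot be invoked on $P(p_1(v), v)$, since it requires the non-center endpoint of the path to already lie in the ball. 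The paper's proof handles $p_1(v)$ as a separate case: the label $\L(p_1(v), T(B_0(v)))$ is not stored in $\RT(p_1(v))$ at all, but in the \emph{vertex label} $\L(v)$ of the destination (recall $\L(v)$ explicitly contains $\L(p_1(v), T(B_0(v)))$), which the router at $p_1(v)$ receives as the destination label. Only for the strictly interior vertices $w \in P(p_1(v), v)$ with $w \neq p_1(v)$ does one argue $w \in B_0(v)$, hence $v \in C(w, A_1)$, and then the second line of the $\RT$ definition yields $\L(w, T(B_0(v))) \in \RT(w)$. Your proposal misses both the obstruction ($p_1(v) \notin B_0(v)$) and the resulting two-case split, so as written it does not establish Property 5.
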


\begin{proof}
From~\autoref{L-Ball-can-be-tree}, we know that for every $ v \in B(u, X) $ and $ w \in P(u, v) $, we have that $ w \in B(u, X) $. Therefore, properties~\ref{i-b-1} and~\ref{i-b-2} hold.

For property~\ref{i-b-3}, we have that the entire path $ P(u, p_1(v)) $ is contained in $ C(p_1(v))$. Hence, any sub-path of $ P(u, p_1(v)) $ is also in $ C(p_1(v))$, and we have $ \L(z, T(C(p_1(v)))) \in \RT(z) $ for every $ z \in P(u, p_1(v)) $.

For property~\ref{i-b-4}, since $ w \in A_2 $, we know that $ C(w) = V $. Therefore, all vertices in the graph (and specifically $ z$) hold $ \L(z, T(C(w))$.

For property~\ref{i-b-5}, for $ p_1(v)$, we have $ \L(p_1(v), T(B_0(v))) \in \L(v)$. For every other $ w \in P(p_1(v), v) $, we have from~\autoref{L-Digraph-Clusters-are-continuent} that $v\in C(w)$ and therefore $w \in T(B(v))$, and we have $\L(w,T(B_0(v)))$ in $\RT(w)$, as required. 
\end{proof}

We now turn to prove that the stretch of the compact roundtrip routing scheme is $7$.
\begin{lemma}
    \label{L-Routing-directed-Approximation}
    $\hat{d}(u\leftrightarrow v) \le 7\cdot d(u \leftrightarrow v)$
\end{lemma}
\begin{proof}
First, consider the case that $ v \in B_0(u)$. In this case, the routing algorithm routes from $u$ to $v$ along $ P(u, v) $, so $ \hat{d}(u, v) = d(u, v)$. Since $ v \in B_0(u)$, by the definition of $ C(v, A_1) $, we know that $ u \in C(v, A_1) $. Therefore, the routing algorithm from $v$ to $u$ follows $ P(v, u$. Thus, 
\[
\hat{d}(u \leftrightarrow v) = \hat{d}(u, v) + \hat{d}(v, u) = d(u, v) + d(v, u) = d(u \leftrightarrow v),
\]
as required. Similarly, if $ v \in C(u, A_1) $, using symmetrical arguments, we get that $ \hat{d}(u \leftrightarrow v) = d(u \leftrightarrow v) $, as required.

Next, consider the case that $ v \notin B_0(u)$ and $ v \notin C(u, A_1) $. In the following claim, we show that in this case, $ \hat{d}(u, v) \leq d(u, v) + 3d(u \leftrightarrow v) $. Using this claim, we obtain:
\[
\hat{d}(u \leftrightarrow v) = \hat{d}(u, v) + \hat{d}(v, u) \leq d(u, v) + 3d(u \leftrightarrow v) + d(v, u) + 3d(u \leftrightarrow v) \leq 7d(u \leftrightarrow v),
\]
as required.

\begin{subclaim} \label{C-lemma-13}
    If $ v \notin B_0(u)$ and $ v \notin C(u, A_1) $, then:
    \[
    \hat{d}(u, v) \leq d(u, v) + 3d(u \leftrightarrow v).
    \]
\end{subclaim}

\begin{proof}
Since $ v \notin B_0(u) $ and $ v \notin C(u, A_1) $, we know from the definitions of $ B_0(u) $ and $ C(u, A_1) $ that:
\begin{equation}\label{e-bound-p_1}
d(u \leftrightarrow p_1(u)) \leq d(u \leftrightarrow v)
\quad \text{and} \quad
d(v \leftrightarrow p_1(v)) \leq d(u \leftrightarrow v).
\end{equation}

We divide the proof into two cases: the case that $ p_1(v) \notin B(u)$ and the case that $ p_1(v) \in B(u)$.
If $ p_1(v) \notin B(u)$
we can apply~\autoref{L-Bound-delta-2d} and get: 
\begin{equation}\label{e-bound-p2}
    d(v \leftrightarrow p_2(v)) \leq 2d(u \leftrightarrow v) + d(v \leftrightarrow p_1(v)) \stackrel{(\ref{e-bound-p_1})}{\leq} 3d(u \leftrightarrow v).
\end{equation}
Since $ p_1(v) \notin B(u)$, the routing algorithm routes from $u$ to $v$ on the tree $ T(C(p_2(v))$. Therefore, we have:
\[
\hat{d}(u, v) = d_{T(C(p_2(v)))}(u, v) \leq d(u, p_2(v)) + d(p_2(v), v) \stactri\leq d(u, v) + d(v \leftrightarrow p_2(v)) \stackrel{(\ref{e-bound-p2})}\leq d(u, v) + 3d(u \leftrightarrow v),
\]
as required.
\begin{figure}
    \centering
    \tikzset{every picture/.style={line width=0.75pt}} %set default line width to 0.75pt        
\input{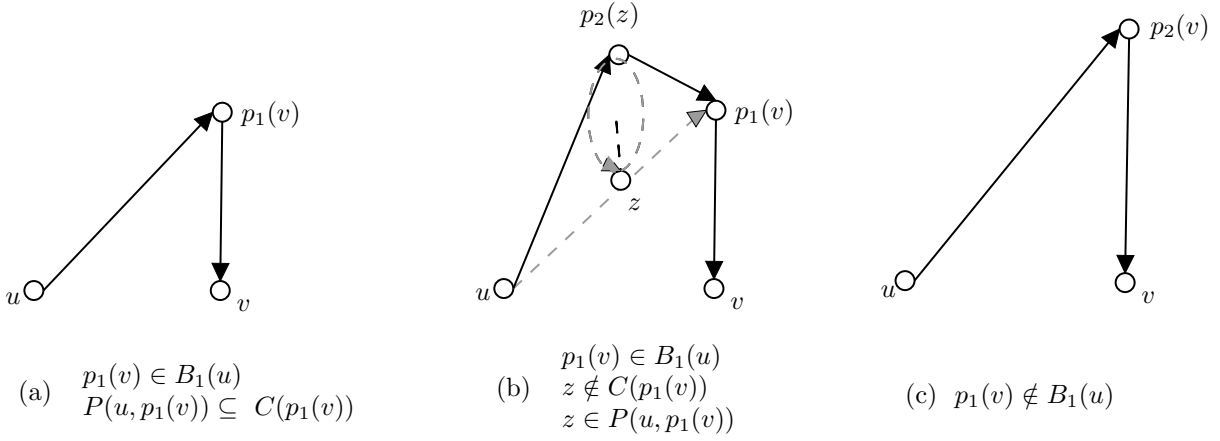}

    \caption{The routing of~\autoref{T-Compact-Directed-7-1/3}. $v\notin B_0(u) \cup C(u, A_1)$.}
    \label{fig:directed-roundtrip-7}
\end{figure}

Next, consider the case that $ p_1(v) \in B_1(u)$.
If $ P(u, p_1(v)) \subseteq C(p_1(v))$, then the routing algorithm routes along the shortest path from $u$ to $ p_1(v)$. Therefore, we have:
\[
\hat{d}(u, v) \leq d(u, p_1(v)) + d(p_1(v), v) \stactri\leq d(u, v) + d(v \leftrightarrow p_1(v)) \stackrel{(\ref{e-bound-p_1})}{\le} d(u,v) + d(u \leftrightarrow v),
\]
as required.

Otherwise, if $ P(u, p_1(v)) \not\subseteq C(p_1(v))$, let $ z $ be the first vertex in $ P(u, p_1(v)) $ such that $ z \notin C(p_1(v))$. 
From the definition of $ C(p_1(v))$, it follows that $ p_1(v) \notin B_1(z$. From the definition of $B_1(z)$ we get that $
d(z \leftrightarrow p_2(z)) \leq d(z \leftrightarrow p_1(v)).
$
Since $ z \in P(u, p_1(v)) $, it follows that
$
d(z \leftrightarrow p_1(v)) \leq d(u \leftrightarrow p_1(v)).
$
Combining these inequalities, we get that:
$
d(z \leftrightarrow p_2(z)) \leq d(u \leftrightarrow p_1(v)).
$

In this case, the routing algorithm routes from $u$ to $ p_1(v)$ on the tree $ T(C(p_2(z))$, and then from $ p_1(v)$ to $v$. We get that:
\begin{align*}
\hat{d}(u, v) &= d_{T(C(p_2(z)))}(u, p_1(v)) + d(p_1(v), v) \\
&\leq d(u, p_2(z)) + d(p_2(z), p_1(v)) + d(p_1(v), v) \\
&\stactri\leq d(u, z) + d(z, p_2(z)) + d(p_2(z), z) + d(z, p_1(v)) + d(p_1(v), v) \\
&= d(u, z) +d(z \leftrightarrow p_2(z)) +  d(z, p_1(v))  + d(p_1(v), v).
\end{align*}
Since $z \in P(u, p_1(v)) $, we know that $d(u, z) + d(z, p_1(v)) = d(u, p_1(v)) $. Recall that $ d(z \leftrightarrow p_2(z)) \leq d(u \leftrightarrow p_1(v)) $. Therefore:
\begin{align*}
\hat{d}(u, v) &\leq d(u, z) + d(z, p_1(v))  + d(p_1(v), v) + d(z \leftrightarrow p_2(z)) \\
&\leq d(u, p_1(v)) + d(p_1(v), v) + d(u \leftrightarrow p_1(v)) \\
&\stactri\leq d(u, v) + d(v \leftrightarrow p_1(v)) + d(u \leftrightarrow v) + d(v \leftrightarrow p_1(v))\\
&\stackrel{(\ref{e-bound-p_1})}{\leq} d(u, v) + d(u \leftrightarrow v) + d(u \leftrightarrow v) + d(u \leftrightarrow v)
= d(u, v) + 3d(u \leftrightarrow v),
\end{align*}
as required.
\end{proof}

From~\autoref{C-lemma-13}, we have that $ \hat{d}(u, v) \leq d(u, v) + 3d(u \leftrightarrow v) $ and $ \hat{d}(v, u) \leq d(u, v) + 3d(u \leftrightarrow v) $. Therefore:
\[
\hat{d}(u \leftrightarrow v) = \hat{d}(u, v) + \hat{d}(v, u) \leq d(u, v) + 3d(u \leftrightarrow v) + d(v, u) + 3d(u \leftrightarrow v) = 7d(u \leftrightarrow v),
\]
as required.
\end{proof}

\autoref{T-Compact-Directed-7-1/3} follows from~\autoref{L-RT-and-L-size-directed-7},~\autoref{L-routing-is-continuent-directed-7} and~\autoref{L-Routing-directed-Approximation}.
\section{$\approx 2.64k$-stretch compact routing scheme for undirected graphs} \label{S-amortized-undirected}
% Undirected $\approx 2.64k$-stretch with average $n^{1/k}$ space}
In this section, we consider routing in weighted undirected graphs. We prove:
\Reminder{T-Compact-Amortized-3k}
% Chechik~\cite{DBLP:conf/podc/Chechik13} obtained a compact routing scheme with $3.68k$-stretch and routing tables of size $\Ot(n^{1/k})$. We present a $2.64k$-stretch routing scheme with routing tables of \textit{total} size $\Ot(n^{1+1/k})$. 
First, in Section~\ref{S-average-prepr}, we introduce our preprocessing algorithm, along with the routing tables and labels used in our compact routing scheme.
Next, for the sake of simplicity, we describe in Section~\ref{S-average-simple} a routing algorithm that assumes the distance $d(u,v)$ is known at the source vertex $u$ when routing to a destination $v$. In this setting, we present an optimal $(2k-1)$-stretch routing scheme.
In Section~\ref{S-average-full}, we remove the assumption that the distance $d(u,v)$ is known and present our $\approx 2.64k$-stretch routing algorithm.

\subsection{Preprocessing algorithm}\label{S-average-prepr}
The preprocessing algorithm, which computes the routing tables and assigns vertex labels, works as follows. 
The input to the algorithm is a graph $G = (V, E, w)$ and an integer parameter $k > 1$. The algorithm uses~\autoref{L-TZ-Size} to build a hierarchy of vertex sets $A_0, A_1, \ldots, A_k$, where $A_0=V$, $A_k=\emptyset$, $A_{i+1}\subseteq A_i$, $|A_{i}|=n^{1-i/k}$ and $|B_i(u)| = \Ot(n^{1/k})$, for every $0 \leq i \leq k-1$. Let $h_i(u)=d(u,A_i)$.
Next, for every  $u\in V$, the preprocessing algorithm computes $B(u)$ and $C(u)$. Then, for every  $u\in V$, the algorithm sets the routing table $\RT(u)$ to:
\[
\RT(u) = \{ \L(u, T(C(v))) \mid v \in B(u) \} \cup \{L(v,T(C(u))) \mid v\in C(u) \},
\]  
and the label $\L(u)$ to:
\[
\L(u) = \{ \pair{h_i(u), \L(u, T(C(p_i(u))))} \mid i \in [k] \}.
\]

Next, we bound the total size of the routing tables and the size of each label. 
\begin{lemma}\label{L-routing-undirected-average-RT-and-L-size}
    $\sum_{u\in V}|\RT(u)|=\Ot(n^{1+1/k})$, and $|\L(u)| = \Ot(k)$
\end{lemma}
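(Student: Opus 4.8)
I need to show that $\sum_{u\in V}|\RT(u)| = \Ot(n^{1+1/k})$ and $|\L(u)| = \Ot(k)$, where $\RT(u) = \{\L(u,T(C(v))) \mid v\in B(u)\}\cup\{\L(v,T(C(u)))\mid v\in C(u)\}$ and $\L(u) = \{\pair{h_i(u),\L(u,T(C(p_i(u))))}\mid i\in[k]\}$.

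The plan is to bound the two quantities separately, handling the label size by a direct count and the total routing-table size by exploiting the standard bunch--cluster duality, so that the hard-to-control sum $\sum_u|C(u)|$ telescopes into the easy-to-control sum $\sum_u|B(u)|$.

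The label bound is immediate. By construction $\L(u)$ consists of $k$ pairs, and the $i$-th pair stores the scalar $h_i(u)$ (an $O(\log n)$-bit number) together with the tree label $\L(u, T(C(p_i(u))))$, which by~\autoref{T-Route-In-Trees} has size $\Ot(1)$. Hence $|\L(u)| = k\cdot\Ot(1) = \Ot(k)$.

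For the routing tables, write $\RT(u) = R_1(u)\cup R_2(u)$ with $R_1(u) = \{\L(u,T(C(v)))\mid v\in B(u)\}$ and $R_2(u) = \{\L(v,T(C(u)))\mid v\in C(u)\}$. Each entry of either set is a single tree label of size $\Ot(1)$ by~\autoref{T-Route-In-Trees}, so $|R_1(u)| = \Ot(|B(u)|)$ and $|R_2(u)| = \Ot(|C(u)|)$. For $R_1$, by~\autoref{L-TZ-Size} we have $|B_i(u)| = O(n^{1/k})$ for every $i\in[0,k-1]$, hence $|B(u)| = \Ot(n^{1/k})$, and summing over $u$ gives $\sum_{u\in V}|R_1(u)| = \Ot(n\cdot n^{1/k}) = \Ot(n^{1+1/k})$. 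The \emph{main point} is bounding $\sum_{u\in V}|R_2(u)| = \Ot(\sum_{u\in V}|C(u)|)$, since no useful per-vertex bound on $|C(u)|$ holds --- which is exactly why average storage is needed. Here I would invoke the bunch--cluster duality: for all $v,w\in V$, $v\in C(w)$ if and only if $w\in B(v)$. To see it, fix the unique $i$ with $w\in A_i\setminus A_{i+1}$, so $C(w)=C(w,A_{i+1})$; then $v\in C(w)$ means $d(w\leftrightarrow v) < d(v\leftrightarrow A_{i+1})$, which together with $w\in A_i$ is precisely $w\in B_i(v)\subseteq B(v)$. Conversely, if $w\in B_j(v)$ then $w\in A_j$ and $d(v\leftrightarrow w) < d(v\leftrightarrow A_{j+1})$; the strict inequality forces $w\notin A_{j+1}$, so $j=i$ and $v\in C(w,A_{i+1})=C(w)$. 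Counting pairs, $\sum_{w\in V}|C(w)| = |\{(v,w): v\in C(w)\}| = |\{(v,w): w\in B(v)\}| = \sum_{v\in V}|B(v)| = \Ot(n^{1+1/k})$, reusing the bound just established for $R_1$.

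Combining the two estimates gives $\sum_{u\in V}|\RT(u)| = \sum_{u\in V}\big(|R_1(u)|+|R_2(u)|\big) = \Ot(n^{1+1/k})$, as required. The only step needing care is the duality claim --- specifically that each $w$ lies in at most one $B_i(v)$ and that this index coincides with the layer $A_i\setminus A_{i+1}$ containing $w$ --- but this follows directly from the nesting $A_{i+1}\subseteq A_i$ and the strict inequality in the definition of the bunches, so I do not anticipate a real obstacle.
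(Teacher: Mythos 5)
Your proof is correct and follows essentially the same route as the paper: bound $|\L(u)|$ directly, split $\RT(u)$ into a bunch part and a cluster part, and use the bunch--cluster duality $v\in C(w)\iff w\in B(v)$ to convert $\sum_u|C(u)|$ into $\sum_u|B(u)| = \Ot(n^{1+1/k})$. The only difference is that you spell out a careful proof of the duality (which the paper merely asserts, with a typo reading $u\in B(u)$ where it means $w\in B(v)$); your verification that each $w$ lies in exactly $B_i(v)$ for the unique $i$ with $w\in A_i\setminus A_{i+1}$ is correct and is a worthwhile addition.
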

\begin{proof}
    From~\autoref{T-Route-In-Trees}, we know that for every tree $ T $, it holds that $ |\L(u, T)| = \Ot(1)$. Therefore, we get that 
$
\sum_{u \in V} |\RT(u)| = \Ot\left(\sum_{u \in V} \big(|B(u)| + |C(u)|\big)\right)=\sum_{u \in V}(|B(u)|)+\sum_{u \in V}(|C(u)|)
$.
Since $w\in C(u)$ iff $u\in B(u)$, we get that
$
\sum_{u \in V} |C(u)| = \sum_{u \in V} |B(u)|
$. 
Therefore, we get that
$
\sum_{u \in V} |\RT(u)| = 2\sum_{u \in V}(|B(u)|) = \Ot(\sum_{u \in V}(|B(u)|))
$.
From~\autoref{L-TZ-Size}, it follows that $ |B(u)| = \Ot(n^{1/k})$. Therefore, $
\sum_{u \in V} |\RT(u)| = \Ot\left(n \cdot n^{1/k}\right) = \Ot\left(n^{1 + 1/k}\right)
$, as required.
    
    The label of each vertex is composed of $k$ tree labels and $k$ values. From~\autoref{T-Route-In-Trees} we know that $|L(u,T)|=\Ot(1)$. Therefore, $|\L(u)|=k\cdot (\Ot(1)+O(1))=\Ot(k)$, as required.
\end{proof}
\subsection{A $(2k-1)$-stretch routing algorithm assuming  $d(u,v)$ is known}\label{S-average-simple}
In this section, we describe our routing algorithm from the source  vertex~$u$ to the destination vertex~$v$, assuming that the distance~$d(u,v)$ is given to the algorithm together with the routing table~$\RT(u)$, and the label~$\L(v)$. The routing algorithm works as follows. 
For each $0 \le i \le k-1$, the algorithm performs the following steps. If $p_i(v) \in B_i(u)$, it routes from $u$ to $v$ using the tree $T(C(p_i(v)))$. Otherwise, if the inequality $h_{i+1}(v) > d(u,v) + h_i(u)$ holds, the algorithm routes from $u$ to $v$ in the tree $T(C(p_i(u)))$. If neither condition is satisfied, the algorithm proceeds to the next iteration.
A pseudo-code for the routing algorithm is given in Algorithm~\ref{Algorithm-Routing-amortized-simple}.

In the following lemma, we show that if $h_{i+1}(v)>d(u,v) + h_{i}(u)$ then  $v\in C(p_i(u))$ and therefore in such a case it is possible to route 
from $u$ to $v$ in the tree $T(C(p_i(u)))$.

\begin{algorithm2e}[t] 
\caption{$\Route(u, v, d(u,v))$}\label{Algorithm-Routing-amortized-simple}
\For{$i \gets 0$ to $k-1$}{
    \If{$p_i(v)\in B_i(u)$}{Route from $u$ to $v$ in $T(C(p_{i}(v)))$}
    \If{$h_{i+1}(v) > h_{i}(u)+d(u,v)$} {
        Route from $u$ to $v$ in $T(C(p_{i}(u)))$
    }
}
\end{algorithm2e}

\begin{lemma} \label{L-average-simple}
    If $h_{i+1}(v)>d(u,v) + h_{i}(u)$ then $v\in C(p_i(u))$.
\end{lemma}
\begin{proof}
    From the triangle inequality, we have that $d(v,p_i(u)) \le d(v,u)+h_i(u)$. Therefore,  $d(v,p_i(u))< h_{i+1}(v)$. By the definitions of $B(v)$ and $C(p_i(u))$ we have  $p_i(u)\in B(v)$ and $v\in C(p_i(u))$, as required.
\end{proof}

Next, in the following lemma, we show that the stretch of the routing scheme is $(2k-1)$.

\begin{lemma}\label{L-simple-average-corr}
    $\hat{d}(u,v) \le (2k-1)d(u,v)$
\end{lemma}
\begin{proof}
 To prove the lemma, we first prove the following claim.
\begin{subclaim}\label{C-simple-average-corr}
    For every $0\le i\le k$,  either 
$\hat d(u,v)\le(2i-1)d(u,v)$ 
or 
$h_i(u),h_i(v)\le id(u,v)$. 
\end{subclaim}
\begin{proof}
We prove the lemma by induction on $i$.
For the base case $i=0$, we have $h_0(u)=h_0(v)=0$, and therefore the second condition holds.

Next, we assume that the claim holds for $i\ge 0$ and prove it for $i+1$.  
Since the claim holds for $i$, we know that either $\hat d(u,v)\le(2i-1)d(u,v)$ or $h_i(u),h_i(v)\le id(u,v)$.
If $\hat d(u,v)\le(2i-1)d(u,v)$ then $\hat d(u,v)\le(2(i+1)-1)d(u,v)=(2i+1)d(u,v)$ and the claim holds.  Therefore, we consider the case that:
\begin{equation}\label{e--a}
    h_i(u),h_i(v)\le id(u,v)
\end{equation}
In the routing algorithm, if $p_i(v)\in B_i(u)$, then the algorithm routes on the shortest path from $u$ to $v$ in $T(C(p_i(v)))$, and therefore:
$$
\hat d(u,v)\le d(u,p_i(v))+d(p_i(v),v)\stactri\le d(u,v)+2h_i(v)  \stackrel{\ref{e--a}}\le (2i+1)d(u,v).
$$
Otherwise, if $p_i(v)\notin B_i(u)$, then:
$$
h_{i+1}(u)\le d(u,p_i(v)) \stactri\le d(u,v)+h_i(v)\stackrel{\ref{e--a}}\le(i+1)d(u,v).
$$
Therefore we get that either $\hat{d}(u,v) \le (2i+1)d(u,v)$ or $h_{i+1}(u)\le (i+1)d(u,v)$.

In addition, the algorithm checks if $h_{i+1}(v)>d(u,v) + h_{i}(u)$, if this is the case the algorithm routes from $u$ to $v$ in $T(C(p_i(u)))$, and therefore:
$$\hat{d}(u,v) \le d(u,p_i(u)) + d(p_i(u),v) \stactri\le 2h_i(u)+d(u,v) \stackrel{\ref{e--a}}\le (2i+1)d(u,v).$$
Otherwise, we have that $h_{i+1}(v)\le d(u,v) + h_{i}(u) \stackrel{\ref{e--a}}\le (i+1)d(u,v)$, as required.
Overall, we get that either $\hat d(u,v)\le(2(i+1)-1)d(u,v)$ or $h_{i+1}(u),h_{i+1}(v)\le(i+1)d(u,v)$, as required.
\end{proof}

We move to complete the proof of the lemma. Notice that since $A_k=\emptyset$, we have that $h_k(u)=\infty$ and $h_k(v)=\infty$, and therefore the first term of~\autoref{L-simple-average-corr} holds for $i=k$ and therefore $\hat{d}(u,v) \le (2k-1)d(u,v)$, as required.
\end{proof}

\subsection{$\approx 2.64k$-stretch compact routing scheme}\label{S-average-full}
Next, we describe the main modifications to the routing algorithm from Section~\ref{S-average-simple} to transform it into a proper routing algorithm — that is, one that does not rely on the assumption that $d(u,v)$ is known in advance when routing from $u$ to $v$.

Therefore,  we introduce an estimate $\hat{\delta}$ that  always satisfies $\hat{\delta} \leq d(u,v)$.  
Since $\hat{\delta}$ is an estimation of $d(u,v)$ that might be strictly smaller than $d(u,v)$ then  even if 
$h_{i+1}(v)>\hat{\delta} + h_{i}(u)$ it might be that $v\notin C(p_i(u))$. To overcome this hurdle, we use a different condition that when satisfied 
implies that $\hat{\delta} \ll h_{i+1}(v) -h_{i}(u)$, and
we route from $u$ to $p_i(u)$ and check whether $v\in C(p_i(u))$. If  $v\in C(p_i(u))$ then we route from $p_i(u)$ to $v$. Otherwise, if  $v\notin C(p_i(u))$ then we route from $p_i(u)$ back to $u$. 
Since $v \notin C(p_i(u))$, it follows that $h_{i+1}(v) - h_i(u) \leq d(u,v)$.  
Consequently, we can safely update the estimate $\hat{\delta}$ to $h_{i+1}(v) - h_i(u)$,  
while still preserving the invariant $\hat{\delta} \leq d(u,v)$.  
We then continue to the next iteration. 
Notice that since $B(u)$ is known, this problem occurs only on even layers; in odd layers we can simply check whether $p_i(v)\in B(u)$ using the routing table of $u$. Our modification, therefore, revolves around estimating the more complex issue of whether $p_i(u)\in B(v)$.

Next, we formally describe the routing algorithm. The algorithm routes a message from $u$ to $v$ as follows.  
Let $ \ell = \min\{i \mid p_{i}(v) \in B(u)\} $, and let $ a = \lfloor \frac{\ell}{2} \rfloor $, and let $c_0,\ldots,c_a$ be a sequence of constants such that $c_0=1$ and $c_i=2-\frac{a-i}{a+\sum_{j=0}^{i-1}c_j}$, for every $1\le i \le a$. \footnote{The $c_i$ constants follow from the tradeoff presented in \autoref{C-c_i-reason} and they are the reason for the strange stretch of $2.68$.}
Let $\hat{\delta}_0 = \max_{0 \leq i \leq \ell}(h_{i+1}(u) - h_i(v))$. Let $r=0$ and let $j$ be a list, initially empty, that stores indices.

For each $0 \le i \le a$, the algorithm performs the following steps. If $ h_{2i+1}(v) > h_{2i}(u) + c_i \cdot \hat{\delta}_r$, then the algorithm routes from $u$ to $p_{2i}(u)$ on the tree $T(C(p_{2i}(u)))$. 
At the vertex $p_{2i}(u)$, the algorithm accesses $\RT(p_{2i}(u))$ to determine whether $v \in C(p_{2i}(u))$. If this is the case, it retrieves $\L(v, T(C(p_{2i}(u))))$ from $\RT(p_{2i}(u))$ and routes directly from $p_{2i}(u)$ to $v$ along the tree $T(C(p_{2i}(u)))$. Otherwise, if $v \notin C(p_{2i}(u))$, the algorithm routes back from $p_{2i}(u)$ to $u$ using the same tree. In this case, the algorithm increments $r$ by $1$, and sets $j(r) = i$ and $\hat{\delta}_r = h_{2i+1}(v) - h_{2i}(u)$. The algorithm then proceeds to the next iteration.

If the loop ends without finding any $i$ for which $v \in C(p_{2i}(u))$, the algorithm routes from $u$ to $v$ on the tree $T(C(p_{\ell}(v)))$. In our analysis, we let $b$ be the maximum value of $r$ reached in the algorithm.
A pseudo-code for the routing algorithm is given in~\autoref{Algorithm-Routing-amortized}.

\begin{algorithm2e}[t] 
\caption{$\Route(u, v)$}\label{Algorithm-Routing-amortized}
$\ell \gets \min\{i\mid p_{i}(v)\in B(u)\}$ \\
$a \gets \floor{\frac{\ell}2}$ \\
$r \gets 0, j \gets List()$ \\
$\hat{\delta}_0 \gets \max_{0 \leq i < \ell}(h_{i+1}(u)-h_i(v))$ \\
\For{$i \gets 0$ to $a$}{
    \If{$h_{2i+1}(v) > h_{2i}(u)+c_i\hat{\delta}_r$}{ \label{l-cond}
        Route to $p_{2i}(u)$ in $T(C(p_{2i}(u)))$\\
        \If{$v\in C(p_{2i}(u))$}{Route to $v$ in $T(C(p_{2i}(u)))$}\label{l-found}
        \Else{
            $r \gets r+1$ \\ \label{l-update}
            $j(r) \gets i$\\
            $\hat{\delta}_{r} \gets h_{2i+1}(v) - h_{2i}(u)$ \\
            Route to $u$ in $T(C(p_{2i}(u)))$
        }
    }
}
Route to $v$ in $T(C(p_{\ell}(v)))$. \label{l-average-stop}
\end{algorithm2e}

In the following lemma, we show that once the routing tree is determined, every intermediate vertex along the routing path possesses the necessary information to complete the routing process.
\begin{lemma} \label{L-routing-consistent-average-undirected}
The following properties hold:
\begin{enumerate}
    \item For every $w \in P(u, p_i(u))=P(p_i(u),u)$, $\L(w, T(C(p_i(u)))) \in \RT(w)$. \label{P-1}
    \item If $p_i(u) \in B(v)$, then for every $w \in P(p_i(u), v)$, $\L(w, T(C(p_i(u)))) \in \RT(w)$. \label{P-2}
    \item For every $w \in P(u, p_\ell(v))$, $\L(w, T(C(p_\ell(v)))) \in \RT(w)$. \label{P-3}
\end{enumerate}
\end{lemma}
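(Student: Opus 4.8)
The statement to prove is Lemma~\ref{L-routing-consistent-average-undirected}, asserting three "consistency" properties: that along each of the three kinds of subpaths the routing algorithm ever traverses (the $u\to p_i(u)$ legs of the for-loop, the $p_i(u)\to v$ leg taken when $v\in C(p_i(u))$, and the final $u\to p_\ell(v)$ leg), every intermediate vertex $w$ has the relevant tree label $\L(w, T(C(\cdot)))$ stored in $\RT(w)$. The common thread is the characterization of $\RT(w)$: it contains $\L(w, T(C(x)))$ for every $x\in B(w)$ (the first part of $\RT(w)$), and $\L(x, T(C(w)))$ for every $x\in C(w)$ (the second part). So each property reduces to showing that the relevant "center" lies in the bunch of the intermediate vertex, and then invoking \autoref{L-T(C(u))=C(u)-undirected} to guarantee that the intermediate vertex actually lies on the tree $T(C(\cdot))$ in question so that tree-routing (\autoref{T-Route-In-Trees}) applies.

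First I would prove Property~\ref{P-1}. Let $w\in P(u, p_i(u))$. Since $p_i(u) = p(u, A_i)$ is the closest vertex in $A_i$ to $u$, and $G$ is undirected, a standard argument (the same as in \autoref{L-T(C(u))=C(u)-undirected}) shows $w\in C(p_i(u))$: if not, then $d(w\leftrightarrow A_{i+1})\le d(w\leftrightarrow p_i(u))$, and combining with the triangle inequality and $d(u,w)+d(w,p_i(u))=d(u,p_i(u))$ gives $d(u\leftrightarrow A_{i+1})\le d(u\leftrightarrow p_i(u))$, contradicting that $p_i(u)$ is the pivot (i.e.\ $p_i(u)\in B_i(u)$ by definition, so $d(u\leftrightarrow p_i(u)) < d(u\leftrightarrow A_{i+1})$). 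From $w\in C(p_i(u))$ and the definition of the cluster, $p_i(u)\in B(w)$, so by the first part of $\RT(w)$ we get $\L(w, T(C(p_i(u))))\in\RT(w)$. (Here I should be a little careful: $C(p_i(u))$ is only defined when $p_i(u)\in A_i\setminus A_{i+1}$; if $p_i(u)\in A_{i+1}$ it is its own pivot at higher levels and the argument still goes through with the appropriate level — or one simply notes $p_i(u) = p_{i'}(u)$ for the largest $i'$ with $p_i(u)\in A_{i'}\setminus A_{i'+1}$ and works with that level's cluster, which is the one actually stored. I'd state this caveat once and move on.) Property~\ref{P-3} is essentially identical with $p_\ell(v)$ in place of $p_i(u)$: for $w\in P(u,p_\ell(v))$, by \autoref{L-T(C(u))=C(u)-undirected} applied to $v':=p_\ell(v)\in A_\ell$ and the endpoint $u\in C(p_\ell(v))$ — which holds because $\ell=\min\{i\mid p_i(v)\in B(u)\}$ forces $u\in C(p_\ell(v))$ by the dual characterization $v\in B(w,X,\cdot)\iff w\in C(v,\cdot)$ — we get $w\in C(p_\ell(v))$, hence $p_\ell(v)\in B(w)$, hence $\L(w,T(C(p_\ell(v))))\in\RT(w)$.

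Next, Property~\ref{P-2}: assume $p_i(u)\in B(v)$ and let $w\in P(p_i(u), v)$. Now the relevant center is $p_i(u)$ and the tree is $T(C(p_i(u)))$, but this time the intermediate vertex $w$ is on the path \emph{from} the center \emph{to} $v$. By \autoref{L-T(C(u))=C(u)-undirected} applied with the center $p_i(u)\in A_i$ and the endpoint $v$: we need $v\in C(p_i(u))$, which is exactly the hypothesis $p_i(u)\in B(v)$ translated via the bunch/cluster duality; then the lemma gives $w\in C(p_i(u))$ for every $w\in P(p_i(u),v)$, hence $p_i(u)\in B(w)$, hence $\L(w,T(C(p_i(u))))\in\RT(w)$ from the first part of $\RT(w)$. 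Strictly this only covers $w\ne p_i(u)$; at $w=p_i(u)$ itself, the routing is launched by reading $\RT(p_i(u))$, which by its \emph{second} part contains $\L(v, T(C(p_i(u))))$ since $v\in C(p_i(u))$ — so $p_i(u)$ can indeed initiate the tree-route to $v$. I'd fold that into the same paragraph.

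The main obstacle I anticipate is not any single inequality — each property is a short application of \autoref{L-T(C(u))=C(u)-undirected} plus unwinding definitions of $\RT(u)$, $B(u)$, $C(u)$ — but rather the bookkeeping around \emph{which} cluster is actually stored. The data structure stores $C(w) = C(w, A_{i+1})$ only for $w\in A_i\setminus A_{i+1}$, and stores $\L(u, T(C(p_i(u))))$ in $\L(u)$; so when I claim "$p_i(u)\in B(w)$ implies $\L(w, T(C(p_i(u))))\in\RT(w)$", I must make sure the cluster of $p_i(u)$ referenced is the one that the preprocessing phase computed (i.e.\ at the level $j$ for which $p_i(u)\in A_j\setminus A_{j+1}$, noting $j\ge i$ always, and that $p_i(u) = p_j(u)$ so $B(w)$ indeed contains it as a pivot-or-better). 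I would dispatch this once at the start of the proof with a sentence pinning down notation, and then the three properties are genuinely short. A secondary point worth a clean sentence: \autoref{T-Route-In-Trees} needs $w$ to be a vertex \emph{of} the tree $T(C(p_i(u)))$, not merely in the set $C(p_i(u))$ — but since in undirected graphs $T(C(x))$ is the shortest-path tree spanning $C(x)$ and, by \autoref{L-T(C(u))=C(u)-undirected}, $C(x)$ is closed under taking shortest subpaths to $x$, every vertex of $C(x)$ is a vertex of $T(C(x))$, so there is no gap.
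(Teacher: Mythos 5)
Your proof is correct and takes essentially the same route as the paper's: for each property you observe that the relevant endpoint lies in the cluster (via the bunch/cluster duality and the pivot choice), invoke \autoref{L-T(C(u))=C(u)-undirected} to pull every intermediate vertex $w$ of the subpath into that cluster, and then read off the tree label from $\RT(w)$. The paper's own proof is terser — it simply states $u\in C(p_i(u))$ and $v\in C(p_i(u))$ and applies the lemma — whereas you additionally spell out the level bookkeeping (that $p_i(u)=p_{i'}(u)$ for the largest $i'$ with $p_i(u)\in A_{i'}\setminus A_{i'+1}$, and that $C(x)$ is closed under shortest-path prefixes so every vertex of $C(x)$ is actually a node of $T(C(x))$); these extra remarks are sound and fill in details the paper leaves implicit rather than constituting a different argument.
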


\begin{proof}
We prove each property as follows:

\begin{enumerate}
    \item  
    Let $w \in P(u, p_i(u))$. Since $u \in C(p_i(u))$,~\autoref{L-T(C(u))=C(u)-undirected} implies $w \in C(p_i(u))$. Therefore, $p_i(u) \in B(w)$, and it follows that $\L(w, T(C(p_i(u)))) \in \RT(w)$, as required.

    \item
    Assume $p_i(u) \in B(v)$, and let $w \in P(p_i(u), v)$. Since $p_i(u) \in B(v)$, we have $v \in C(p_i(u))$. By~\autoref{L-T(C(u))=C(u)-undirected}, $w \in C(p_i(u))$, and thus $\L(w, T(C(p_i(u)))) \in \RT(w)$, as required.

    \item
    Let $w \in P(u, p_\ell(v))$. By~\autoref{L-T(C(u))=C(u)-undirected}, $w \in C(p_\ell(v))$. Hence, $\L(w, T(C(p_\ell(v)))) \in \RT(w)$, as required.
\end{enumerate}
\end{proof}

We now turn to analyze the stretch of the compact routing scheme.
\begin{lemma}\label{L-stretch-undirected-average}
    \[
    \hat{d}(u,v) \le 
    \begin{cases}
        (2a+1+2\sum_{i=0}^{a}c_i)d(u,v), & \ell=2a+1 \\
        (2a-1+2\sum_{i=0}^{a}c_i)d(u,v), & \ell=2a
    \end{cases}
    \]
    % $\hat{d}(u,v) \le (2a+1+2\sum_{i=0}^{a}c_i)d(u,v)$ if $\ell=2a+1$ and $\hat{d}(u,v) \le (2a-1+2\sum_{i=0}^{a}c_i)d(u,v)$ if $\ell=2a$.
\end{lemma}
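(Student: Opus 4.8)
The plan is to bound $\hat d(u,v)$ by splitting the routed walk into two parts: the total length of all \emph{failed probes} — a failed probe recorded as $j(r)=m$ is a detour $u\to p_{2m}(u)\to u$ of length $2h_{2m}(u)$, where I write $h_i(x):=d(x,A_i)=d(x,p_i(x))$ — plus the length of the \emph{final route}, which is $u\to p_{2i^*}(u)\to v$ when the probe at some iteration $i^*$ succeeds, or $u\to p_\ell(v)\to v$ when the loop is exhausted. By the triangle inequality each of these subpaths has length at most $d(u,v)$ plus twice an $h$-value (for the success case, $2h_{2i^*}(u)+d(u,v)$; for the fallback case, $d(u,v)+2h_\ell(v)$), and the routing consistency needed for all of them is exactly \autoref{L-routing-consistent-average-undirected}. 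I then collect the elementary facts: since $\ell=\min\{i:p_i(v)\in B(u)\}$, for every $j<\ell$ we have $p_j(v)\notin B(u)$, giving the ``jump'' $h_{j+1}(u)\le d(u,p_j(v))\le d(u,v)+h_j(v)$, so that $\hat\delta_0=\max_{0\le j<\ell}(h_{j+1}(u)-h_j(v))\le d(u,v)$ and, more sharply, $h_{j+1}(u)\le h_j(v)+\hat\delta_0$ for $j<\ell$; and whenever the probe at iteration $m$ fails we have $v\notin C(p_{2m}(u))$, hence $h_{2m+1}(v)\le d(p_{2m}(u),v)\le h_{2m}(u)+d(u,v)$, so every $\hat\delta_r=h_{2j(r)+1}(v)-h_{2j(r)}(u)$ also satisfies $\hat\delta_r\le d(u,v)$.

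The core is an induction over the loop iterations $m=0,1,\dots$ that the algorithm passes through before terminating. At each such $m$ exactly one of two things happened: the condition on line~\ref{l-cond} failed, giving $h_{2m+1}(v)\le h_{2m}(u)+c_m\hat\delta_{r(m)}$; or the condition held but the probe failed, giving $h_{2m+1}(v)-h_{2m}(u)=\hat\delta_{r(m)+1}$ with $c_m\hat\delta_{r(m)}<\hat\delta_{r(m)+1}\le d(u,v)$ and $r$ incremented on line~\ref{l-update}. Feeding this into the jump $h_{2m+2}(u)\le h_{2m+1}(v)+\hat\delta_0$ and telescoping from $h_0(u)=0$ produces an upper bound on $h_{2i}(u)$ of the shape $\bigl(i+\sum_{m<i}c_m\bigr)d(u,v)$ plus a correction coming from the $\hat\delta$-increments caused by failed probes.

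To control that correction and the cost $\sum_r 2h_{2j(r)}(u)$ of the failed probes themselves, I would use amortization: a failed probe at iteration $j(r)$ multiplies the estimate by a factor $>c_{j(r)}\ge 1$ while the estimate never exceeds $d(u,v)$, so the quantities $h_{2j(r)}(u)=h_{2j(r)+1}(v)-\hat\delta_r$ grow geometrically in $r$; hence $\sum_r 2h_{2j(r)}(u)$ is dominated by its final term(s) and can be folded into the coefficient. It is precisely to make the resulting weighted sum of per-iteration contributions collapse to the claimed value that $c_i$ is defined by $c_0=1$, $c_i=2-\frac{a-i}{a+\sum_{j<i}c_j}$. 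Finally I split on the parity of $\ell$ for the fallback route $u\to p_\ell(v)\to v$: when $\ell=2a+1$ I bound $h_{2a+1}(v)\le h_{2a}(u)+c_a\,d(u,v)$ directly from the inductive bound; when $\ell=2a$ I instead invoke \autoref{L-Bound-delta-2d} — applicable since $p_{2a-1}(v)\notin B(u)$ — to get $h_{2a}(v)\le h_{2a-1}(v)+2d(u,v)$, bounding $h_\ell(v)$ one level earlier than going through $h_{2a+1}(v)$ would, and this single-level saving is exactly what converts the $2a+1$ of the odd case into the $2a-1$ of the even case. The success case is subsumed, since $2i^*\le 2a\le\ell$ forces $2h_{2i^*}(u)+d(u,v)\le(2a+1+2\sum_{m<a}c_m)d(u,v)$, which is below both target bounds because $c_a\ge1$.

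The main obstacle is running the three ingredients — (i) the induction bounding $h_{2i}(u)$, (ii) the amortized charging of failed-probe costs against the geometric growth of $\hat\delta$, and (iii) the per-iteration case split (``probe failed'' versus ``condition failed'') — \emph{simultaneously}, because a failed probe retroactively changes the estimate $\hat\delta$ used by every later condition. Making the accumulated weights telescope to $2a\pm1+2\sum_{i=0}^a c_i$, and checking that the extremal configuration is the one where every iteration contributes and $\hat\delta_0=d(u,v)$, is where essentially all the work lies; the remainder is triangle inequalities together with the already-established \autoref{L-Bound-p-i-d}, \autoref{L-Bound-delta-2d}, and \autoref{L-routing-consistent-average-undirected}.
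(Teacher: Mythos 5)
Your proposal identifies the correct high-level decomposition (failed probes plus a final segment), the correct auxiliary bounds ($\hat\delta_0\le d(u,v)$, $\hat\delta_r\le d(u,v)$ whenever a probe fails, the "jump" $h_{j+1}(u)\le h_j(v)+\hat\delta_0$), and the correct parity split for the fallback route. These match the paper's Claims~\ref{C-delta-smaller}, \ref{C-lower-bound-on-delta_r}, and \ref{C-cond-never-holds}.

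However, the core of the argument — bounding the total cost of all failed probes plus the final climb, namely $2h_{2a}(u)+\sum_{r=1}^b 2h_{2j(r)}(u)\le\bigl(2a+2\sum_{i=0}^{a-1}c_i\bigr)\hat\delta_b$ — is not actually proved. You describe it as "amortization: ... the quantities $h_{2j(r)}(u)$ grow geometrically in $r$; hence $\sum_r 2h_{2j(r)}(u)$ is dominated by its final term(s)," but this is a heuristic, not an argument, and it does not match what the bound says: the right-hand side carries the full accumulated coefficient $2a+2\sum_{i=0}^{a-1}c_i$, not "the final term." The multipliers $c_{j(r)}$ can be arbitrarily close to $1$, so there is no geometric domination to exploit in the naive way; and worse, each failed probe changes $\hat\delta$ retroactively, which is precisely the interaction you flag as "the main obstacle" and then leave unresolved. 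The paper resolves it with a nontrivial induction on $b$ that constructs a modified instance $I'$ — one with the same first $j(b)$ levels but with all later levels replaced by synthetic values that saturate the condition — applies the inductive hypothesis to $I'$, and then transfers the bound back to $I$ using Claims~\ref{C-lower-bound-on-delta_r} and~\ref{C-c_i-reason} (the latter being an algebraic identity that encodes exactly how the recursion $c_i=2-\frac{a-i}{a+\sum_{j<i}c_j}$ makes the telescoping close). None of that machinery appears in your sketch; it is not a detail that can be waved away, because the choice of the $c_i$'s is engineered precisely so that this instance-swap induction closes, and a generic "charge failed probes against the growth of $\hat\delta$" argument does not reproduce the coefficient.

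A smaller point: for $\ell=2a$ you propose to invoke \autoref{L-Bound-delta-2d} to get $h_{2a}(v)\le h_{2a-1}(v)+2d(u,v)$. The paper instead uses the direct one-line bound $h_{2a}(v)=d(v,p_{2a}(v))\le d(v,p_{2a}(u))\le h_{2a}(u)+d(u,v)$, which keeps everything in terms of $h$-values of $u$ at even indices — exactly what Claim~\ref{C-final-proof} controls — and then converts $+3d(u,v)$ into $2c_a-1$ using $c_a=2$. Your alternative leaves you holding $h_{2a-1}(v)$, which does not fit the inductive bookkeeping; it is not clear it can be made to give the claimed $-1$ improvement without further work. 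Your treatment of the success case also omits the cost $\sum_{r'<r}2h_{2j(r')}(u)$ of probes that failed before the successful one; the paper handles this cleanly by observing the fallback scenario dominates the success scenario pointwise, which is the argument you should use rather than a direct bound on $2h_{2i^*}(u)+d(u,v)$ alone.
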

\begin{proof}
To prove the lemma, we first prove the following six claims. 
First, we prove that the longest routing path occurs when the algorithm stops at~\autoref{l-average-stop}. Therefore, throughout the proof, we assume that the algorithm does not reach~\autoref{l-found} and does not route from $p_{2i}(u)$ to $v$ in the tree $T(C(p_{2i}(u)))$.

\begin{subclaim}
    The longest routing path occurs when the algorithm stops at~\autoref{l-average-stop}.
\end{subclaim}
\begin{proof}
    If the algorithm, at a vertex $p_{2i}(u)$, determines that $v \in C(p_{2i}(u))$, then the additional routing distance is $d(p_{2i}(u), v) \stackrel{\triangle}{=} h_{2i}(u) + d(u, v)$. However, if the algorithm finds that $v \notin C(p_{2i}(u))$, then it routes back to $u$, adding $h_{2i}(u)$ to the routing distance, and still needs to route from $u$ to $v$, which requires at least $d(u, v)$.
\end{proof}

In the following claim, we show that $\hat{\delta}_r \le d(u,v)$ throughout the routing algorithm.
\begin{subclaim}\label{C-delta-smaller}
    $\hat{\delta}_r \le d(u,v)$, for every $0 \le r \le b$.
\end{subclaim}
\begin{proof}
    For every $1 \le r \le b$, from~\autoref{C-lower-bound-on-delta_r} we have that $\hat{\delta}_{r} \ge \hat{\delta}_{r-1}$ since $\hat{\delta}_{r} \ge c_{j(r)}\hat{\delta}_{r-1}$ and $c_{j(r)} \ge 1$. Therefore, it suffices to show that $\hat{\delta}_b \le d(u,v)$.
    
    We divide the proof into two cases.
    If $b=0$, then we know that $\hat{\delta}_b=\hat{\delta}_0=\max_{0\le i< \ell}(h_{i+1}(u) - h_{i}(v))$. Let $x$ be the index such that $\hat{\delta}_0=h_{x+1}(u)-h_{x}(v)$.
    From the definition of $\ell$ and since $x<\ell$, we know that $p_{x}(v) \notin B(u)$. Therefore $h_{x+1}(u) \le d(u, p_x(v))$. From the triangle inequality, we know that $d(u, p_x(v)) \le d(u,v) + h_x(v)$. By subtracting $h_x(v)$, we get that $h_{x+1}(u)-h_x(v) \le d(u, p_x(v))-h_x(v) \le d(u,v)$. Since $\hat{\delta}_0 = h_{x+1}(u)-h_x(v)$ we get that $\hat{\delta}_0 \le d(u,v)$, as required.
    
    Next, we consider the case where $ b > 0 $. In this case, during the $ j(b) $-th iteration, it holds that $ v \notin C(p_{2j(b)}(u)) $. Therefore, $ p_{2j(b)}(u) \notin B(v) $, and we have that $ h_{2j(b)+1}(v) \leq d(v, p_{2j(b)}(u)) $. Using the triangle inequality, we get that 
    $d(v, p_{2j(b)}(u)) \leq h_{2j(b)}(u) + d(u, v)$.
    Therefore, 
    $h_{2j(b)+1}(v) - h_{2j(b)}(u) \leq d(u, v)$, since $ \hat{\delta}_b = h_{2j(b)+1}(v) - h_{2j(b)}(u) $, it follows that 
    $\hat{\delta}_b \leq d(u, v)$,
    as required.
\end{proof}
In the following claim, we show that $\hat{\delta}_r \ge c_{j(r)}\cdot \hat{\delta}_{r-1}$.
\begin{subclaim}\label{C-lower-bound-on-delta_r}
    $\hat{\delta}_r \ge c_{j(r)}\cdot \hat{\delta}_{r-1}$
\end{subclaim}
\begin{proof}
    Since we updated the value of $\hat{\delta}_r$, we know that the condition in line~\ref{l-cond} is false, and therefore $h_{2i+1}(v) \ge h_{2i}(u) + c_{j(r)}\hat{\delta}_{r-1}$. Thus, $h_{2i+1}(v) - h_{2i}(u) \ge c_{j(r)}\hat{\delta}_{r-1}$.
    Since $\hat{\delta}_{r} = h_{2i+1}(v) - h_{2i}(u)$ we get that 
    and that $\hat{\delta}_{r} = h_{2i+1}(v) - h_{2i}(u) \ge c_{j(r)}\hat{\delta}_{r-1}$, as required. 
\end{proof}
In the following claim, we bound the value of $h_{2i+2}(u)$ in the case that the condition of line~\ref{l-cond} is false in the $i$-th iteration of the loop.
\begin{subclaim}\label{C-cond-never-holds}
    Let $0\le i < a$. Let $r_i$ be the value of $r$ during the $i$-th iteration.
    If $h_{2i+1}(v) \leq h_{2i}(u)+c_i\hat{\delta}_{r_i}$ then
    \[
    h_{2i+2}(u) \le h_{2i}(u) + \hat{\delta}_0+c_i\hat{\delta}_{r_i}
    \]
\end{subclaim}
\begin{proof}
    From the definition of $\hat{\delta}_0$ we know that $\hat{\delta}_0 \ge h_{2i+2}(u)-h_{2i+1}(v)$, therefore: 
    \begin{equation}\label{e-2}
        h_{2i+2}(u) \le h_{2i+1}(v)+\hat{\delta}_0
    \end{equation}
    From the assumption of the lemma we know that $h_{2i+1}(v) < h_{2i}(u)+c_i\hat{\delta}_{r_i}$.
    Overall we get that:
    \begin{align*}
         h_{2i+2}(u) &\stackrel{(\ref{e-2})}{\le} h_{2i+1}(v)+\hat{\delta}_0
        \le h_{2i}(u)+c_{i}\hat{\delta}_{r_i} +\hat{\delta}_0,
    \end{align*}
    as required.
\end{proof}

In the following claim, we present an equality that follows from the definition of the sequence $c_i$.
\begin{subclaim}\label{C-c_i-reason}
For every $1\leq j \leq a$ it holds that $\frac{2j+2a+4\sum_{i=0}^{j-1}c_{i}}{c_j} = 2a+2\sum_{i=0}^{j-1}c_i$.
\end{subclaim}
\begin{proof}
From the definition of $c_j$ we know that 
$c_j = 2-\frac{a-j}{a+\sum_{i=0}^{j-1}c_i}$, thus:
    \begin{align*}
        c_j &= 2-\frac{a-i}{a+\sum_{i=0}^{j-1}c_i} = \frac{2(a+\sum_{i=0}^{j-1}c_i) - (a-i)}{a+\sum_{i=0}^{j-1}c_i} \\
        c_j &= \frac{a+i+2\sum_{i=0}^{j-1}c_i}{a+\sum_{i=0}^{j-1}c_i} = \frac{2a+2i+4\sum_{i=0}^{j-1}c_i}{2a+2\sum_{i=0}^{j-1}c_i} \\
        2a+2\sum_{i=0}^{j-1}c_i &= \frac{2a+2i+4\sum_{i=0}^{j-1}c_i}{c_j},
    \end{align*}
    as required.
\end{proof}

Next, we present the main technical contribution of this section in the following claim.
\begin{subclaim}\label{C-final-proof}
    \[
    2h_{2a}(u) + \sum_{r=1}^{b} 2h_{2j(r)}(u) \le (2a+2\sum_{i=0}^{a-1}c_i)\hat{\delta}_{b}
    \]
\end{subclaim}
\begin{proof}
    We define an instance of the algorithm by fixing the values of $h_{i}(u)$ and $h_{i}(v)$, for every $1\le i \le \ell$.
    Notice that since we assume that the algorithm stops at~\autoref{l-average-stop}, the list $j$ and the value of $b$ depend only on the values of $h_1(u),\dots,h_\ell(u)$ and $h_1(v),\dots,h_\ell(v)$.
        
    We prove the claim by induction on $b$.
    For the base of the induction ($b = 0$), we know that the condition on line~\ref{l-cond} never holds, which means that for every $1\le i \le a$, it holds that $h_{2i+1}(v) \leq h_{2i}(u)+c_i\hat{\delta}_r$.
    Therefore, we can apply~\autoref{C-cond-never-holds} 
    for every $i$ from $a-1$ to $0$ to bound $h_{2i+2}(u)$ from above with $h_{2i}(u) + \hat{\delta}_0+c_i\hat{\delta}_{0}$ and get that:
    \[
    2h_{2a}(u) \le 2\left( \hat{\delta}_0 + c_{a-1}\hat{\delta}_0 + h_{2a-2}(u)\right) \le \dots \le 2\left(\sum_{i=0}^{a-1}(1+ c_{i})\hat{\delta}_0\right) = \left(2a+2\sum_{i=0}^{a-1} c_{i}\right) \hat{\delta}_0,
    \]
    as required.

    Next, we move to prove the induction step. We assume the claim holds for every instance whose list $j$ is of length $b-1$ and prove the claim for an instance whose list $j$ is of length $b$.
    Let  $I$ be an instance whose values are $h_1(u),\dots,h_\ell(u)$ and $h_1(v),\dots,h_\ell(v)$ and its list $j$ is of length $b$.
    Let  $I'$ be an instance whose values $h'_1(u),\dots,h'_\ell(u)$ and $h'_1(v),\dots,h'_\ell(v)$ satisfy the following:
    \begin{itemize}
        \item $h_{i}'(u)=h_{i}(u)$ and $h_{i-1}'(v)=h_{i-1}(v)$, for every $1 \le i \le 2j(b)$.
        \item $h_{i+1}'(u)=h_{i}'(v)+\hat{\delta}_{0}$, for every $2j(b) \leq  i < \ell$.
        \item $h_{2i+1}'(v)=h_{2i}'(u)+c_{i}\hat{\delta}_{b-1}$ and $h_{2i}'(v)=h_{2i-1}'(u)+c_{i}\hat{\delta}_{b-1}$, for every $j(b) \le i\le a-1$.
    \end{itemize}
    To distinguish between the variables of $I$ and $I'$, we denote with $b'$, $j'$, $\hat{\delta}_r'$ the values of $b$, $j$, $\hat{\delta}_r$ in the instance $I'$.
    In the following claim, we prove the connections between the instance $I$ and the instance $I'$. We show:
    \begin{subsubclaim}\label{C-'-properties}
    Instance $I'$ satisfies the following properties:
    \begin{enumerate}[(i)]
        \item $\hat{\delta}_0'=\hat{\delta}_0$.
        \item $j'(r)=j(r) \text{ and } \hat{\delta}'_{r}=\hat{\delta}_{r}, \text{ for every } 0 \le r \le b-1$. \label{p-'=regular}
        \item $b'=b-1$. \label{p-b'=b-1}
        \item $h_{2a}'(u) = h_{2j(b)}(u)+\sum_{i=j(b)}^{a-1}(\hat{\delta}_0 + c_i\hat{\delta}_{b-1})$. \label{p-h_2a'}
    \end{enumerate}
    \end{subsubclaim}
    \begin{proof}
    \begin{enumerate}[(i)]
        \item 
        From the definition of $\hat{\delta}'_0$ we know that $\hat{\delta}'_0 = \max_{0< x < \ell}(h_x'(u)-h_{x-1}'(v))$.
        We divide the possible values of $x$ into two ranges: $1\le x \le 2j(b)$ and $2j(b)<x<\ell$.
        For every $1\le x\le 2j(b)$, since $h_x'(u)=h_x(u)$ and $h_{x-1}'(v)=h_{x-1}(v)$ we have that $\max_{0< x < 2j(b)}(h_x'(u)-h_{x-1}'(v))=\max_{0< x < 2j(b)}(h_x(u)-h_{x-1}(v))$.
        For every $2j(b) < x < \ell$  we have that $h_{x}'(u)-h_{x-1}'(v)=\hat{\delta}_0$. Therefore, we get that $\max_{2j(b)< x < \ell}(h_x'(u)-h_{x-1}'(v)) = \hat{\delta}_0$. Overall we get:
        \begin{align*}
            \hat{\delta}'_0 = \max_{0< x < \ell}(h_x'(u)-h_{x-1}'(v)) &= \max\left(\max_{0< x \le 2j(b)}(h_x'(u)-h_{x-1}'(v)), \max_{2j(b) < x < \ell}(h_x'(u)-h_{x-1}'(v))\right) 
            \\&= \max\left(\max_{0< x \le 2j(b)}(h_x(u)-h_{x-1}(v)), \hat{\delta}_0 \right) = \hat{\delta}_0,
        \end{align*}
        where the last equality follows from the fact that $h_x(u)-h_{x-1}(v) \le \hat{\delta}_0$, for every $1 < x \le 2j(b)$. 
    \item Since $\hat{\delta}'_0=\hat{\delta}_0$, and for every $1 \le i \le j(b)$ we have that $h_{2i}'(u)=h_{2i}(u)$ and $h_{2i-1}'(v)=h_{2i-1}(v)$ we get that the first $j(b)-1$ iterations of the for loop of instance $I$ are the same as those of instance $I'$. Therefore, we get that:
    % \begin{equation}\label{e-'=regular}
        $j'(r)=j(r) \text{ and } \hat{\delta}'_{r}=\hat{\delta}'_{r}, \text{ for every } 0 \le r \le b-1$.
    % \end{equation}
    \item Since $h_{2i+1}'(v)=h_{2i}'(u)+c_i\hat{\delta}_{b-1}$, for every $j(b) \le i\le a-1$, we get that the condition on line~\ref{l-cond} does not hold for every $j(b) \le i\le a-1$, and therefore $b'=b-1$.
    \item Since for every $j(b) \le i\le a-1$ we have that $h_{2i+2}'(u)=h_{2i}'(u)+\hat{\delta}_{0}+c_i\hat{\delta}_{b-1}$, we get:
    \[
    h_{2a}'(u) = h_{2a-2}'(u)+\hat{\delta}_0+c_{a-1}\hat{\delta}_{b-1}=\dots=h_{2j(b)}'(u)+\sum_{i=j(b)}^{a-1}(\hat{\delta}_0 + c_i\hat{\delta}_{b-1})=h_{2j(b)}(u)+\sum_{i=j(b)}^{a-1}(\hat{\delta}_0 + c_i\hat{\delta}_{b-1}).
    \]
    \end{enumerate}
    \end{proof}
    
    From~\autoref{C-'-properties} (\ref{p-b'=b-1}) we know that $b'=b-1$, and therefore we can apply the induction assumption on the instance $I'$ instance and get that:
    \begin{align*}
        2h_{2a}'(u) + \sum_{{r}=1}^{b-1} 2h_{2j'({r})}'(u) \le (2a + 2\sum_{i=0}^{a-1} c_i) \hat{\delta}'_{b-1}
    \end{align*}
    From~\autoref{C-'-properties} (\ref{p-'=regular}), for every $1\le r \le b-1$, we know that $j'(r)=j(r)$,  and since $j(r) < j(b)$ we know that $2h_{2j'({r})}'(u)=2h_{2j({r})}(u)$. 
    From~\autoref{C-'-properties} (\ref{p-'=regular}) we have that $\hat{\delta}'_{b-1}=\hat{\delta}_{b-1}$. Therefore:
    \begin{equation}\label{e-81}
        2h_{2a}'(u) + \sum_{{r}=1}^{b-1} 2h_{2j({r})}(u) \le (2a + 2\sum_{i=0}^{a-1} c_i) \hat{\delta}_{b-1}
    \end{equation}
    From~\autoref{C-'-properties} (\ref{p-h_2a'}) we know that $h_{2a}'(u)=h_{2j(b)}(u)+\sum_{i=j(b)}^{a-1}(\hat{\delta}_0 + c_i\hat{\delta}_{b-1})$, and therefore we get:
    % By applying $h_{2a}'(u)=h_{2j(b)}(u)+\sum_{i=j(b)}^{a-1}(\hat{\delta}_0 + c_i\hat{\delta}_{b-1})$ in equation~\ref{e-81} we get:
    \[
    2\left(h_{2j(b)}(u)+\sum_{i=j(b)}^{a-1}(\hat{\delta}_0 + c_i\hat{\delta}_{b-1})\right) + \sum_{{r}=1}^{b-1} 2h_{2j({r})}(u) \le (2a + 2\sum_{i=0}^{a-1} c_i) \hat{\delta}_{b-1} 
    \]
    \begin{equation}\label{e-lala0}
        2\sum_{i=j(b)}^{a-1}(\hat{\delta}_0 + c_i\hat{\delta}_{b-1}) + \sum_{{r}=1}^{b} 2h_{2j({r})}(u) \le (2a + 2\sum_{i=0}^{a-1} c_i)\hat{\delta}_{b-1}
    \end{equation}
    Next, we turn to bound $h_{2a}(u)$, by applying~\autoref{C-cond-never-holds} $(a-j(b))$ times. We get:
    \begin{equation}\label{e-lala}
    h_{2a}(u)\le \hat{\delta}_0+c_{a-1}h_{2a-2}(u)\hat{\delta}_b \le \dots \le h_{2j(b)}(u)+\sum_{i=j(b)}^{a-1}(\hat{\delta}_0+c_i\hat{\delta}_b).
    \end{equation}

    Next, we bound $2h_{2a}(u)  + \sum_{{r}=1}^{b} 2h_{2j({r})}(u)$ and complete the proof of the claim:
    \begin{align*}
        2h_{2a}(u)  + \sum_{{r}=1}^{b} 2h_{2j({r})}(u) &\stackrel{(\ref{e-lala})}{\le}
        2\left(h_{2j(b)}(u)+\sum_{i=j(b)}^{a-1}(\hat{\delta}_0+c_i\hat{\delta}_b)\right) + \sum_{{r}=1}^{b} 2h_{2j({r})}(u)
        \\&=
        2h_{2j(b)}(u)+2\sum_{i=j(b)}^{a-1}(\hat{\delta}_0+c_i\hat{\delta}_b-c_i\hat{\delta}_{b-1}+c_i\hat{\delta}_{b-1}) + \sum_{{r}=1}^{b} 2h_{2j({r})}(u)
        \\&= 2h_{2j(b)}(u) + 2\sum_{i=j(b)}^{a-1}c_i\hat{\delta}_b-2\sum_{i=j(b)}^{a-1}c_i\hat{\delta}_{b-1} + 2\sum_{i=j(b)}^{a-1}(\hat{\delta}_0+c_i\hat{\delta}_{b-1})+ \sum_{{r}=1}^{b} 2h_{2j({r})}(u)
        \\&\stackrel{(\ref{e-lala0})}{\le} 2h_{2j(b)}(u) + 2\sum_{i=j(b)}^{a-1}c_i\hat{\delta}_b-2\sum_{i=j(b)}^{a-1}c_i\hat{\delta}_{b-1} + (2a + 2\sum_{i=0}^{a-1} c_i)\hat{\delta}_{b-1}
        \\&= 2h_{2j(b)}(u) + 2\sum_{i=j(b)}^{a-1}c_i\hat{\delta}_b + (2a + 2\sum_{i=0}^{j(b)-1} c_i)\hat{\delta}_{b-1}
    \end{align*}
    By applying~\autoref{C-cond-never-holds} $j(r)$ times, we get that $h_{2j(b)}(u) \le j(b)\hat{\delta}_0 + \sum_{{r}=1}^{b}\sum_{i=j({r}-1)}^{j({r})-1}c_i\hat{\delta}_{{r}-1}$. Since $\hat{\delta}_{{r}-1} \le \hat{\delta}_{b-1}$, for every $1\le r\le b$, we have that $h_{2j(b)}(u) \le (j(b)+\sum_{i=0}^{j(b)-1}c_i)\hat{\delta}_{b-1}$. By applying it we get:
    \begin{align*}
        2h_{2a}(u)  + \sum_{{r}=1}^{b} 2h_{2j({r})}(u) &\le   2(j(b)+\sum_{i=0}^{j(b)-1}c_i)\hat{\delta}_{b-1} + 2\sum_{i=j(b)}^{a-1} c_i\hat{\delta}_{b} + (2a+2\sum_{i=0}^{j(b)-1}c_i)\hat{\delta}_{b-1}
        \\&= 2\sum_{i=j(b)}^{a-1} c_i\hat{\delta}_{b} + (2a+2j(b)+4\sum_{i=0}^{j(b)-1}c_i)\hat{\delta}_{b-1}
        \\&\stackrel{C\ref{C-lower-bound-on-delta_r}}{\le} 2\sum_{i=j(b)}^{a-1} c_i\hat{\delta}_{b} + \frac{(2a+2j(b)+4\sum_{i=0}^{j(b)-1}c_i)\hat{\delta}_{b}}{c_{j(b)}}
        \\&\stackrel{C\ref{C-c_i-reason}}{\le} 2\sum_{i=j(b)}^{a-1} c_i\hat{\delta}_{b} + 2a+2\sum_{i=0}^{j(b)-1}c_i\hat{\delta}_b = (2a+2\sum_{i=0}^{a-1}c_i)\hat{\delta}_b,
    \end{align*}
    as required.
\end{proof}

After proving all the required claims, we can finally turn to prove the lemma.
In algorithm $\Route$, there are two possible routing scenarios. The first scenario is that there exists $r\in [1,b]$ such that $v\in C(p_{2j(r)}(u))$. In such a case and the algorithm routes for every $1\le r' < r$  from $u$ to $p_{2j(r')}(u)$ and back and then the algorithm routes from $u$ to $p_{2j(r)}(u)$ and from $p_{2j(r)}(u)$ to $v$. The second scenario is that for every $r\in [1,b]$ it holds that $v\notin C(p_{2j(r)}(u))$.
In such a case and the algorithm routes for every $1\le  r \leq b$  from $u$ to $p_{2j(r)}(u)$ and back and then the algorithm routes from $u$ to $p_{\ell}(v)$ and from $p_{\ell}(v)$ to $v$. Notice that in the second scenario, the message traverses at least as much as it traverses in the first scenario and therefore it suffices to bound only the second scenario.
In the second scenario, we have :
\begin{equation}\label{e-final-routing}
    \hat{d}(u,v)  \le \sum_{{r}=1}^{b} 2h_{2j(r)}(u) + d(u,p_\ell(v)) + d(p_\ell(v),v) \stactri\le \sum_{{r}=1}^{b} 2h_{2j(r)}(u) + 2h_\ell(v) + d(u,v).
\end{equation}
% \sum_{{r}=1}^{b}2d(u,p_{2j(r)}(u)) + d_{T_{C(p_\ell(v))}}(u,v)
We divide the proof into two cases, the case that $\ell=2a+1$ and the case that $\ell=2a$. Assume $\ell=2a+1$. 
Since we are in the second scenario we have $h_{2a+1}(v)\le h_{2a}(u) + c_a \hat{\delta}_b$. 
%Therefore, we know that $h_{\ell}(v)\le h_{2a}(u) + c_a \hat{\delta}_b \le h_{2a}(u) + c_a d(u,v)$.
We get:
\begin{align*}
\hat{d}(u,v) &\stackrel{(\ref{e-final-routing})}{\le} \sum_{{r}=1}^{b} 2h_{2j(r)}(u) + 2h_\ell(v) + d(u,v) \le \sum_{{r}=1}^{b} 2h_{2j(r)}(u) + 2h_{2a}(u) + 2c_a \hat{\delta}_b + d(u,v) 
\\&\stackrel{C\ref{C-final-proof}}{\le} (2a+2\sum_{i=0}^{a-1}c_i)\hat{\delta}_{b} + 2c_a \hat{\delta}_b + d(u,v) \stackrel{C\ref{C-delta-smaller}}{\le} (2a+2\sum_{i=0}^{a}c_i + 1)d(u,v)
\end{align*}

Assume that $\ell=2a$. From the definition of $h_{2a}(v)$, we know that $h_{2a}(v) = d(v,p_{2a}(v)) \le d(v,p_{2a}(u))$. From the triangle inequality, we have that $h_{2a}(v) \le h_{2a}(u)+d(u,v)$. We get:
\begin{align*}
    \hat{d}(u,v) &\stackrel{(\ref{e-final-routing})}{\le}  \sum_{{r}=1}^{b} 2h_{2j(r)}(u) + 2h_\ell(v) + d(u,v) \stactri{\le} \sum_{{r}=1}^{b} 2h_{2j(r)}(u) + 2h_{2a}(u)+2d(u,v) + d(u,v) 
    \\& \stackrel{C\ref{C-final-proof}}{\le} (2a+2\sum_{i=0}^{a-1}c_i)\hat{\delta}_{b} + 3d(u,v) \stackrel{C\ref{C-delta-smaller}}{\le} (2a+2\sum_{i=0}^{a-1}c_i + 3)d(u,v) \stackrel{c_a=2}{=} (2a+2\sum_{i=0}^{a-1}c_i - 1)d(u,v),
\end{align*}
% =\sum_{{r}=1}^{b} 2h_{2j(r)}(u) + 2h_{2a}(u) + 3d(u,v)
as required.
\end{proof}
Since $B_{k-1}(u)=A_{k-1}$, we get that $p_{k-1}(v)\in B(u)$ and therefore $\ell \le k-1$. Therefore, in the worst case of our routing, we have that $a=\floor{\frac{k-1}{2}}$ and get the following stretch.
\begin{corollary}\label{cor-average-stretch}
    Let $a=\floor{\frac{k-1}{2}}$.
    \[
    \hat{d}(u,v) \le 
    \begin{cases}
        (2\sum_{i=0}^{a}c_i+2a+1)d(u,v), & k=2a+2, \\
        (2\sum_{i=0}^{a}c_i+2a-1)d(u,v), & k=2a+1.
    \end{cases} 
    \]
\end{corollary}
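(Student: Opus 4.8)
The plan is to derive the corollary directly from~\autoref{L-stretch-undirected-average} by bounding the index $\ell$ and then locating the worst case. First I would record that $B_{k-1}(u) = B(u, A_{k-1}, A_k) = A_{k-1}$: since $A_k = \emptyset$ we have $d(u \leftrightarrow A_k) = \infty$, so every vertex of $A_{k-1}$ lies in the $(k-1)$-th bunch of $u$. Consequently $p_{k-1}(v) \in A_{k-1} \subseteq B(u)$, and hence $\ell = \min\{i \mid p_i(v) \in B(u)\} \le k-1$.

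Next I would view the bound of~\autoref{L-stretch-undirected-average} as a function of $\ell$: writing $a_\ell = \lfloor \ell/2 \rfloor$, it is $\bigl(2a_\ell + 1 + 2\sum_{i=0}^{a_\ell} c_i\bigr) d(u,v)$ when $\ell$ is odd and $\bigl(2a_\ell - 1 + 2\sum_{i=0}^{a_\ell} c_i\bigr) d(u,v)$ when $\ell$ is even. I would then check that this coefficient is non-decreasing in $\ell$: going from $\ell = 2m$ to $\ell = 2m+1$ keeps $a_\ell = m$ and raises the additive constant by $2$, while going from $\ell = 2m-1$ to $\ell = 2m$ raises $a_\ell$ from $m-1$ to $m$ and raises the coefficient by $2c_m$, which is positive since every $c_i \ge c_0 = 1$. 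Hence the largest possible value of the bound over all admissible $\ell \le k-1$ is attained at $\ell = k-1$.

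Finally I would substitute $\ell = k-1$ and split on the parity of $k$, using $a = \lfloor (k-1)/2 \rfloor$ as in the statement (so that $\lfloor \ell/2 \rfloor = a$ in both cases). If $k = 2a+2$ then $\ell = k-1 = 2a+1$ is odd, so~\autoref{L-stretch-undirected-average} yields $\hat{d}(u,v) \le \bigl(2a + 1 + 2\sum_{i=0}^{a} c_i\bigr) d(u,v)$; if $k = 2a+1$ then $\ell = k-1 = 2a$ is even, so it yields $\hat{d}(u,v) \le \bigl(2a - 1 + 2\sum_{i=0}^{a} c_i\bigr) d(u,v)$. These are exactly the two cases claimed. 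I expect no real obstacle here; the only step needing attention is the monotonicity check at the parity transitions, and even that reduces to $c_i \ge 1 > 0$.
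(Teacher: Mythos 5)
Your proposal is correct and follows the same route as the paper: both establish $\ell\le k-1$ via $B_{k-1}(u)=A_{k-1}$ and then substitute $\ell=k-1$ into \autoref{L-stretch-undirected-average}. The paper simply asserts that $\ell=k-1$ gives the worst case, whereas you add a monotonicity check (the coefficient is nondecreasing in $\ell$ since $c_i\ge 1>0$), which is a sound and slightly more complete justification of the same step.
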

% \begin{proof}
%     Since $B_{k-1}(u)=A_{k-1}$, we get that $p_{k-1}(v)\in B(u)$ and therefore $\ell \le k-1$.
%     Since $\ell \le k-1$ we get that $a = \floor{\frac\ell2} \le \floor{\frac{k-1}{2}}$, and from~\autoref{L-stretch-undirected-average} we get the stretch of the routing scheme.
% \end{proof}

\autoref{T-Compact-Amortized-3k} follows from~\autoref{L-routing-undirected-average-RT-and-L-size},~\autoref{L-routing-consistent-average-undirected} and~\autoref{cor-average-stretch}.
Next, we analyze the stretches of~\autoref{T-Compact-Amortized-3k} for different values of $ k $. To compute the stretch for each value of $ k $, we use the following Python code to follow the series $c_i=2-\frac{a-i}{a+\sum_{j=0}^{i-1}c_j}$ and obtain the stretch of the compact routing scheme for $ k $.
\begin{lstlisting}
def f(k):
    a = int((k - 1) / 2)  # \ell <= k-1
    c_j = [1] # c_0 = 1
    sum_of_c_i = [1] 
    for i in range(1,a+1,1):
        c_j.append(2-(a-i)/(a+sum_of_c_i[-1]))
        sum_of_c_i.append(sum_of_c_i[-1] + c_j[-1])
    return 2 * sum_of_c_i[-1] + 2 * a + (-1)**(k % 2)
\end{lstlisting}
In~\autoref{tab:compact_amortized_average_stretches} we present the stretches obtained for various values of $k$.
\begin{table}[t]
    \centering
    \begin{tabular}{|c|c|c|c|c|c|c|c|c|c|c|}
        \hline
        $k$ & $4$  & $6$ & $8$  & $10$ & $20$  & $100$ & $10,000$ & $10^6$ & $10^9$\\
        \hline
        $\frac{\hat{d}(u,v)}{d(u,v)}$ & $ 9.0 $ & $ 14.3 $ & $ 19.6 $ & $ 24.9 $ & $ 51.3 $ & $ 262.4 $ & $ 26372.8 $ & $ 2637412.5 $ & $ 2637413837.2 $\\
        \hline
        $\frac{\hat{d}(u,v)}{k\cdot d(u,v)}$ & $ 2.250 $ & $ 2.389 $ & $ 2.455 $ & $ 2.493 $ & $ 2.567 $ & $ 2.624 $ & $ 2.637 $ & $ 2.637 $ & $ 2.637 $\\
        \hline
    \end{tabular}
    \caption{Stretches of~\autoref{T-Compact-Amortized-3k}}
    \label{tab:compact_amortized_average_stretches}
\end{table}

% \paragraph{Improvement for odd values of $k$.}
% To improve the routing for odd values of $ k $, one can replace the condition from ``if $ h_{2i+1}(v) \ge h_{2i}(u) + c_i \hat{\delta} $" to ``if $ h_{2i}(v) \ge h_{2i-1}(u) + c_i \hat{\delta} $". 
% The proof of this algorithm is identical for~\autoref{L-routing-undirected-average-RT-and-L-size} and~\autoref{L-routing-consistent-average-undirected}, and has slight technical changes in~\autoref{L-stretch-undirected-average}.
% Using this modified algorithm, we get slightly improved results for odd values of $ k $ (however, the stretch ratio for large values of $ k $ remains the same).

% \bibliographystyle{alpha}
\bibliographystyle{plainurl}
\bibliography{bibliography}
\end{document}